\documentclass[11pt,a4paper]{article}
\usepackage{fullpage}
\usepackage{amsmath, amsthm, amsfonts,  amssymb, latexsym}
\usepackage[colorlinks, citecolor=Red, linkcolor=Blue, urlcolor=blue]{hyperref}
\usepackage{caption}
\usepackage[dvipsnames]{xcolor}
\usepackage{tikz}
\usepackage{mcite}
\usepackage{enumerate}


\usepackage{authblk}

\newtheorem{proposition}{Proposition}[section]
\newtheorem{theorem}[proposition]{Theorem}
\newtheorem{lemma}[proposition]{Lemma}

\newtheorem{corollary}[proposition]{Corollary}
\newtheorem{definition}[proposition]{Definition}

\newtheorem{remark}[proposition]{Remark}

\newcommand{\rhs}{r.h.s.\ }
\newcommand{\lhs}{l.h.s.\ }
\newcommand{\wrt}{w.r.t.\ }
\newcommand{\cf}{cf.\ }

\makeatletter
\DeclareRobustCommand{\ibar}{\mathord{%
  \text{$\m@th\mkern-2mu\raisebox{-0.7ex}[0pt][0pt]{$\mathchar'26$}\mkern-7mu i$}%
}}
\makeatother

\newcommand{\eti}[1]{e_\otimes^{\ibar #1}}
\newcommand{\et}[1]{e_\otimes^{#1}}

\newcommand{\ud}{\mathrm{d}}
\newcommand{\del}{\partial}

\newcommand{\R}{\mathbb{R}}

\newcommand{\g}{\mathfrak{g}}
\newcommand{\p}{\mathfrak{p}}
\newcommand{\skal}[2]{\langle #1 , #2 \rangle}

\newcommand{\LB}[2]{\lfloor #1 , #2 \rfloor}

\newcommand{\id}{\mathrm{id}}

\newcommand{\sS}{\mathcal{S}}

\newcommand{\cR}{\mathcal{R}}

\newcommand{\cD}{\mathcal{D}}
\newcommand{\cV}{\mathcal{V}}

\newcommand{\cU}{\mathcal{U}}

\newcommand{\cL}{\mathcal{L}}

\newcommand{\fD}{\mathfrak{D}}

\newcommand{\eps}{\varepsilon}

\newcommand{\ia}{{\mathrm{int}}}
\newcommand{\ret}{{\mathrm{r}}}
\newcommand{\adv}{{\mathrm{a}}}
\newcommand{\YM}{{\mathrm{YM}}}
\newcommand{\EH}{{\mathrm{EH}}}
\newcommand{\lin}{{\mathrm{lin}}}

\newcommand{\loc}{{\mathrm{loc}}}

\newcommand{\us}{{\underline s}}

\newcommand{\nn}{\nonumber}
\newcommand{\beq}{\begin{equation}}
\newcommand{\eeq}{\end{equation}}

\newcommand{\defeq}{\mathrel{:=}}
\newcommand{\eqdef}{\mathrel{=:}}
\newcommand{\eqos}{\mathrel{\approx}}
\newcommand{\eqF}{\mathrel{\approx_\bF}}
\newcommand{\Lie}{\mathcal{L}}
\newcommand{\vol}{\mathrm{vol}}
\newcommand{\rsc}{{\mathrm{sc}}}

\DeclareMathOperator{\Ker}{Ker}
\DeclareMathOperator{\Ran}{Im}
\DeclareMathOperator{\Tr}{Tr}
\DeclareMathOperator{\supp}{supp}
\DeclareMathOperator{\Deg}{Deg}
\DeclareMathOperator{\Ad}{Ad}
\DeclareMathOperator{\ad}{ad}

\newcommand{\cO}{{\mathcal{O}}}
\newcommand{\cA}{{\mathcal{A}}}
\newcommand{\bW}{{\mathbf{W}}}
\newcommand{\bJ}{{\mathbf{J}}}
\newcommand{\bF}{{\mathbf{F}}}
\newcommand{\bphi}{{\bar \phi}}
\newcommand{\bvp}{{\bar \varphi}}
\newcommand{\ba}{{\bar a}}

\usepackage[T1]{fontenc}
\usepackage{libertine}

\usepackage[toc,section=section]{glossaries}
\usepackage[nottoc,notlot,notlof]{tocbibind} 

\makenoidxglossaries

  \newglossaryentry{delta-A}
{
        name=\ensuremath{\ba},
        description={Background variation satisfying $\bar P^\lin \ba=0$ (tangent vector fields on $\mathcal{S}_\YM$)}
}

 \newglossaryentry{03barcA}
{
        name=\ensuremath{\bar{\cA}},
        description={Background $G$-connection}
}

 \newglossaryentry{04A}
{
        name=\ensuremath{A},
        description={Dynamical $\g$-valued 1-form}
}

\newglossaryentry{hat-A-n}
{
        name=\ensuremath{A^\ia_n},
        description={Interacting anomaly with $n$ local insertions}
}

 \newglossaryentry{ret-prop}
{
        name=\ensuremath{\Delta^{\ret/\adv}_{\bphi}},
        description={Retarded/advanced propagator of $P_{\bphi}$}
}

 \newglossaryentry{delta-bphi-variation}
{
        name=\ensuremath{\bar{\delta}_{\bvp}},
        description={Variation \wrt background field $\bar{\delta}_{\bvp} = \skal{\frac{\delta}{\delta \bphi} - }{\bvp}$}
}

 \newglossaryentry{delta-phi-variation}
{
        name=\ensuremath{\delta_{\bvp}},
        description={Variation \wrt dynamical field $\delta_{\bvp} = \skal{\frac{\delta}{\delta {\phi}}- }{ \bvp}$}
}

 \newglossaryentry{delta-ret}
{
        name=\ensuremath{\delta^\ret_{\bvp}},
        description={Retarded variation}
}

 \newglossaryentry{cD-phi}
{
        name=\ensuremath{\cD_{\bvp}},
        description={Connection on local functionals $F[\bphi, \phi]$, $\cD_{\bvp} = \bar{\delta}_{\bvp}- \delta_{\bvp}$}
}

 \newglossaryentry{fD-phi}
{
        name=\ensuremath{\fD_{\bvp}},
        description={Connection on sections $\bphi \mapsto T^\ia_\bphi(\eti{F[\bphi, -]})$, $\fD_{\bvp} =\delta^\ret_{\bvp}- \delta_{\bvp}$}
}

 \newglossaryentry{cD-A}
{
        name=\ensuremath{\cD_{\ba}},
        description={Connection on local functionals $F[\bar{\cA}, A]$, $\cD_{\ba} = \bar{\delta}_{\ba}- \delta_{\ba}$}
}

 \newglossaryentry{tilde-cD-A}
{
        name=\ensuremath{\hat{\cD}_{\ba}},
        description={Connection on local functionals $F[\bar{\cA}, A]$, $\hat \cD_{\ba} = {\cD}_{\ba}  - (-,  \cD_{\ba} \Psi)$}
}

 \newglossaryentry{fD-A}
{
        name=\ensuremath{\fD_{\ba}},
        description={Connection on sections $\bar \cA \mapsto T^\ia_{\bar{\cA}}(\eti{F[\bar \cA, -]})$}
}

 \newglossaryentry{check-cD-A}
{
        name=\ensuremath{\check{\cD}_{\ba}},
        description={Connection on local functionals $F[\bar{\cA}, A]$, $\check{\cD}_{\ba} = {\cD}_{\ba}  + (-,  \delta_{\ba} \Psi)$}
}

 \newglossaryentry{Deg}
{
        name=\ensuremath{\Deg},
        description={Grading on functionals, $\Deg = \deg_\phi + 2 \deg_{\hbar}$}
}

 \newglossaryentry{eqos}
{
        name=\ensuremath{\eqos},
        description={Equal modulo free field equations}
}

 \newglossaryentry{eqF}
{
        name=\ensuremath{\eqF},
        description={Equality in $\bF_{\bar \cA}$}
}

 \newglossaryentry{delta-phi}
{
        name=\ensuremath{\bvp},
        description={Background variation satisfying $P_{\bphi} \bvp=0$ (tangent vector fields on $\mathcal{S}_{\Phi^4}$)}
}

\newglossaryentry{01bphi}
{
        name=\ensuremath{\bphi},
        description={Background scalar field}
}

\newglossaryentry{02phi}
{
        name=\ensuremath{\phi},
        description={Dynamical scalar field}
}
 
 \newglossaryentry{barF}
{
        name=\ensuremath{\bar{F}},
        description={curvature of $\bar \cA$}
}

 \newglossaryentry{F-A}
{
        name=\ensuremath{\bF_{\bar{\cA}}},
        description={Local algebra of physical, gauge-invariant interacting quantum fields on $\bar{\cA}$}
}

 \newglossaryentry{W-bundle-coh}
{
        name=\ensuremath{\bF_{\YM}},
        description={Bundle of physical, gauge-invariant interacting quantum fields for Yang-Mills theory}
}

 \newglossaryentry{ibar}
{
        name=\ensuremath{\ibar},
        description={$\frac{i}{\hbar}$}
}

 \newglossaryentry{cJ0}
{
        name=\ensuremath{\bJ_\bphi},
        description={Ideal of free equations of motion in background $\bphi$}
}

 \newglossaryentry{J_-}
{
        name=\ensuremath{J^\pm(\cL)},
        description={Causal future/past of $\cL$}
}

 \newglossaryentry{bar-nabla}
{
        name=\ensuremath{\bar{\nabla}_\mu},
        description={Background covariant derivative}
}

 \newglossaryentry{cO}
{
        name=\ensuremath{\cO},
        description={Field}
}

 \newglossaryentry{cO-phi}
{
        name=\ensuremath{\cO_{\bphi}^\ia(x)},
        description={Renormalized interacting fields at $\bphi$ (sections of $\textbf{W}_{\Phi^4}$)}
}

 \newglossaryentry{Omega}
{
        name=\ensuremath{\Omega^k},
        description={Bundle of $k$ forms on $M$}
}

 \newglossaryentry{Psi}
{
        name=\ensuremath{\Psi},
        description={Gauge-fixing fermion}
}

 \newglossaryentry{p-bundle}
{
        name=\ensuremath{\p},
        description={Associated bundle $\p = P \times_{\ad} \g$}
}

 \newglossaryentry{P-lin-A}
{
        name=\ensuremath{\bar P^\lin},
        description={Linearized Yang-Mills operator around $\bar{\cA}$}
}

 \newglossaryentry{p-phi}
{
        name=\ensuremath{P_{\bphi}},
        description={Linearized $\Phi^4$ differential operator around $\bphi$}
}

 \newglossaryentry{q}
{
        name=\ensuremath{q},
        description={Quantum BV-BRST differential, $q= s + O(\hbar)$}
}

 \newglossaryentry{Q-quant}
{
        name=\ensuremath{Q^\ia_{\bar{\cA}}},
        description={Quantum BRST charge}
}

 \newglossaryentry{cR}
{
        name=\ensuremath{\cR},
        description={Localization region of observables}
}

 \newglossaryentry{R-phi}
{
        name=\ensuremath{R_{\bphi}},
        description={Retarded products valued in $\textbf{W}_{\bphi}$}
}

 \newglossaryentry{S-phi4}
{
        name=\ensuremath{\mathcal{S}_{\Phi^4}},
        description={Manifold of on-shell background $\Phi^4$ configurations}
}

 \newglossaryentry{S-YM}
{
        name=\ensuremath{\mathcal{S}_\YM},
        description={Manifold of on-shell background Yang-Mills connections}
}

 \newglossaryentry{s}
{
        name=\ensuremath{s},
        description={BV-BRST differential}
}

 \newglossaryentry{S-action}
{
        name=\ensuremath{S},
        description={Gauge-fixed action, $s = {(S, - )} $ }
}

 \newglossaryentry{T-phi}
{
        name=\ensuremath{T_{\bphi,n}},
        description={Time-ordered products valued in $\bW_{\bphi}$}
}
 \newglossaryentry{T-A}
{
        name=\ensuremath{T_{\bar \cA,n}},
        description={Time-ordered products valued in $\bW_{\bar{\cA}}$}
}

\newglossaryentry{T-int-phi}
{
        name=\ensuremath{T^\ia_{\bphi}},
        description={Interacting time-ordered products valued in $\bW_{\bphi}$}
}

 \newglossaryentry{W}
{
        name=\ensuremath{\bW_{\bphi}},
        description={Algebra of quantum fields for the background $\bphi$}
}

 \newglossaryentry{W_loc_phi}
{
        name=\ensuremath{\bW^\loc_\bphi},
        description={Local functionals on the background $\bphi$}
}

 \newglossaryentry{W_A}
{
        name=\ensuremath{\bW_{\bar \cA}},
        description={Algebra of quantum fields for the background $\bar \cA$}
}

 \newglossaryentry{W-bundle}
{
        name=\ensuremath{\bW^\ia_{\Phi^4}},
        description={Algebra bundle of $\Phi^4$-theory}
}
 \newglossaryentry{W-bundle-A}
{
        name=\ensuremath{\bW_{\YM}},
        description={Algebra bundle of Yang-Mills theory}
}

 \newglossaryentry{(F1,F2)}
{
        name=\ensuremath{{(F_1, F_2)}},
        description={Anti-bracket}
}

 \newglossaryentry{(O1,O2)-q}
{
        name=\ensuremath{{(\mathcal{O}_1,\mathcal{O}_2)}_\hbar},
        description={Quantum anti-bracket, ${(-,-)}_\hbar= {(-, - )} + O(\hbar)$ }
}

 \newglossaryentry{commutator}
{
        name=\ensuremath{{[\phi(x), \phi(y)]}_{\star}},
        description={Graded commutator of quantum fields}
}

 \newglossaryentry{Lie-bra-vect}
{
        name=\ensuremath{\LB{\bvp}{\bvp'}},
        description={Lie bracket of vector fields on $\mathcal{S}_{\Phi^4}/ \mathcal{S}_\YM$}
}

 \newglossaryentry{Lie-bra-g}
{
        name=\ensuremath{[-,-]_\g},
        description={Lie bracket on the Lie algebra $\g$}
}

\begin{document}

\title{Background independence in gauge theories}

\author[1, 2]{Mojtaba Taslimi Tehrani \thanks{motaslimi@gmail.com}}
\author[2]{Jochen Zahn\thanks{jochen.zahn@itp.uni-leipzig.de}}
\affil[1]{Max-Planck Institute for Mathematics in the Sciences\\
Inselstr.\ 22, 04103 Leipzig, Germany}
\affil[2]{Institut f\"ur Theoretische Physik, Universit\"at Leipzig\\ Br\"uderstr.\ 16, 04103 Leipzig, Germany}

\date{\today}

\maketitle

\begin{abstract}
Classical field theory is insensitive to the split of the field into a background configuration and a dynamical perturbation. In gauge theories, the situation is complicated by the fact that a covariant (\wrt the background field) gauge fixing breaks this split independence of the action. Nevertheless, background independence is preserved on the observables, as defined via the BRST formalism, since the violation term is BRST exact. In quantized gauge theories, however, BRST exactness of the violation term is not sufficient to guarantee background independence, due to potential anomalies.
We define background independent observables in a geometrical formulation as flat sections of the observable algebra bundle over the manifold of background configurations, with respect to a flat connection which implements background variations. A theory is then called background independent if such a flat (Fedosov) connection exists. 
We analyze the obstructions to preserve background independence at the quantum level for pure Yang-Mills theory and for perturbative gravity. 
We find that in the former case all potential obstructions can be removed by finite renormalization.
In the latter case, as a consequence of power-counting non-renormalizability, there are infinitely many non-trivial potential obstructions to background independence.
We leave open the question whether these obstructions actually occur.
\end{abstract}
\newpage
\tableofcontents
\newpage
\section{Introduction}

In Quantum Field Theory (QFT), one frequently considers the quantum fluctuations around classical field configurations. Examples are:
\begin{itemize}
\item Spontaneous symmetry breaking in the Standard Model, where one considers quantum fluctuations around a non-trivial classical configuration of the Higgs field;
\item The background field method, which is an efficient tool, for example, for the computation of the renormalization group flow (see e.g. \cite{Abbott:1981ke});
\item Perturbative Quantum Gravity, where one has to use a non-trivial background metric, providing the necessary structure for the formulation of a QFT \cite{Brunetti:2013maa, Becker:2014qya}.
\end{itemize} 

Hence, the issue of \emph{background independence} seems to be of high conceptual importance. Apart from the discussion in \cite{Brunetti:2013maa}, on which we comment in detail below, there are basically two approaches to deal with it in the literature. One is the Riemannian path integral framework, which faces the problem that, in the presence of non-trivial background fields, the relation between correlation functions on Riemannian spaces, and the QFT on Lorentzian space-time in which one is ultimately interested, is unclear. In particular, in the absence of an Osterwalder-Schrader theorem, it is not clear whether such correlation functions define a QFT in the sense of observables represented by operators on some Hilbert space. The other approach, discussed in more detail at the end of this section, is to treat the background field as an infinitesimal perturbation around a fixed flat reference background. However, for a full proof of background independence, one should treat the background field non-perturbatively. Then one faces the problem that on generic backgrounds there is no unique vacuum state and that the usual renormalization techniques based on momentum space are not available. A further common shortcoming of these approaches is that they are not ``operational'' in the sense that they do not address the following question:

\begin{quote}
Given a background configuration and an observable defined \wrt this background,
what is the same observable on a different background?
\end{quote}

In view of the difficulties mentioned above, we follow the algebraic approach, i.e.,  we directly (perturbatively) construct the algebras of observables for the different background configurations, using locally covariant renormalization techniques developed in the context of QFT on curved space-times \cite{Hollands:2001nf}. Background independence for us then means that we can unambiguously identify observables on different backgrounds (at least for infinitesimally close backgrounds). As suggested in \cite{Hollands:unpublished, *Hollands:talk}, this can be formulated in the spirit of \emph{Fedosov quantization} \cite{fedosov1994}: One considers the bundle of observable algebras over the manifold of background configurations and constructs a flat connection on it. The sections that are flat, i.e., covariantly constant, \wrt this connection provide a consistent assignment of an observable to each background. 
The similarity of background independence and Fedosov's approach has already been noted, in a quantum mechanical framework, in \cite{ReuterMetaplectic}.\footnote{Also in the context of string (field) theory, background independence was studied in terms of a (flat) connection, though not on the observable algebra bundle, see e.g. \cite{Witten:1993ed,Sen:1993kb}.}

\subsubsection*{The scalar field as a toy model}

To motivate our definition of background independence and to introduce some of the relevant concepts, let us first discuss a toy model, namely the self-interacting $\Phi^4$-theory. Consider splitting the basic scalar field
\beq
\label{barphi+phi}
\Phi = \bphi + \phi,
\eeq
into a \emph{background configuration} $\gls{01bphi}$, which is kept classical at the quantum level (i.e.,  it commutes with all quantum fields) and a \emph{dynamical field} $\gls{02phi}$ which is viewed as fluctuations around $\bphi$ and is quantized in perturbation theory. The question of background independence is then the following: Is field theory independent of the splitting of $\Phi$ into a background $\bphi$ and a perturbation $\phi$?
Clearly, the action functional $S[\Phi]$ depends only on the combination $\bphi + \phi$, hence the classical field theory is independent of this split. We say that it exhibits 
\emph{split independence}.
Here we ask whether and in which mathematically rigorous sense this split independence is preserved at the quantum level.

To analyze the issue, we find it convenient to adopt the framework of locally covariant quantum field theory \cite{Hollands:2001nf, Brunetti:2001dx} which has proven to be powerful for QFT in curved space-time or in the presence of non-trivial background gauge connections \cite{Zahn:2012dz}.  In this framework, the covariance with respect to suitable transformations of background data (e.g.\ isometries of the background metric or gauge transformations of the background connection) is manifest by construction. 
The objects of primary interest are renormalized \emph{interacting time-ordered products}, which include \emph{interacting fields}. They are constructed in perturbation theory and generate the non-commutative local algebra of observables.

More concretely, for the example of scalar field theory expanded around a classical solution $\bphi$ of $\Phi^4$-theory, one constructs for each such background $\bphi$, the local algebra $\bW_{\bphi}$.
To each classical local functional $F[\bphi, \phi]$, one associates the generating functional $T^\ia_\bphi(e^{i F})$ of interacting time-ordered products, which is an element of $\bW_{\bphi}$. These elements generate the algebra $\bW_\bphi^\ia$ of interacting observables.
Now consider a local functional $F[\Phi]$. Obviously, it induces local functionals $F[\bphi, \phi] = F[\bphi + \phi]$ for the different backgrounds $\bphi$. Their background independence can be stated via functional derivatives as
\beq
\label{eq:bg_ind_functional}
 \gls{cD-phi} F \defeq ( \gls{delta-bphi-variation} - \gls{delta-phi-variation} ) F \defeq
 \skal{ ( \tfrac{\delta}{\delta \bphi} - \tfrac{\delta}{\delta \phi} ) F}{\bvp} = 0,
\eeq
where $\bvp$ is some variation of the background.
The question is how to implement this on the quantum observables $T^\ia_\bphi(e^{i F})$. While the second derivative (\wrt the dynamical field $\phi$) is well-defined on  $\bW_{\bphi}$, the first derivative (\wrt the background field $\bphi$) has no obvious meaning on $\bW_{\bphi}$,
as one is comparing elements of different algebras.\footnote{Even if one interprets $T^\ia_\bphi(e^{i F[\bphi, -]})$ as an evaluation functional, which one can differentiate \wrt $\bphi$, this operation is not well-defined on the on-shell algebra. Furthermore, we would like a differentiation that respects the algebraic structure, i.e., fulfills the Leibniz rule \wrt the algebra product. This will not be the case for such a naive derivative.}
The way out is to replace this derivative with the \emph{retarded variation} $\delta^\ret$ \cite{hollands2005conservation}, which is the infinitesimal version of the M{\o}ller operator relating the algebras on the different backgrounds \cite{Brennecke:2007uj} (see below). The natural translation of \eqref{eq:bg_ind_functional} to an assignment 
\beq
\label{eq:bphi_to_T_int}
\bphi \mapsto T^\ia_\bphi(e^{i F[\bphi, -]})
\eeq
of interacting fields to different backgrounds is thus
\beq
\label{eq:bg_ind_observable}
\gls{fD-phi} T^\ia_\bphi(e^{i F[\bphi, -]}) \defeq ( \delta^\ret_{\bvp} - \delta_{\bvp} ) T^\ia_\bphi(e^{i F[\bphi, -]}) = 0.
\eeq
It turns out \cite{Hollands:unpublished}, \cf \cite{collini2016fedosov} for details, that in the $\Phi^4$-theory, this is equivalent to \eqref{eq:bg_ind_functional} in the sense that
\beq
\label{eq:relation_fD_cD}
 \fD_{\bvp} T^\ia_\bphi(e^{i F[\bphi, -]}) = i T^\ia_\bphi( \cD_{\bvp} F \otimes e^{i F[\bphi, -]})
\eeq
precisely if \emph{perturbative agreement}\footnote{Perturbative agreement asserts that it should, on the infinitesimal level, not matter whether one puts terms quadratic in $\phi$ into the free part or the interacting part of the action.}
\cite{hollands2005conservation} holds for changes in the (position dependent) mass of the scalar field.\footnote{The free theories linearized around different backgrounds $\bphi$ differ in the mass term, see below.} As a consequence of the flatness of $\cD$, also $\fD$ is then flat. For variations in the mass, perturbative agreement can be fulfilled \cite{collini2016fedosov,Drago2017}, so that \eqref{eq:relation_fD_cD} indeed holds.

It is natural to give this a geometric interpretation along the lines of Fedosov quantization, as suggested in \cite{Hollands:unpublished, *Hollands:talk} (see \cite{collini2016fedosov} for details).
Consider the manifold $\mathcal{S}_{\Phi^4}$ of solutions to the interacting $\Phi^4$ field equations. The tangent space at each $\bphi \in \sS_{\Phi^4}$ is the space of solutions $\bvp$ of the field equations linearized around $\bphi$. We patch all algebras $\bW^\ia_{\bphi}$ together to obtain the algebra bundle
\begin{equation*} 
\gls{W-bundle} = \bigsqcup_{\bphi}\bW^\ia_{\bphi} \to \mathcal{S}_{\Phi^4}.
\end{equation*} 
An assignment $\bphi \mapsto T^\ia_\bphi(e^{i F[\bphi, -]})$ as above is then interpreted as a section of $\bW^\ia_{\Phi^4}$, and $\fD_{\bvp}$ as a covariant derivative (connection) on this bundle in the direction of the vector field $\bvp$. If this connection is flat, we call the QFT \emph{background independent}. 
Flatness ensures that, at least formally, any interacting observable on one background can be uniquely parallel transported to any other background, providing an answer to the question posed at the beginning of this section.
Or, in the spirit of Fedosov quantization: The space of sections of $\bW^\ia_{\Phi^4}$ is much larger than the space of functionals of $\Phi$, i.e., the space of functions on $\sS_{\Phi^4}$. 
However, when restricting to sections that are flat \wrt $\fD$, i.e., fulfill \eqref{eq:bg_ind_observable}, one obtains a one-to-one correspondence between functions on $\sS_{\Phi^4}$ and flat sections of $\bW^\ia_{\Phi^4}$. Again, flatness of $\fD_{\bvp}$ is crucial.

\subsubsection*{Gauge theories}

The main aim of the present work is to analyze the issue of background independence for gauge theories where more complications arise due to gauge-fixing. Let us for definiteness consider the pure Yang-Mills theory which is the theory of a $G$-connection $\cA$ on a principal bundle, subject to the Yang-Mills field equations. We split
\begin{equation}
\label{eq: A=barA+A}
\cA = \bar{\cA} + A,
\end{equation}
into a background connection $\gls{03barcA}$ and a dynamical $\mathfrak{g}$-valued 1-form $\gls{04A}$ (a vector potential) which will be quantized in perturbation theory. $\bar \cA$ is a solution to the Yang-Mills equation.
Similar to the scalar case, the classical Yang-Mills action is independent of this split.  However, for the purpose of perturbative quantization, one has to fix the gauge, which necessarily breaks this split independence if one requires a covariant gauge fixing.
The gauge-fixed action exhibits a residual fermionic symmetry, the BV-BRST symmetry. It acts by a nilpotent operator $s$, and the physical (gauge invariant) observables are obtained as the cohomology of $s$. In fact, the violation of the split independence 
in the gauge-fixed action is $s$-exact. It follows that, classically, split independence holds at the level of gauge invariant observables, i.e., there is a flat connection $\hat \cD_\ba$ on classical local functionals that is well-defined on $s$ cohomology, i.e., $\hat \cD_\ba \circ s = s \circ \hat \cD_\ba$. Here $\ba$ is an infinitesimal variation of the background.

To quantize, one constructs, for each background $\bar \cA$, the (unphysical) algebra $\bW^\ia_{\bar \cA}$. The subalgebra $\bF_{\bar \cA} \subset \bW^\ia_{\bar \cA}$ of physical (gauge invariant) observables is given by the cohomology of $[Q^\ia_{\bar{\cA}},-]_\star$, where $Q^\ia_{\bar{\cA}}$ is the renormalized interacting BRST charge, and the commutator is taken \wrt the algebra $\star$ product.
Therefore, for background independence to hold, the desired connection $\fD_{\ba}$ has to be well-defined on the BRST cohomology, that is, it must satisfy
\beq
\label{eq:fD_well-defined}
\fD_{\ba} \circ [Q^\ia_{\bar{\cA}},-]_\star - [Q^\ia_{\bar{\cA}},-]_\star \circ \fD_{\ba} =0,
\eeq
on-shell. Furthermore, on the kernel of $[Q^\ia_{\bar{\cA}}, -]_\star$, the curvature of $\fD_{\ba}$ has to vanish modulo an element in the image of $ [Q^\ia_{\bar{\cA}},-]_\star$.
If this is the case, background independent observables can be defined as those sections of the observable algebra bundle which are flat \wrt $\fD_{\ba}$ modulo $\Ran [Q^\ia_{\bar{\cA}},-]_\star$.
We find that there are potential obstructions (anomalies) for the construction of such a connection. However, for pure Yang-Mills theory in $D=4$ space-time dimensions, these turn out to be trivial. Power counting renormalizability is a crucial ingredient of our proof. If the relevant anomaly is absent, then an identity analogous to \eqref{eq:relation_fD_cD} holds in $\bF_\YM = \sqcup_{\bar \cA} \bF_{\bar \cA}$, namely
\[
 \fD_\ba T^\ia_{\bar \cA}(e^{i F}) = i T^\ia_{\bar \cA} ( \{ \hat \cD_{\ba} F +\hat A_\ba(e^F) \} \otimes e^{i F}),
\]
where $\hat A$ incorporates quantum corrections. Hence, a classically gauge invariant and background independent local functional does not automatically give rise to a background independent observable at the quantum level, but quantum corrections may be necessary.

We also sketch the application of our framework to perturbative quantum gravity. As in any diffeomorphism-invariant theory, the definition of local observables is a major issue, and we follow recent proposals \cite{Brunetti:2013maa, Khavkine:2015fwa}, based on \cite{BergmannKomar}, for the construction of such (relational) observables employing a set of configuration-dependent covariant coordinates. 
As opposed to the pure Yang-Mills case, our analysis of potential anomalies to background independence shows that for the case of perturbative gravity one can indeed find infinitely many candidates for such anomalies using 
the dimensionful coupling of the theory. 
From this perspective, it seems difficult to prove the absence of anomalies, as they may appear at arbitrarily high order in perturbation theory.\footnote{In \cite{Brunetti:2013maa}, a different conclusion was found. We comment on the approach taken there in Section~\ref{section:pert-Grav}.}

We would like to point out that our work does not yet provide a full Fedosov quantization of Yang-Mills theories. First of all, one should then work on gauge equivalence classes of classical solutions as base space, not on the full space of classical solutions, as we do (but see \cite{Benini:2017zjv} for a different point of view). Second, the set of solutions to the Yang-Mills equation is a manifold only up to singular points corresponding to solutions with symmetries \cite{arms_1981}. We work locally in configuration space, i.e., in a neighborhood of a generic configuration, avoiding these singularities. We should also emphasize that the main focus of our work is algebraic, not (functional) analytic. In particular we do not discuss the analytical aspects of the infinite-dimensional manifolds of solution spaces, and algebra bundles upon these. We refer to \cite{collini2016fedosov} for a thorough discussion.

\subsubsection*{Comparison with the path integral approach}

Let us compare our treatment of background independence with more formal approaches, in particular the path integral formalism. In the case of the scalar field, one defines the generating functional of connected graphs as
\[
 \tilde W[J, \bphi] = - i \log \int D \phi \ e^{i (S[\bphi + \phi] + \int J \phi)},
\]
and the corresponding effective action as
\[
 \tilde \Gamma[\tilde \phi, \bphi] = \tilde W[J, \phi] - \int J \tilde \phi,
\]
with
\[
 \tilde \phi = \frac{\delta \tilde W}{\delta J}.
\]
Assuming that the path integral measure $D \phi$ is shift invariant, one obtains, with the shift $\phi \to \phi - \bphi$, that
\[
 \tilde \Gamma[\tilde \phi, \bphi] = \Gamma[\tilde \phi + \bphi],
\]
with $\Gamma$ the generating functional in the absence of the background field \cite{Abbott:1981ke}. In particular,
\beq
\label{eq:EffActionInvariance}
 \tilde \Gamma[\tilde \phi - \delta \phi, \bphi + \delta \phi] = \tilde \Gamma[\tilde \phi , \bphi],
\eeq
In this sense, background independence holds, provided that shift invariance of the path integral measure is fulfilled. One can thus see perturbative agreement as the rigorous version of the shift invariance of the formal path integral.\footnote{This interpretation was already suggested in \cite{hollands2005conservation}.}

Shift invariance of the path integral measure is also a crucial requirement in the treatment of background independence in gauge theory given in \cite{KlubergStern:1974xv}. However, as described above, this is not sufficient, as the gauge fixed action is not split independent. To deal with this, an extended BRST differential is introduced in \cite{KlubergStern:1974xv}, which also implements a shift between the background and the dynamical vector potential. It is then argued that the corresponding Slavnov identities can be fulfilled. As in our treatment, a crucial ingredient in that proof is power counting renormalizability, which restricts the number of possible counterterms.

Let us summarize two major conceptual differences between our treatment and the path integral approach:
\begin{itemize}
\item Typically, renormalization techniques are employed which require that the propagator is translation invariant. This means that the background is in fact treated perturbatively, i.e., it enters only the vertices, not the propagators. This entails that the background field is a vector potential, not a principal bundle connection and also that shift invariance of the path integral measure is trivially fulfilled. But the perturbative expansion with all the background fields in the vertices is ill-defined, unless the background field is treated as an infinitesimal perturbation, so that one may expand in powers of the background field.
Hence, only an infinitesimal neighborhood of a fixed flat reference connection is actually treated. In contrast, in our approach, the background connection is treated non-perturbatively.

\item A formulation of background independence such as \eqref{eq:EffActionInvariance} does not refer to observables, i.e., it does not address the question posed at the beginning of the introduction. For this, one would need to couple generic observables through source terms to the action and study the background independence of the resulting effective action. To the best of our knowledge, this has not been done in the literature.

\end{itemize}

\subsubsection*{Outline}

The article is structured as follows. To set the stage, we review, in the next section, the case of scalar field theory, in particular the construction of the algebras $\bW_{\bphi}$. Following \cite{Hollands:unpublished, collini2016fedosov}, the relation of background independence and perturbative agreement is discussed. In the main part of this work, Section~\ref{section:YM}, we study the case of Yang-Mills theories. Perturbative Quantum Gravity is treated in Section~\ref{section:pert-Grav}. 
An appendix contains technical lemmata.
For the convenience of the reader we provide a glossary of symbols used.

\section{Background independence for scalar field theory}\label{section: scalar-theory}

\subsection{Perturbative QFT on a background $\bphi$} \label{section:phi-4}

In this section, we review the discussion of background independence for a self-interacting scalar field $\Phi$ \cite{Hollands:unpublished, collini2016fedosov}. 
Throughout this work, we consider globally hyperbolic space-times $(M, g)$ with signature $(-, +, \dots +)$ and compact Cauchy surfaces. $\gls{J_-}$ denotes the causal future/past of a space-time region $\cL \subset M$, c.f., for example, \cite{WaldGR} for a definition.

Due to the time-slice axiom \cite{Chilian:2008ye}, it is sufficient to define the interacting observables localized in a causally closed, compact space-time region $\gls{cR} \subset M$ which contains a Cauchy surface. In particular, we may choose $\cR = J^+(\Sigma_0) \cap J^-(\Sigma_1)$ for two non-intersecting Cauchy surfaces $\Sigma_{0/1}$.
\label{def:cR}
We may thus replace the coupling constant $\lambda_0$ with a smooth compactly supported cutoff function $\lambda(x)$ which equals $\lambda_0$ on a neighborhood of $\cR$. 
For the perturbations $\phi$, we consider the expansion of the action 
\[
S[\Phi] = - \int \left( \tfrac{1}{2} \nabla_\mu \Phi \nabla^\mu \Phi + \tfrac{1}{2} m^2 \Phi^2 + \tfrac{1}{4!}\lambda \Phi^4 \right) \vol,
\]
around a background $\bphi$
\beq
\label{scalar action}
 S[\bphi, \phi] = - \int \tfrac{1}{2} \left( \nabla_\mu \phi \nabla^\mu \phi + ( m^2 + \tfrac{1}{2} \lambda \bphi^2 ) \phi^2 \right) \vol - \int \left( \tfrac{1}{3!} \lambda \bphi \phi^3 + \tfrac{1}{4!} \lambda \phi^4 \right) \vol \eqdef {S_0} + {S_\ia}.
\eeq
Note that the free Lagrangians for different backgrounds $\bphi$ coincide outside of the support of $\lambda$.
This is essential for identifying quantum theories around different backgrounds as discussed in the next section.
Also note that there is no source term in \eqref{scalar action}, i.e., a term linear in $\phi$, since the background configuration
is required to fulfill the interacting equation of motion
\beq
\label{eq:bphi_eom}
(\Box - m^2) \bphi + \tfrac{1}{3!} \lambda \bphi^3 =0.
\eeq

The solutions to \eqref{eq:bphi_eom} form a manifold $\gls{S-phi4}$, with tangent space $T_\bphi \sS_{\Phi^4}$ at $\bphi$ given by the solution space to the linearized equation of motion
\beq
\label{eq:bphi-lin-eom}
 \gls{p-phi} \gls{delta-phi} \defeq \left( \Box - m^2 - \tfrac{1}{2} \lambda \bphi^2 \right) \bvp = 0.
\eeq
This means that given a smooth curve $\{ \bphi_s \}_s$ in $\sS_{\Phi^4}$, i.e., of solutions to \eqref{eq:bphi_eom}, with $\bphi_0 = \bphi$, its derivative
\beq
\label{eq:bg-variation}
 \bvp \defeq \del_s \bphi_s|_{s = 0},
\eeq
is a solution to \eqref{eq:bphi-lin-eom}. We refer to \cite{collini2016fedosov} for details, in particular on the notion of smoothness.

Background independence of the classical scalar field theory now means that it is independent of the arbitrary split \eqref{barphi+phi} into background and dynamical fields. One manifestation of this is the \emph{split independence} of the action in the sense that
\beq
\label{eq:ShiftSymmetryScalar}
 \frac{\delta S}{\delta \bphi(x)} = \frac{\delta {S_\ia}}{\delta \phi(x)},
\eeq
where the interaction part $S_\ia$ of the action was defined in \eqref{scalar action}.

\subsubsection*{The free algebra $\bW_{\bphi}$}

The algebra $\gls{W}$ (also called the \emph{free algebra} in contrast to the interacting one defined below) consists of \emph{evaluation functionals}
\beq
\label{eq:def_functional}
 F[\phi] = \sum_{n=0}^N \int_{M^n} f_n(x_1, \dots, x_n) \phi(x_1) \dots \phi(x_n) \vol(x_1) \dots \vol(x_n),
\eeq
where the singularities of the symmetric distributions $f_n$ on $M^n$ are constrained by a condition on their \emph{wave front set}, \cf \cite{Hollands:2001nf}. We define the \emph{support} of a functional of the form \eqref{eq:def_functional} as
\[
 \supp F = \left\{ x \in M \mid (x, y_1, \dots, y_{n-1}) \in \supp f_n \text{ for some } n, y_i \in M \right\}.
\]

Given a \emph{Hadamard two-point function} $\omega_{\bphi}$ for $P_\bphi$, \cf \cite{Hollands:2001nf} for a definition, one defines a non-commutative $\star$ product
\beq
\label{eq:star-prod-def}
 F \star_{\omega_\bphi} G = \mathfrak{m} \circ \exp( \hbar \Gamma_{\omega_{\bphi}} ) (F \otimes G),
\eeq
where $\mathfrak{m}$ is the point-wise multiplication of functionals, $\mathfrak{m}( F \otimes G)(\phi) = F(\phi) G(\phi)$, and
\[
 \Gamma_{\omega_\bphi} (F \otimes G) = \int_{M^2} \omega_\bphi(x,y) \tfrac{\delta}{\delta \phi(x)} F \otimes \tfrac{\delta}{\delta \phi(y)} G. 
\]
Here the functional derivative $\frac{\delta}{\delta \phi(x)} F$ is interpreted as a $\bW_\bphi$ valued density, whose evaluation on test functions $\varphi$ is defined as
\[
 \skal{\tfrac{\delta}{\delta \phi} F}{\varphi}[\phi] \defeq \tfrac{\ud}{\ud \lambda} F[\phi + \lambda \varphi]|_{\lambda = 0}.
\]
The definition of the $\star$ product, and thus also that of $\bW_\bphi$, depends on the two-point function $\omega_\bphi$. However, it turns out that algebras equipped with $\star$ products defined by different Hadamard two-point functions $\omega_\bphi$, $\omega'_\bphi$ are isomorphic \cite{Hollands:2001nf}, justifying the notation.\footnote{Despite the explicit dependence of the star-product on the background, we drop the subscript $\bphi$ and simply write $\star$. }
Note that, in particular, \eqref{eq:star-prod-def} implies that
\[
\gls{commutator} = i \hbar \Delta_{\bphi}(x,y),
\]
where $\Delta_{\bphi} = \Delta^\adv_\bphi - \Delta^\ret_\bphi$ is the causal propagator of $P_\bphi$, with $\gls{ret-prop}$ denoting the retarded/advanced propagator.

Elements of $\bW_\bphi$ are considered in the sense of formal power series in $\hbar$, i.e.,  $\bW_\bphi$ is considered as a graded vector space with grading provided by $\deg_\hbar$, which counts the number of $\hbar$ factors. 
A further grading is given by
\begin{equation*}
 \gls{Deg} = 2 \deg_\hbar + \deg_\phi,
\end{equation*}
where $\deg_\phi$ counts the number of fields.  For example, for an $F$ of the form \eqref{eq:def_functional}, with $f_n \neq 0$ and $f_m = 0$ for all $m \neq n$, one has $\deg_\phi(F) = n$. It is obvious that the $\star$ product respects the grading, i.e., 
\[
 \Deg(F \star G) = \Deg(F) + \Deg(G).
\]
This grading is in fact the natural grading in the context of Fedosov quantization  \cite{fedosov1994}.

\emph{Local covariance} \cite{Hollands:2001nf, Brunetti:2001dx} is a crucial ingredient of our approach.\footnote{To get the full strength of requirement of local covariance, one should not restrict to space-times with compact Cauchy surfaces from the outset. We thus assume that we have implemented local covariance without this restriction, i.e., constructed algebras $\bW_\bphi$ and time-ordered products $T_{\bphi, n}$, \cf below, and then restricted to space-times with compact Cauchy surfaces.} It is implemented as follows: A \emph{morphism} $\psi: (M', g', \bphi') \to (M, g, \bphi)$ is an isometric embedding $\psi: M' \to M$, i.e., $\psi^*g =g'$, which preserves the causal structure, and such that $\psi^*\bphi =\bphi'$. For each morphism $\psi$, there exists an algebra homomorphism
\beq
\label{eq:alpha_psi}
 \alpha_\psi: \bW_{\bphi'} \to \bW_\bphi,
\eeq
defined by
\[
 (\alpha_\psi F)[\phi] \defeq F[ \psi^* \phi].
\]

To implement the equations of motion, one passes to the \emph{on-shell} algebra. This proceeds by quotienting out the ideal
\beq
\label{eq:ideal-J_0}
\gls{cJ0} := \left\{  F[\phi] = \sum_{n=1}^N \int_{M^n} f_n(x_1, \dots, x_n) \phi(x_1)  \dots P_\bphi \phi(x_n) \vol(x_1) \dots \vol(x_n) \right\} \subset \textbf{W}_{\bphi}
\eeq
 of functionals $F$ that vanish on all solutions $\phi$ of the linearized equations of motion $P_\bphi \phi = 0$.
 
The subspace $\gls{W_loc_phi} \subset \bW_\bphi$ of \emph{local functionals} consists of those $F$ of the form \eqref{eq:def_functional} for which each $f_n$ is supported on the total diagonal of $M^n$. It is generated by smearing \emph{fields} $\cO(x)$ with appropriate test tensors. Fields depend locally and covariantly on $g, \bphi, \phi$, or, abstractly,
\[
\psi^* \cO[g, \bphi, \phi] = \cO[\psi^* g, \psi^* \bphi, \psi^* \phi]
\]
for a morphism $\psi$.
They are of the form
\[
 \gls{cO} [g, \bphi, \phi](x) = P\big( \nabla_{(\alpha)} \phi(x), \nabla_{(\alpha)} \bphi(x), g_{\mu \nu}(x), g^{\mu \nu}(x), \nabla_{(\alpha)} R_{\mu \nu \rho \sigma}(x) \big),
\]
where $P$ is a polynomial, $\alpha$ stands for multi-indices and $R_{\mu \nu \rho \sigma}$ is the Riemannian curvature of $g$. It is sometimes useful to express a local functional in terms of its integral kernel.
 
\subsubsection*{Time-ordered products}
To obtain the interacting renormalized quantum fields, one needs to define \emph{renormalized time-ordered products} (or \emph{renormalization schemes}) on the algebra $\bW_{\bphi}$. These are a collection of symmetric multi-linear maps
\beq
 \label{eq:TOP}
  \gls{T-phi} : (\bW^\loc_\bphi)^{\otimes n} \to \bW_\bphi,
\eeq
which are subject to the axioms (or renormalization conditions) of \cite{Hollands:2001nf, Brunetti:1999jn}, \cf also the reviews \cite{Hollands:2007zg, Hollands:2014eia, MR3469848}. In particular, they fulfill:
\begin{description}
\item[Grading.] Time-ordered products respect the $\Deg$ grading, i.e., 
\begin{equation*}
 \Deg(T_{\bphi, n}(F_1 \otimes \dots \otimes F_n)) = \sum_i \Deg(F_i).
\end{equation*}
\item[Locality and covariance.] Let $\psi: (M', g', \bphi') \to (M, g, \bphi)$ be a morphism, and $\alpha_\psi$ as in \eqref{eq:alpha_psi}. Then
\beq
\label{eq:TOLocalCovariance}
 \alpha_\psi \circ T_{\bphi', n} = T_{\bphi, n} \circ {\alpha_\psi}^{\otimes n}.
\eeq
\item[Scaling.] Each $T_{\bphi, n}$ scales almost homogeneously, \cf \cite{Hollands:2001nf}, under
\beq
\label{eq:Scaling}
 (g_{ab}, \lambda, m, \bphi, \phi) \mapsto (\mu^{-2} g_{ab}, \lambda,\mu m, \mu \bphi, \mu \phi).
\eeq
\item[Causal factorization.] For $\cup_{m=1}^i \supp F_m \cap J^{-}( \cup_{l=i+1}^n \supp F_l) = \emptyset$, it holds
\beq
\label{CFaxiom}
T_{\bphi, n}( F_1 \otimes \dots \otimes F_n) = T_{\bphi, i}( F_1 \otimes \dots \otimes F_i) \star T_{\bphi, n-i}( F_{i+1} \otimes \dots \otimes F_n).
\eeq
\item[Field independence.] Each $T_{\bphi, n}$ is independent of the dynamical field $\phi$, in the sense that
\beq
\label{eq:FieldIndependence}
\tfrac{\delta}{\delta \phi(x)}T_{\bphi, n}( F_1 \otimes \dots \otimes F_n) = \sum_{i=1}^n  T_{\bphi, n}( F_1\otimes \dots \otimes \tfrac{\delta}{\delta \phi(x)} F_i \otimes \dots \otimes F_n).
\eeq
\item[Single field factor.]
A time-ordered product with a single field factor
simplifies as
\begin{multline}
 T_{\bphi, n+1}( \phi(x) \otimes F_1 \otimes \dots \otimes F_n) = \phi(x) \star T_{\bphi, n}(F_1 \otimes \dots \otimes F_n) \\ \label{eq:T10}
  + i\hbar \sum_{j = 1}^n \int \Delta_\bphi^\adv(x,y) T_{\bphi, n}(F_1 \otimes \dots \tfrac{\delta}{\delta \phi(y)} F_j \otimes \dots \otimes F_n).
\end{multline}
\item[Support.] Time-ordered products do not increase the support, i.e.,
\beq
\label{eq:T_support}
 \supp T_{\bphi, n}(F_1 \otimes \dots \otimes F_n) \subset \cup_{i} \supp F_i.
\eeq
\end{description}

For fields, it is more convenient to use the \emph{mass dimension} instead of the \emph{scaling dimension}, defined by the power of $\mu$ in the scaling law \eqref{eq:Scaling}. It is defined as the scaling dimension plus the number of lower indices minus the number of upper indices. It has the advantage that it does not depend on the position of the indices.

As shown in \cite{Brunetti:1999jn, Hollands:2001fb}, time-ordered products exist and are unique up to a well-characterized, local and covariant \emph{renormalization ambiguity} which is described by the \emph{main theorem of renormalization theory}.  
These ambiguities are best expressed in terms of the generating functional for time-ordered products given by
\[
T_{\bphi}(\eti{F}) = \sum_n \frac{\ibar^n}{n!} T_{\bphi,n}(F^{\otimes n}),
\]
where we have introduced the notation
\[
\gls{ibar} \defeq \frac{i}{\hbar} .
\]
In passing, we note that for $F$ a proper interaction, i.e.,  $\deg_\phi(F) \geq 3$, the expression is well-defined \wrt the $\Deg$ grading, i.e.,  at any given grade only a finite number of terms contribute.

Now,  let $T_{\bphi}$ and $T'_{\bphi}$ be two different time-ordered products (renormalization schemes) which satisfy the above axioms. The main theorem of renormalization theory then states that they are related via
\beq
\label{eq:T-T'}
T'_{\bphi}(\eti{F}) = T_{\bphi}(\eti{(F + D(\et{F}))}),
\eeq
with $D(\et{F}) = \sum_{n \geq 1} \frac{1}{n!} D_n(F^{\otimes n})$, where 
\beq
\label{eq:counter-terms}
D_n : (\bW^\loc_\bphi)^{\otimes n} \rightarrow \bW_\bphi^\loc ,
\eeq
correspond to finite local counter terms, characterizing the renormalization ambiguity. They are of order $O(\hbar)$, decrease the total $\Deg$ by $2 (n-1)$, are supported on the total diagonal, i.e., they vanish unless the supports of all arguments overlap, and are locally covariant and field independent, i.e., fulfill \eqref{eq:TOLocalCovariance} and \eqref{eq:FieldIndependence} with $T_n$ replaced by $D_n$.
Furthermore, they scale homogeneously under \eqref{eq:Scaling} and vanish if one of their arguments is a linear field.

The time-ordered products $T_{\bphi, 1}(\cO)$ are usually called \emph{Wick powers} and are constructed by point-splitting \wrt the \emph{Hadamard parametrix} $h$, \cf \cite{dewitt1960radiation, Hollands:2001nf}, which is constructed covariantly from the local geometric data and captures the singularities of Hadamard two-point functions $\omega$, i.e., $\omega - h$ is smooth. Concretely, one defines
\beq
\label{eq:DefWickPower}
 T_1(F)_\omega \defeq \exp(\hbar \tilde \Gamma_{\omega- h}) F, 
\eeq
where
\[
 \tilde \Gamma_f F = \int_{M^2} f(x,y) \tfrac{\delta^2}{\delta \phi(x) \delta \phi(y)} F
\]
and the subscript $\omega$ on the \lhs denotes the two-point function \wrt which the $\star$ product is defined.
Time-ordered products $T_{\bphi, n}$ for $n > 1$ can be constructed recursively using in particular the causal factorization to define the distributions up to the diagonal in $M^n$ and extending them to the diagonal as first proposed by Epstein and Glaser \cite{epstein1973role} (for details see \cite{Hollands:2001nf, Hollands:2001fb, Brunetti:1999jn}).

\subsubsection*{The interacting algebra $\bW_\bphi^\ia$}

Interacting observables are represented in $\bW_\bphi$ via \emph{retarded products}, defined by Bogoliubov's formula
\[
 \gls{R-phi}( \eti{F}; \eti{G}) \defeq T_{\bphi}( \eti{G} )^{-1} \star T_{\bphi}( \eti{F} \otimes \eti{G}).
\]
By causal factorization \eqref{CFaxiom}, retarded products are trivial if the support of second argument does not intersect the past of the support of the first, i.e.,
\[
 R_\bphi ( \eti{F}; \eti{G}) = T_\bphi(\eti{F}) \qquad \supp G \cap J^-(\supp F) = \emptyset.
\]
The generating functional of \emph{interacting time ordered products} is then given by
\[
 \gls{T-int-phi}(\eti{F}) \defeq R_\bphi(\eti{F}; \eti{S_\ia}).
\]
Given a field $\cO$, one thus defines the corresponding \emph{interacting field} as
\beq
\label{eq:InteractingField}
 \cO^\ia_\bphi(x) \defeq T^\ia_\bphi(\cO(x)).
\eeq
As for time-ordered products, interacting time-ordered products fulfil causal factorization, i.e., 
\beq
\label{eq:CausalFactorization_int}
 T^\ia_\bphi(\eti{(F+G)}) = T^\ia_\bphi(\eti{F}) \star T^\ia_\bphi(\eti{G}) \qquad \supp F \cap J^-(\supp G) = \emptyset.
\eeq

The \emph{interacting algebra} $\bW_\bphi^\ia$ is the subalgebra of $\bW_\bphi$ generated by the interacting time ordered products for $\supp F \subset \cR$. The subalgebras $\bW^\ia_\bphi(\cL)$ of observables measurable in compact, causally closed space-time regions $\cL \subset \cR$ are generated by $T^\ia_\bphi(\eti{F})$ with $\supp F \subset \cL$. By \eqref{eq:CausalFactorization_int}, the algebras corresponding to causally disjoint space-time regions commute.

Finally, we also introduce \emph{interacting retarded products} by
\[
 R^\ia_\bphi( \eti{F}; \eti{G}) \defeq T^\ia_{\bphi}( \eti{G} )^{-1} \star T^\ia_{\bphi}( \eti{F} \otimes \eti{G}).
\]
We note that (the equality holds both for usual and interacting time-ordered/retarded products)
\beq
\label{eq:R_1_int}
 R^{(\ia)}_\bphi(\eti{F}; G) = T^{(\ia)}_\bphi(G \otimes \eti{F}) - T^{(\ia)}_\bphi(G) \star T^{(\ia)}(\eti{F}),
\eeq
and that, as a consequence of \eqref{eq:FieldIndependence}, field independence of interacting time-ordered products holds in the sense that
\beq
 \label{eq:T_int_field_indep}
 \tfrac{\delta}{\delta \phi(x)} T^\ia_{\bphi}( \eti{F} ) = \ibar T^\ia_\bphi ( \tfrac{\delta}{\delta \phi(x)} F  \otimes \eti{F}) + \ibar R^\ia_\bphi( \eti{F}; \tfrac{\delta}{\delta \phi(x)} S_\ia).
\eeq

\subsection{Background independence of renormalized scalar field theory} \label{section:BI-scalar}
As discussed in the introduction, the naive derivative $\bar \delta_{\bvp} \defeq \skal{ \tfrac{\delta}{\delta \bphi} -}{\bvp}$
in \eqref{eq:bg_ind_functional} \wrt the background field is not properly defined on the algebra bundle $\bW_{\Phi^4} = \sqcup_{\bphi}\bW_{\bphi}\to \sS_{\Phi^4}$.
The natural replacement is the retarded variation $\delta^\ret_{\bvp}$ defined as follows. Given two backgrounds $\bphi$ and $\bphi'$, one defines the \emph{retarded M{\o}ller operator} \cite{Brennecke:2007uj}, \cf also \cite{Hollands:2001nf} for an on-shell version,
as an algebra isomorphism \cite{Hollands:2001nf, Zahn:2013ywa}
\begin{equation*}
\tau^\ret_{\bphi, \bphi'}: \bW_{\bphi '} \rightarrow \bW_{\bphi},
\end{equation*}
by its action on functionals as
\beq
\label{eq:def_MollerMap}
 (\tau^\ret_{\bphi, \bphi'} F)_{\omega_\bphi} [\phi] \defeq F_{\omega_{\bphi'}}[r_{\bphi', \bphi} \phi].
\eeq
Here $r_{\bphi', \bphi}$ is the \emph{retarded wave operator} 
\beq
\label{eq:retardedWaveOperator}
 r_{\bphi', \bphi} \phi \defeq \phi + \Delta^\ret_{\bphi'} \left( (P_{\bphi} - P_{\bphi'}) \phi \right),
\eeq
mapping solutions of $P_{\bphi} \phi = 0$ to solutions of $P_{\bphi'} \phi = 0$ which coincide outside of $J^+(\supp (\bphi -\bphi'))$.\footnote{It is well-defined as $P_{\bphi} - P_{\bphi'}= - \tfrac{1}{2} \lambda (\bphi^2 - \bar{\phi'}^2 )$ is compactly supported.} In \eqref{eq:def_MollerMap}, the subscript $\omega$ denotes a two-point function \wrt which the $\star$ product on $\bW_{\bphi}$ is defined, and $\omega_{\bphi'}$ is obtained by acting with $r_{\bphi', \bphi}$ on both variables of $\omega_\bphi$. 
Given an infinitesimal background variation $\bvp$, as in \eqref{eq:bg-variation}, and a family $\{ F_s \}_{s \in \R}$ of functionals, $F_s \in \bW_{\bphi_s}$,\footnote{A typical family of such functionals would be given by the assignment $\bphi \mapsto \cO^\ia_\bphi$ of an interacting observable to each background, given a field $\cO$.} one defines the \emph{retarded variation}
\beq
\label{eq:retarded-var-def}
 \gls{delta-ret} F \defeq \del_s  {\left(\tau^\ret_{\bphi, \bphi_s} F_s\right)\Big|}_{s = 0}.
\eeq

A key identity on which our discussion of background independence is based, is the so-called \emph{perturbative agreement} formulated in \cite{hollands2005conservation}. It is derived from the requirement that it should not matter whether one includes terms quadratic in the fields into the free or the interacting part of the action. The comparison between the two theories thus defined is performed by the retarded M{\o}ller operator, or, infinitesimally, by the retarded variation. This implies a further renormalization condition, supplementing those mentioned in the previous section:
\begin{description}
\item[Background variation.] For an infinitesimal variation $\bvp$ of the background $\bphi$, we have
\beq
\label{eq:PPA}
\delta^\ret_{\bvp} T_{\bphi}( \eti{F} ) = \ibar T_{\bphi}( \bar \delta_{\bvp} F \otimes \eti{F}) + \ibar R_{\bphi}( \eti{F} ; \bar \delta_{\bvp}S_0).
\eeq
\end{description}
As shown in \cite{Drago2017, collini2016fedosov}, this condition can indeed be implemented.
In the following, we thus assume that \eqref{eq:PPA} holds. In particular, we then have the following version of perturbative agreement on interacting time-ordered products.
\begin{lemma}
On interacting time-ordered products, perturbative agreement implies
\beq
\label{eq:delta_ret_T_int}
 \delta^\ret_{\bvp} T^\ia_\bphi(\eti{F}) = \ibar T^\ia_{\bphi}( \bar \delta_{\bvp} F \otimes \eti{F}) + \ibar R^\ia_{\bphi}( \eti{F} ; \bar \delta_{\bvp} S).
\eeq
\end{lemma}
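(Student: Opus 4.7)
The plan is to compute $\delta^\ret_{\bvp} T^\ia_\bphi(\eti F)$ directly from Bogoliubov's definition, writing
\[
T^\ia_\bphi(\eti F) = A^{-1} \star B, \qquad A \defeq T_\bphi(\eti{S_\ia}), \quad B \defeq T_\bphi(\eti F \otimes \eti{S_\ia}) = T_\bphi(\eti{(F+S_\ia)}).
\]
Since the retarded M{\o}ller operator $\tau^\ret_{\bphi, \bphi'}$ is an algebra isomorphism \cite{Brennecke:2007uj}, its infinitesimal version $\delta^\ret_{\bvp}$ is a $\star$-derivation on smooth families $F_s \in \bW_{\bphi_s}$. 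Leibniz then gives
\[
\delta^\ret_{\bvp} T^\ia_\bphi(\eti F) = -A^{-1} \star (\delta^\ret_{\bvp} A) \star T^\ia_\bphi(\eti F) + A^{-1} \star \delta^\ret_{\bvp} B.
\]

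Next I apply perturbative agreement \eqref{eq:PPA} to both $A$ and $B$. Since $S_\ia$ (and possibly $F$) depend on $\bphi$ through the $\lambda \bphi \phi^3$ vertex, linearity of $\bar{\delta}_{\bvp}$ splits the output into three contributions: $\bar{\delta}_{\bvp} F$, $\bar{\delta}_{\bvp} S_\ia$, and $\bar{\delta}_{\bvp} S_0$. The identity $A^{-1} \star T_\bphi(G \otimes \eti F \otimes \eti{S_\ia}) = T^\ia_\bphi(G \otimes \eti F)$, a restatement of the definition of $T^\ia$, converts the $\bar{\delta}_{\bvp} F$ and $\bar{\delta}_{\bvp} S_\ia$ pieces coming from $B$ into interacting time-ordered products with the appropriate insertion, while the $\bar{\delta}_{\bvp} S_0$ pieces are kept in the form $R_\bphi(\cdots; \bar{\delta}_{\bvp} S_0)$.

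The final step is an algebraic rearrangement using \eqref{eq:R_1_int} to expand each retarded product occurring. After this expansion, the contributions from $\delta^\ret_{\bvp} A$ and $\delta^\ret_{\bvp} B$ must assemble into
\[
\ibar T^\ia_\bphi(\bar{\delta}_{\bvp} F \otimes \eti F) + \ibar R^\ia_\bphi(\eti F; \bar{\delta}_{\bvp} S_\ia) + \ibar R^\ia_\bphi(\eti F; \bar{\delta}_{\bvp} S_0),
\]
the crucial point being that the ``conjugated'' term $A^{-1} \star T_\bphi(\bar{\delta}_{\bvp} S_0) \star A \star T^\ia_\bphi(\eti F)$, which is not a priori an object of $\bW^\ia_\bphi$, appears with opposite signs in the two pieces and cancels. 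Linearity of $R^\ia_\bphi(\eti F; -)$ in its second argument then combines the last two summands into $\ibar R^\ia_\bphi(\eti F; \bar{\delta}_{\bvp} S)$, yielding \eqref{eq:delta_ret_T_int}.

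The main obstacle I expect is the bookkeeping of which expressions are $T$ versus $T^\ia$ and which sit inside versus outside a $\star$-product, so that the cancellation of the conjugation-by-$A$ term is manifest without having to assume it. A secondary point that should be made explicit is that $\delta^\ret_{\bvp}$ genuinely acts as a derivation across families $F_s, G_s \in \bW_{\bphi_s}$ living in distinct algebras, which follows from the fact that $\tau^\ret_{\bphi, \bphi_s}$ is an algebra isomorphism together with the smooth $\bphi$-dependence of $S_\ia$ and $F$.
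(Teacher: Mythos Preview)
Your proposal is correct and follows essentially the same route as the paper's proof: apply the Leibniz rule for $\delta^\ret_{\bvp}$ to the Bogoliubov formula $T^\ia_\bphi(\eti{F}) = T_\bphi(\eti{S_\ia})^{-1} \star T_\bphi(\eti{F} \otimes \eti{S_\ia})$, use perturbative agreement \eqref{eq:PPA} on each factor, and then reorganize via \eqref{eq:R_1_int}. The paper carries out exactly this computation, combining the $\bar\delta_{\bvp} S_\ia$ and $\bar\delta_{\bvp} S_0$ pieces at the end; the cancellation of the conjugated term $A^{-1}\star T_\bphi(\bar\delta_{\bvp} S_0)\star A \star T^\ia_\bphi(\eti{F})$ that you flag is precisely what makes the paper's identity $R_{\bphi}( \eti{F} \otimes \eti{S_\ia} ; \bar \delta_{\bvp}S_0) - R_{\bphi}( \eti{S_\ia} ; \bar \delta_{\bvp} S_0) \star T^\ia_\bphi(\eti{F}) = T_\bphi(\eti{S_\ia}) \star R^\ia_\bphi(\eti{F}; \bar \delta_{\bvp} S_0)$ work.
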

\begin{proof}
We compute
\begin{align*}
 \delta^\ret_{\bvp} T^\ia_\bphi(\eti{F}) & = T_\bphi(\eti{S_\ia})^{-1} \star \delta^\ret_{\bvp} T_\bphi(\eti{F} \otimes \eti{S_\ia}) - T_\bphi(\eti{S_\ia})^{-1} \star \delta^\ret_{\bvp} T_\bphi(\eti{S_\ia}) \star T^\ia_\bphi(\eti{F}) \\
 & = \ibar T^\ia_{\bphi}( \bar \delta_{\bvp} (F + S_\ia) \otimes \eti{F}) + \ibar T_\bphi(\eti{S_\ia})^{-1} \star R_{\bphi}( \eti{F} \otimes \eti{S_\ia} ; \bar \delta_{\bvp}S_0) \\
 & \quad - \ibar T^\ia_{\bphi}( \bar \delta_{\bvp} S_\ia ) \star T^\ia_\bphi(\eti{F}) - \ibar T_\bphi(\eti{S_\ia})^{-1} \star R_{\bphi}( \eti{S_\ia} ; \bar \delta_{\bvp} S_0) \star T^\ia_\bphi(\eti{F}).
\end{align*}
The claim then follows from 
\begin{align*}
 & R_{\bphi}( \eti{F} \otimes \eti{S_\ia} ; \bar \delta_{\bvp}S_0) - R_{\bphi}( \eti{S_\ia} ; \bar \delta_{\bvp} S_0) \star T^\ia_\bphi(\eti{F}) \\
 & = T_\bphi(\eti{F} \otimes \bar \delta_{\bvp} S_0 \otimes \eti{S_\ia}) - T_\bphi(\bar \delta_{\bvp} S_0 \otimes \eti{S_\ia}) \star T^\ia_\bphi(\eti{F}) \\
 & = T_\bphi(\eti{S_\ia}) \star R^\ia_\bphi(\eti{F}; \bar \delta_{\bvp} S_0),
\end{align*}
which is a consequence of \eqref{eq:R_1_int}.
\end{proof}

Corresponding to the subalgebras $\bW^\ia_\bphi(\cL)$ for observables localized in the space-time region $\cL$, we may introduce the subbundles $\bW^\ia_{\Phi^4}(\cL)$. The space of sections $\Gamma(\bW^\ia_{\Phi^4})$ of the algebra bundle $\bW^\ia_{\Phi^4}$ is an algebra in itself, with the product being fiber-wise given by $\star$. With a slight abuse of notation, we denote the resulting product again by $\star$. One may define the subalgebra $\Gamma^\infty(\bW^\ia_{\Phi^4})$ of smooth sections, \cf \cite{collini2016fedosov} for details. For our purposes, it is sufficient to think of it as generated by sections \eqref{eq:bphi_to_T_int} for local functionals $F$ with a smooth dependence on the background $\bphi$. Analogously to the usual definition of connections on vector bundles, we give a tentative definition of a connection on the interacting algebra bundle, with a supplementary space-time localization condition, which seems natural in a quantum field theoretical context. 

\begin{definition}
\label{def:Connection}
A \emph{connection} $\fD$ on $\bW^\ia_{\Phi^4}$ is a map
\[
  \Gamma^\infty (T \sS_{\Phi^4}) \times \Gamma^\infty(\bW^\ia_{\Phi^4}) \ni (\bvp, F) \mapsto \fD_{\bvp} F \in \Gamma^\infty(\bW^\ia_{\Phi^4}),
\]
which is $C^\infty(\sS_{\Phi^4})$ linear in the first and additive in the second argument, reduces to the ordinary derivative on c-number functionals, i.e.,
\[
 \fD_{\bvp} F_0 = \bar \delta_{\bvp} F_0, \qquad \forall F_0 \text{ s.t. } [F_0, -]_\star = 0,
\]
is a derivation, i.e., fulfilling
\beq
\label{eq:Leibniz}
 \fD_{\bvp} ( F \star G ) = \fD_{\bvp} F \star G + F \star \fD_{\bvp} G,
\eeq
and respects space-time localization, in the sense that
\beq
\label{eq:ConnectionLocalization}
 \fD_{\bvp} \Gamma^\infty(\bW^\ia_{\Phi^4}(\cL)) \subset \Gamma^\infty(\bW^\ia_{\Phi^4}(\cL)).
\eeq
\end{definition}

By \eqref{eq:delta_ret_T_int}, due to the second term on the r.h.s., the background variation $\delta^\ret_{\bvp}$ violates the locality requirement \eqref{eq:ConnectionLocalization}.\footnote{It is not even obvious that it is well-defined on $\bW^\ia_{\Phi^4}$, i.e., that $\fD_{\bvp} F \in \Gamma^\infty(\bW^\ia_{\Phi^4})$, as $\bar \delta_\bvp S$ is not supported in $\cR$.} But, as seen in the following proposition, subtracting the derivative \wrt $\phi$ yields a connection. The following propositions, first proven in \cite{Hollands:unpublished}, \cf \cite{collini2016fedosov} for details, summarize background independence for scalar fields.

\begin{proposition}\label{nabla-hbar}
The operator 
\beq\label{eq:fD-sclar}
\fD_{\bvp} \defeq  \delta^\ret_{\bvp} - \delta_{\bvp}
\eeq
defines a connection on $\bW^\ia_{\Phi^4}$, acting as
\beq
\label{eq:nablahbar-nabla}
\fD_{\bvp} T^\ia_{\bphi}( \eti{F} ) = \ibar T^\ia_\bphi(\cD_{\bvp} F \otimes \eti{F}),
\eeq
where $\delta_{\bvp}$ and $\cD_{\bvp}$ are defined in \eqref{eq:bg_ind_functional}.
\end{proposition}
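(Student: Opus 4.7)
The plan is to first derive the explicit formula \eqref{eq:nablahbar-nabla}, and then read off the connection properties from it. For the formula, I start from the two identities already at hand: the perturbative-agreement identity \eqref{eq:delta_ret_T_int} established in the preceding lemma,
\[
 \delta^\ret_{\bvp} T^\ia_\bphi(\eti{F}) = \ibar T^\ia_{\bphi}( \bar \delta_{\bvp} F \otimes \eti{F}) + \ibar R^\ia_{\bphi}( \eti{F} ; \bar \delta_{\bvp} S),
\]
and the field-independence identity \eqref{eq:T_int_field_indep} contracted against $\bvp$, which reads
\[
 \delta_{\bvp} T^\ia_\bphi(\eti{F}) = \ibar T^\ia_\bphi( \delta_\bvp F \otimes \eti{F}) + \ibar R^\ia_\bphi( \eti{F}; \delta_\bvp S_\ia).
\]
The structural parallel is exactly what I want: subtracting gives a difference of local contributions $T^\ia_\bphi(\cD_\bvp F \otimes \eti F)$, plus a difference of interacting retarded products whose second entries are $\bar\delta_\bvp S$ and $\delta_\bvp S_\ia$, respectively.

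The crucial step is that these retarded-product remainders cancel thanks to the split-independence of the action \eqref{eq:ShiftSymmetryScalar}, which asserts precisely $\bar\delta_\bvp S = \delta_\bvp S_\ia$. This identity is what allowed the classical theory to be background independent; here it plays the role of absorbing the non-local tails of both $\delta^\ret_\bvp$ and $\delta_\bvp$ against each other, leaving only the manifestly local combination. The formula \eqref{eq:nablahbar-nabla} then follows immediately.

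With the formula in hand, the four defining properties of Definition~\ref{def:Connection} are verified quickly. The $C^\infty(\sS_{\Phi^4})$-linearity in $\bvp$ and the additivity in the second argument are inherited from $\delta^\ret_\bvp$ and $\delta_\bvp$. On a c-number functional $F_0$, $\delta_\bvp F_0 = 0$ and the retarded M{\o}ller operator acts trivially, so $\fD_\bvp F_0 = \bar\delta_\bvp F_0$ as required. The Leibniz rule \eqref{eq:Leibniz} holds because $\delta^\ret_\bvp$ is a derivation of $\star$ (as the infinitesimal version of the algebra isomorphism $\tau^\ret_{\bphi,\bphi'}$) and $\delta_\bvp$ is a derivation on functionals.

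The step I expect to require the most care, and which also highlights the whole point of the construction, is the space-time localization condition \eqref{eq:ConnectionLocalization}. Individually, neither $\delta^\ret_\bvp$ nor $\delta_\bvp$ preserves the localization region $\cL$: the former picks up a retarded-product tail supported in $J^-(\supp \bar\delta_\bvp S)\cap \cR$, and the latter picks up a tail spread over $\supp \delta_\bvp S_\ia \subset \cR$. Their difference, however, equals $\ibar T^\ia_\bphi(\cD_\bvp F\otimes \eti F)$ by the formula just derived, and since $\cD_\bvp$ is a local operation, $\supp \cD_\bvp F \subset \supp F\subset \cL$, whence by the support property \eqref{eq:T_support} carried over to $T^\ia_\bphi$ via Bogoliubov's formula, $\fD_\bvp T^\ia_\bphi(\eti F)\in \Gamma^\infty(\bW^\ia_{\Phi^4}(\cL))$. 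This confirms that the apparent non-localities of the two pieces are precisely such as to cancel, which is the geometric content of the statement.
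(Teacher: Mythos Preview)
Your proof is correct and follows essentially the same route as the paper: combine \eqref{eq:delta_ret_T_int} with \eqref{eq:T_int_field_indep}, subtract, and use the split-independence \eqref{eq:ShiftSymmetryScalar} to cancel the residual retarded-product term, then read off the derivation and localization properties from the resulting formula. You have added a few more words on the remaining connection axioms ($C^\infty$-linearity, action on c-numbers), but the key ideas and their ordering coincide with the paper's proof.
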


\begin{proof}
That $\fD_{\bvp}$ is a derivation is a consequence of the retarded M{\o}ller operator being an algebra isomorphism and of $\delta_{\bvp}$ being a derivation. The localization requirement \eqref{eq:ConnectionLocalization} is a consequence of \eqref{eq:nablahbar-nabla}. To prove the latter, we note that by \eqref{eq:delta_ret_T_int} and \eqref{eq:T_int_field_indep}, we have
\[
 \fD_{\bvp} T^\ia_{\bphi}( \eti{F} ) = \ibar T^\ia_\bphi(\cD_{\bvp} F \otimes \eti{F}) + \ibar R^\ia_\bphi(\eti{F}; \{ \bar \delta_{\bvp} S + \delta_{\bvp} S_\ia \}).
\]
The claim then follows from \eqref{eq:ShiftSymmetryScalar}.
\end{proof}


\begin{proposition} The connection $\fD_{\bvp} $, defined in \eqref{eq:fD-sclar}, is flat.
\end{proposition}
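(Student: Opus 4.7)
The plan is to reduce the flatness of $\fD$, i.e., the identity $[\fD_{\bvp}, \fD_{\bvp'}] = \fD_{\LB{\bvp}{\bvp'}}$, to a corresponding flatness of the classical operator $\cD$ on the bundle of local functionals, using the transfer formula \eqref{eq:nablahbar-nabla} in both directions. Since $\fD$ is a derivation (by Proposition \ref{nabla-hbar}) and the sections $\bphi \mapsto T^\ia_\bphi(\eti{F[\bphi, -]})$ generate $\Gamma^\infty(\bW^\ia_{\Phi^4})$ as an algebra, it is enough to check the flatness relation on such generators.

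The first step is to promote \eqref{eq:nablahbar-nabla} to an insertion form by replacing $F \to F + s G$ in \eqref{eq:nablahbar-nabla} and reading off the $\partial_s|_{s=0}$ coefficient, which yields
\[
 \fD_{\bvp}\, T^\ia_\bphi(G \otimes \eti{F}) = T^\ia_\bphi(\cD_{\bvp} G \otimes \eti{F}) + \ibar\, T^\ia_\bphi(G \otimes \cD_{\bvp} F \otimes \eti{F}).
\]
Iterating this once gives
\[
 \fD_{\bvp} \fD_{\bvp'}\, T^\ia_\bphi(\eti{F}) = \ibar\, T^\ia_\bphi(\cD_{\bvp}\cD_{\bvp'} F \otimes \eti{F}) + (\ibar)^2\, T^\ia_\bphi(\cD_{\bvp} F \otimes \cD_{\bvp'} F \otimes \eti{F}).
\]
The quadratic term is symmetric in $\bvp \leftrightarrow \bvp'$ by the graded symmetry of time-ordered products and thus drops out in the commutator, leaving
\[
 [\fD_{\bvp}, \fD_{\bvp'}]\, T^\ia_\bphi(\eti{F}) = \ibar\, T^\ia_\bphi([\cD_{\bvp}, \cD_{\bvp'}] F \otimes \eti{F}).
\]

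The remaining task is to show the classical identity $[\cD_{\bvp}, \cD_{\bvp'}] F = \cD_{\LB{\bvp}{\bvp'}} F$ on local functionals $F[\bphi, \phi]$; once established, reading \eqref{eq:nablahbar-nabla} backwards reassembles the right-hand side above into $\fD_{\LB{\bvp}{\bvp'}}\, T^\ia_\bphi(\eti{F})$, completing the proof. To prove this last identity, I would expand $\cD = \bar{\delta} - \delta$ into four pieces. Since $\bvp$ and $\bvp'$ are sections of $T\sS_{\Phi^4}$, they depend on $\bphi$ but not on $\phi$: so $[\delta_{\bvp}, \delta_{\bvp'}] = 0$; the term $[\bar\delta_{\bvp}, \bar\delta_{\bvp'}] = \bar\delta_{\LB{\bvp}{\bvp'}}$ is the standard commutator identity for vector fields on $\sS_{\Phi^4}$; and each of the two mixed commutators picks up only a term of the form $\delta_{\bar\delta_{\bvp}\bvp'}$ (respectively $-\delta_{\bar\delta_{\bvp'}\bvp}$) arising from $\bar\delta$ hitting the $\bphi$-dependence of the opposite vector field. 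Summing the four pieces and using the definition $\LB{\bvp}{\bvp'} = \bar\delta_{\bvp}\bvp' - \bar\delta_{\bvp'}\bvp$ of the Lie bracket on $\sS_{\Phi^4}$ exactly reproduces $\cD_{\LB{\bvp}{\bvp'}}$.

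The only delicate step is the careful tracking of the $\bphi$-dependence of $\bvp, \bvp'$ in the mixed commutator $[\bar\delta_{\bvp}, \delta_{\bvp'}]$; once that Leibniz contribution is correctly identified, everything else reduces to the commutativity of functional derivatives in the independent variables $\bphi$ and $\phi$, together with the graded symmetry of time-ordered products already used in the first step.
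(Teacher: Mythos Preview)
Your proof is correct and follows essentially the same strategy as the paper's: reduce the flatness of $\fD$ on generators $T^\ia_\bphi(\eti{F})$ to the flatness of the classical connection $\cD$ on local functionals, via the transfer formula \eqref{eq:nablahbar-nabla}. The paper compresses the iteration step and the cancellation of the symmetric quadratic term into the single displayed identity $([\fD_{\bvp}, \fD_{\bvp'}] - \fD_{\LB{\bvp}{\bvp'}})\,T^\ia_\bphi(\eti{F}) = \ibar\, T^\ia_\bphi(([\cD_{\bvp}, \cD_{\bvp'}] - \cD_{\LB{\bvp}{\bvp'}}) F \otimes \eti{F})$ and then simply asserts the classical flatness of $\cD$ as ``straightforward to check''; you have unpacked both of these, which is helpful but not a different route.
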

\begin{proof}
It is straightforward to check that $\cD_{\bvp}$ satisfies
\[
[\cD_{\bvp}, \cD_{\bvp'}] - \cD_{\LB{\bvp}{\bvp'}} =0,
\]
where
\[
\gls{Lie-bra-vect} \defeq \skal{\tfrac{\delta}{\delta \bphi} \bvp'}{ \bvp} - \skal{\tfrac{\delta}{\delta \bphi} \bvp}{\bvp'}
\]
is the Lie bracket of vector fields $\bvp$ and $\bvp'$ on $\mathcal{S}_{\Phi^4}$.
Therefore, using \eqref{eq:nablahbar-nabla}, the curvature of $\fD_{\bvp}$ vanishes:
\[
([\fD_{\bvp}, \fD_{\bvp'}] -  \fD_{\LB{\bvp}{\bvp'}} ) T^\ia_\bphi(\eti{F}) = \ibar T^\ia_\bphi ( ([\cD_{\bvp}, \cD_{\bvp'}] - \cD_{\LB{\bvp}{\bvp'}}) F \otimes \eti{F} ) = 0.
\]
\end{proof}

Hence, defining the background independent observables as sections which are covariantly constant \wrt $\fD_{\bvp}$, \eqref{eq:nablahbar-nabla} implies that background independent interacting fields $\cO^\ia_\bphi$ correspond to classically split independent fields $\cO$, i.e., fulfilling $\cD_{\bvp} \cO=0$. This means that there is a one-to-one correspondence between classical and quantum background independent fields.

\section{Pure Yang-Mills theory} \label{section:YM}

This main part of the article is structured as follows: We begin by setting up Yang-Mills theory on the classical level, culminating in the identification of $\hat \cD_\ba$ as the relevant connection on classical local functionals. In Section~\ref{sec:QuantumYM}, we discuss, following \cite{Hollands:2007zg, Tehrani:2017pdx}, quantization, in particular the occurrence of anomalies. As a crucial ingredient for background independence, we prove a theorem on the background dependence of the anomaly, assuming that perturbative agreement holds. In Section~\ref{subsection:back-ind} we then prove our main result on background independence in Yang-Mills theories.

\subsection{Classical gauge theory}\label{section:YM-classical}

\subsubsection{The basic setting}

Let $P \rightarrow M$ be a $G$ principal fibre bundle over space-time $M$, with $G$ a semi-simple Lie group.
We denote by $\Ad$ the adjoint action of a Lie group $G$ on itself, $\Ad_g h \defeq g h g^{-1}$, and the adjoint action on the corresponding Lie algebra $\g$ by $\ad$. The Lie bracket on $\g$ is denoted by $\gls{Lie-bra-g}$.

The Yang-Mills theory is the dynamical theory of a $G$ connection $\cA$ on $P$ whose dynamics is governed by the Yang-Mills action
\begin{equation}\label{eq:YM-action}
  \int_M \Tr (F \wedge * F),
\end{equation}
where $F$ is the curvature of $\cA$, interpreted as a section of $\p \otimes \Omega^2$, with $\gls{p-bundle} \defeq P \times_{\ad} \g$ and $\gls{Omega}$ the bundle of $k$ forms on $M$. Let $\{T_I\}_I$, be a basis of $\g$, normalized as $\Tr (T_I T_J) = - \frac{1}{2} \delta_{IJ}$. Then, we can write $F = \frac{1}{2} F^I_{\mu \nu} T_I \ud x^\mu \wedge \ud x^\nu$.

Classical solutions will play the role of background configurations, and these will be typically denoted by a bar, i.e.,  we will consider connections $\bar \cA$ which are solutions to the Yang-Mills equation
\beq
\label{eq:YM_eom}
 \bar \nabla_\mu \bar F^{\mu \nu} = 0,
\eeq
where $\gls{barF}$ is the curvature of $\bar{\cA}$ and  $\gls{bar-nabla}$ is the associated covariant derivative on sections of $\p \otimes \Omega^k$.
The Yang-Mills equation is well-posed \cite{ChruscielShatah}, guaranteeing the existence of global solutions. Furthermore, the set $\gls{S-YM}$ of such solutions is a manifold, i.e.,  its tangent space $T_{\bar \cA} \sS_{\YM}$ at a solution $\bar \cA$ is the space of solutions $\gls{delta-A}$ to the Yang-Mills equation linearized around $\bar \cA$,
\beq
\label{eq:lin-YM-eq}
 \gls{P-lin-A} \ba_\mu^I \defeq \bar \nabla^\nu  \left( \bar \nabla_\nu \ba_\mu^I - \bar \nabla_\mu \ba_\nu^I \right) + [\bar F_{\mu \nu}, \ba^\nu]_\g^I =0,
\eeq
except at certain symmetrical background configurations $\bar \cA$, \cf \cite{arms_1981}. At these symmetrical background configurations, there are solutions to \eqref{eq:lin-YM-eq} that are not tangent to $\sS_{\YM}$, i.e.,  do not arise as the derivative of a curve in $\sS_{\YM}$. The presence of such singular points in configuration space $\sS_{\YM}$ does not impart our considerations, as these are local in $\sS_{\YM}$, so that we can restrict to regions not containing such exceptional points. Thus, we will henceforth identify the space of solutions to \eqref{eq:lin-YM-eq} with the tangent space $T_{\bar \cA} \sS_\YM$ of $\sS_\YM$ at $\bar \cA$.

\subsubsection*{Background and dynamical gauge transformations }
We consider the decomposition \eqref{eq: A=barA+A} of $\cA$ into a \emph{background connection} $\bar{\cA}$ and a \emph{dynamical} $\mathfrak{g}$-valued one-form $A$, i.e., a section of $\p \otimes \Omega^1$.
In local coordinates, the corresponding covariant derivative operator $D$ when acting on sections of $\p \otimes \Omega^k$ takes the form
\[
D_{\mu} = \bar{\nabla}_\mu + [A_\mu, -]_\g.
\] 
Then, the curvature two form $F$ in local coordinates is given by
\beq
\label{eq:full-curvature}
F^I_{\mu \nu} = \bar{F}^I_{\mu \nu} + \bar \nabla_\mu A^I_\nu - \bar \nabla_\nu A^I_\mu +  [A_\mu, A_\nu]_\g^I.
\eeq

Gauge transformations are parametrized by smooth sections $g$ of $P \times_{\Ad} G$. On a connection $\cA$, they act as
\[
 \cA \mapsto \cA^g \defeq \ad_{g^{-1}} \circ \cA + g^* \theta,
\]
with $\theta$ the Maurer-Cartan form. For $\cA$ split as in \eqref{eq: A=barA+A}, there are then two natural implementations of this gauge transformation.
A \emph{background gauge transformation} acts as
\[
 \bar \cA \mapsto \bar \cA^g, \qquad A \mapsto \ad_{g^{-1}} A.
\]
The covariance of the quantum theory under such a transformation will be part of the requirement of local (gauge) covariance. On the other hand, one may keep the background fixed and implement the change $\cA \mapsto \cA^g$ by solely changing $A$, i.e., 
\[
\bar{ \cA} \mapsto \bar{\cA}, \qquad A \mapsto \bar \cA^g - \bar \cA + \ad_{g^{-1}} A.
\]
This is called a \emph{dynamical gauge transformation} which needs to be gauge-fixed.

\subsubsection*{Localization of the interaction and split independence}
As for the scalar field, we need to localize the interaction in a compact space-time region. For the scalar field, we used a smooth compactly supported cutoff function $\lambda(x)$ which was equal $\lambda_0$ in the space-time region $\cR$ for which the algebra of interacting observables was constructed. This cut-off had the additional consequence that, for any two background solutions $\bphi$, $\bphi'$, the corresponding linearized wave operators $P_\bphi$, $P_{\bphi'}$, \cf \eqref{eq:bphi-lin-eom}, coincided outside of a compact space-time region (the support of~$\lambda$). This made it possible to define the flat connection $\fD_{\bvp}$, using the retarded variation $\delta^\ret$. 

Also for Yang-Mills theory, we use a smooth cutoff function $\lambda(x)$ to localize the interaction (see below). 
This cut-off, however, does not affect the linearized wave operator $P^\lin_{\bar \cA}$, defined in \eqref{eq:lin-YM-eq}.\footnote{Introducing a cut-off there would spoil gauge covariance.} Hence, the operators $P^\lin_{\bar \cA}$ in general do not coincide outside of a compact space-time region, which, however, is a prerequisite for the use of the retarded variation. Hence, we relax the condition that $\bar \cA$ is on-shell, i.e., a solution to \eqref{eq:YM_eom}, on the whole space-time. We proceed as follows: We choose a neighborhood $\cU$ of $\cR$ on which we require the backgrounds $\bar \cA$ to be on-shell, i.e.,
\beq
\label{eq:bg-eom}
\bar \nabla^\mu \bar F_{\mu \nu}(x) = 0, \qquad x \in \cU.
\eeq
Furthermore, we require all backgrounds $\bar \cA$ to coincide outside of a larger region $\cV \supset \cU$ with an arbitrary reference connection $\cA_0$. Consequently, the variations $\ba$ of the background are supported in $\cV$ and fulfil the linearized Yang-Mills equation \eqref{eq:lin-YM-eq} in $\cU$. In this way, one ensures that the retarded variation $\delta^\ret_\ba$ is well-defined.

Furthermore, one localizes the interaction by introducing a cut-off function $\lambda$, which is supposed to be supported in $\cU$ and equal to $1$ on a neighborhood of $\cR$.
The action is, then, defined as
\begin{multline}
\label{eq:YM_Action}
 S_{\YM} = - \frac{1}{4} \int \Big\{ \left( \bar \nabla_\mu A_\nu - \bar \nabla_\nu A_\mu + \lambda [A_\mu, A_\nu]_\g \right)^I \left( \bar \nabla^\mu A^\nu - \bar \nabla^\nu A^\mu + \lambda [A^\mu, A^\nu]_\g \right)^I  \\
   + 2 \bar F^I_{\mu \nu} [A^\mu, A^\nu]_\g^I \Big\} \vol,
\end{multline}
where summation over repeated indices $I$ is understood. In $\cR$, where $\lambda=1$ and the background $\bar{\cA}$ is on-shell, this is the Yang-Mills action \eqref{eq:YM-action} expanded around $\bar{\cA}$, with the constant term $-\frac{1}{4} \int \bar{F}^I_{\mu \nu} \bar F^{I \mu \nu} \vol$ omitted. Note that the full Yang-Mills action \eqref{eq:YM-action} would have a source term, i.e.,  a term linear in $A$, which however vanishes in $\cR$, as $\bar \cA$ is on-shell there.
The set-up of our localization prescription is summarized in Figure~\ref{fig:Localization}.

\begin{figure}
\begin{center}
\begin{tikzpicture}

\filldraw[color=black, fill=gray!25!white] (3,1.5) -- (9,1.5) -- (9, 2) -- (3, 2)  -- cycle;
\filldraw[color=black, fill=gray!45!white] (9, 1.5) -- (3, 1.5)  -- (3, 0.5) -- (9, 0.5) -- cycle;
\filldraw[color=black, fill=gray!70!white] (3,0.5) -- (9,0.5) -- (9,-0.5) -- (3, -0.5)  -- cycle;
\filldraw[color=black, fill=gray!45!white] (9,-0.5) -- (3, -0.5) -- (3, -1.5) -- (9, -1.5) -- cycle;
\filldraw[color=black, fill=gray!25!white] (3,-1.5) -- (9,-1.5) -- (9, -2) -- (3, -2) -- cycle;
\filldraw[black] (6, 2.30)  node {$\bar \cA= \cA_0$}; 
\filldraw[black] (6,1.75)  node {$\cV: \lambda=0$}; 
\filldraw[black] (6,1)  node {$\cU: \bar \nabla^\mu \bar F_{\mu \nu} = 0$}; 
\filldraw[black] (6,0)  node {$\cR: \lambda = 1, \bar \nabla^\mu \bar F_{\mu \nu} = 0 $}; 
\filldraw[black] (6,-1)  node {$\cU: \bar \nabla^\mu \bar F_{\mu \nu} = 0$}; 
\filldraw[black] (6,-1.75)  node {$\cV: \lambda=0$}; 
\filldraw[black] (6,-2.30)  node {$\bar \cA= \cA_0$}; 
\end{tikzpicture}
\caption{\label{fig:Localization} Different regions $\cR \subset \cU \subset \cV$ in our localization set-up. }
\end{center}
\end{figure}

Since $\bar{\nabla}$, $\bar{F}$ and $A$ transform covariantly under background gauge transformations, the action \eqref{eq:YM_Action} is invariant under background gauge transformations.
Analogously to \eqref{eq:ShiftSymmetryScalar}, the action is split independent in the sense that
\beq
\label{eq:ShiftSymmetryYM}
 \frac{\delta S_\YM}{\delta \bar \cA(x)} = \frac{\delta S_{\YM, \ia}}{\delta A(x)} \qquad x \in \cR.
\eeq
Here $S_{\YM, \ia}$ is the part of $S_\YM$ which is of degree higher than $2$ in $A$.
The restriction to $x \in \cR$ is due to the infra-red cut-off $\lambda$ of the interaction.

\subsubsection{BV-BRST formalism and background covariant gauge-fixing}
\label{sec:BRST}
In this section, we outline the straightforward generalization of the BV-BRST formalism \cite{Hollands:2007zg, becchi1976renormalization, BATALIN198127, Fredenhagen:2011mq}, to the case with non-trivial backgrounds. 
 
In order to perform gauge-fixing in the BV-BRST formalism, we need to augment the field variables with a set of ghosts and anti-fields, some of which are fermions, i.e., have an odd Grassmann parity.\footnote{The description of fermionic fields in terms of functionals, i.e., the fermionic generalization of \eqref{eq:def_functional}, is described in \cite{Rejzner:2011au}.}
The resulting gauge-fixed theory enjoys the BV-BRST symmetry $s$ as follows. 
Let us denote the set of all dynamical fields by $\Phi=(A_\mu^I, B^I, C^I, \bar{C}^I)$, where $C$ ($\bar C$) are called \emph{(anti-) ghosts} and $B$ is a Lagrange multiplier. One assigns mass dimensions $d_\Phi=(1, 2, 0, 2)$ and a \emph{ghost number} $g_\Phi=(0, 0, 1, -1)$ to the fields. 
The latter defines the Grassmann parity.
The BV-BRST operator $\gls{s}$, which increases the ghost number by 1, acts by
\[
{s} A_\mu^I = \bar \nabla_\mu C^I +  \lambda [A_\mu, C]_\g^I, \quad {s} C^I = - \tfrac{1}{2} \lambda [C, C]_\g^I, \quad {s} \bar{C}^I  = B^I,\quad {s} B^I  = 0.
\]
One also introduces \emph{anti-fields} $\Phi^\ddag = (A^{\ddag I \mu }, B^{\ddag I} , C^{\ddag I}, \bar{C}^{\ddag I})$, with mass dimensions $d_{\Phi^\ddag} = (3, 2, 4, 2)$ and ghost numbers $g_{\Phi^\ddag} = (-1, -1, -2, 0)$. They are interpreted as densities and act as classical, non-dynamical sources of BRST transformations of the fields, appearing in the action via
\[
S_{\text{sc}}= - \int \sum_{i} {s} \Phi^i \Phi_i^\ddag.
\]

To perform the gauge-fixing, we add a manifestly BV-BRST-invariant term $s \Psi$ to the action, where $\Psi$ is a \emph{gauge-fixing fermion} with ghost number $-1$ which does not contain anti-fields and we choose here to be
\beq
\label{eq:GaugeFixingFermion}
\gls{Psi} = \int \bar C^I \left( \bar \nabla^\mu A_\mu^I + \tfrac{1}{2} B^I \right) \vol.
\eeq
This is the so-called \emph{background covariant gauge-fixing}. It breaks dynamical gauge-invariance, while keeping the background gauge-invariance. In this respect, \eqref{eq:GaugeFixingFermion} is a useful gauge in practical calculations and is commonly employed in the background field formalism \cite{Abbott:1981ke, KlubergStern:1974xv,DeWitt:1967ub, tHooft:1975uxh, HONERKAMP1972269, Boulware:1980av}. 

The BV-BRST transformations of all fields and anti-fields can now be written as
\beq
\label{eq:s_as_antibracket}
s = (S, -),
\eeq
where $S$ is the extended and gauge-fixed action 
\beq \label{eq:gauge-fixed-action}
\gls{S-action}= S_{\YM} + S_{\rsc} + s \Psi,
\eeq 
and where $(-, -)$ is the so-called \emph{anti-bracket} defined by
\begin{equation*}
\gls{(F1,F2)} \defeq \int \left\{ \frac{\delta^R F_1}{\delta \Phi^i(x)}  \frac{\delta^L F_2}{\delta \Phi^\ddag_i(x)} -  \frac{\delta^R F_1}{\delta \Phi^\ddag_i(x)} \frac{\delta^L F_2}{\delta \Phi^i(x)} \right\},
\end{equation*}
\cf \cite{Rejzner:2011au} for a definition of left and right derivatives \wrt fields with Grassmann parity. In the following, field derivatives will be left derivatives, unless states otherwise.
The anti-bracket satisfies the graded Jacobi identity
\begin{align}\nn
 0 & = (-1)^{(\eps_1+1) (\eps_3+1)}  (F_1, ( F_2 , F_3 )) + (-1)^{(\eps_2+1) (\eps_1+1)}  (F_2, ( F_3 , F_1 )) \\ \label{eq:classical-Jacobi-id}
& +(-1)^{(\eps_3+1) (\eps_2+1)}  (F_3, ( F_1 , F_2 )),
\end{align}
and has the following graded symmetry
\begin{equation*}
( F_1 , F_2 ) = (-1)^{(\eps_1 +1)( \eps_2+1)+1} ( F_2 , F_1 ).
\end{equation*}
We remark that only on functionals supported in $\cR$, where $\bar{\cA}$ is on-shell and $\lambda = 1$, the operator $s$ coincides with the standard nilpotent BV-BRST differential and the gauge-fixed action fulfills the \emph{classical master equation},
\[
 ({S}, {S}) = 0,
\]
which expresses the BRST-invariance of $S$.

As usual, we split the action into a free and an interaction part,
\[
 S= S_0 + {S_\ia},
\]
where the free action $S_0$ is quadratic in $\Phi$ and $\Phi^\ddag$, and the compactly supported interaction ${S_\ia}$ contains the terms of degree higher than 2 in $\Phi$ and $\Phi^\ddag$. This, in turn, leads to the decomposition 
\[
{s} = {s}_0 + {s}_{\ia}
\]
of the BV-BRST differential.
The action of $s_{0}$ on all fields and anti-fields is given in Table \ref{table1}. Note that the requirement of the background connection being on-shell is necessary for the nilpotency of $s_0$. For instance, one can check by direct calculation that ${s}^2_0 A^{\ddag I}_\mu = [\bar{\nabla}^\nu \bar{F}_{\mu \nu}, C]_\g^I$, which vanishes only if $\bar{\nabla}^\nu \bar{F}_{\mu \nu}=0$. Hence, $s_0$ is only nilpotent when restricted to functionals localized in $\cU$, motivating our condition that $\supp \lambda \subset \cU$.

\begin{table}[h]
\begin{center} 
    \begin{tabular}{ | l | l | l | p{5cm} |}
    \hline
    ${s}_0 \Phi $ & ${s}_0 \Phi^{\ddag}$  \\ \hline \hline
    $s_0 A_\mu^I = \bar \nabla_\mu C^I$ & ${s}_0 A^{\ddag I}_\mu = \big( ( \bar P^\lin A)^I_\mu - \bar{\nabla}_\mu B^I\big) \vol$  \\ \hline
    ${s}_0 B^I  = 0$ & ${s}_0 B^{\ddag I}  = (B^I + \bar{\nabla}^\mu A^I_\mu) \vol - \bar{C}^{\ddag I}$  \\ \hline
    ${s}_0 C^I = 0$ & ${s}_0 C^{\ddag I} = - \bar{\nabla}^\rho \bar{\nabla}_\rho \bar{C}^I \vol - \bar{\nabla}^\mu A_\mu^\ddag$  \\ \hline
    ${s}_0 \bar{C}^I  = B^I$ & ${s}_0 \bar{C}^{\ddag I}  = \bar{\nabla}^\rho \bar{\nabla}_\rho {C}^I \vol $ \\ \hline
    \end{tabular}
    \caption{\label{table1} Free BRST transformations of fields $\Phi $ and anti-fields $\Phi^\ddag $.}
\end{center}
\end{table}

The gauge-fixed action $S$ is invariant under background gauge transformations since all the dynamical fields and anti-fields transform in the adjoint. However, it is no longer split independent, not even in $\cR$, since $\Psi$ destroys split independence as $\bar \cA$ and $A$ no longer appear in $\Psi$ in the form $\bar \cA + A$. 

\begin{proposition}
\label{prop:cD-S}
The gauge-fixed action \eqref{eq:gauge-fixed-action} satisfies
\beq
\label{eq:ShiftSymmetryYM_gf}
 \frac{\delta S}{\delta \bar \cA(x)} - \frac{\delta {S_\ia}}{\delta A(x)} = s \frac{\delta}{\delta \bar \cA(x)} \Psi \qquad x \in \cR.
\eeq
\end{proposition}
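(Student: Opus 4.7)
My plan is to decompose the gauge-fixed action as $S = S_\YM + S_\rsc + s\Psi$ and split the left-hand side as
\[
\frac{\delta S}{\delta \bar \cA(x)} - \frac{\delta S_\ia}{\delta A(x)} = \Big[\tfrac{\delta S_\YM}{\delta \bar \cA}-\tfrac{\delta S_{\YM,\ia}}{\delta A}\Big](x) + \Big[\tfrac{\delta S_\rsc}{\delta \bar \cA}-\tfrac{\delta S_{\rsc,\ia}}{\delta A}\Big](x) + \Big[\tfrac{\delta (s\Psi)}{\delta \bar \cA}-\tfrac{\delta (s\Psi)_\ia}{\delta A}\Big](x),
\]
and show that the first two brackets vanish on $\cR$ while the third equals $s\tfrac{\delta \Psi}{\delta \bar \cA(x)}$ there.

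The first bracket is killed by the already-established identity \eqref{eq:ShiftSymmetryYM}. For the second, I observe that on $\cR$ the cutoff satisfies $\lambda=1$, so the BRST variations from Section~\ref{sec:BRST} take the form $sA_\mu = \partial_\mu C + [\bar\cA_\mu + A_\mu, C]_\g$, $sC = -\tfrac{1}{2}[C,C]_\g$, $s\bar C = B$, $sB=0$. Each of these depends on $\bar\cA$ and $A$ only through the split-invariant combination $\bar\cA + A$, so $\tfrac{\delta (s\Phi^i)}{\delta \bar\cA}\big|_\cR = \tfrac{\delta (s\Phi^i)}{\delta A}\big|_\cR$ for every dynamical field $\Phi^i$. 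Consequently the source term $S_\rsc = -\int s\Phi^i\,\Phi^\ddag_i$ is split independent on $\cR$, yielding $\tfrac{\delta S_\rsc}{\delta \bar\cA} = \tfrac{\delta S_\rsc}{\delta A}$ there. Since the quadratic part $S_{\rsc,0}= -\int \bar\nabla_\mu C^I A^{\ddag I\mu}-\int B^I \bar C^{\ddag I}$ is manifestly independent of $A$, we have $\tfrac{\delta S_{\rsc,0}}{\delta A}=0$, hence $\tfrac{\delta S_\rsc}{\delta \bar\cA} = \tfrac{\delta S_{\rsc,\ia}}{\delta A}$ on $\cR$.

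For the third bracket, the key observation is that $\Psi$ contains no anti-fields, so the antibracket formula $s=(S,-)$ collapses to $s\Psi = \sum_i \pm (s\Phi^i)\tfrac{\delta \Psi}{\delta \Phi^i}$ with appropriate graded signs. Differentiating in $\bar\cA$ and in $A$ and using the symmetry of mixed partials, the terms proportional to $s\Phi^i \tfrac{\delta^2 \Psi}{\delta \bar\cA\,\delta\Phi^i}$ combine with $s\tfrac{\delta \Psi}{\delta \bar\cA}$, leaving
\[
\frac{\delta (s\Psi)}{\delta \bar\cA} - \frac{\delta (s\Psi)_\ia}{\delta A} - s\frac{\delta \Psi}{\delta \bar\cA} = \sum_i\Big(\frac{\delta (s\Phi^i)}{\delta \bar\cA}-\frac{\delta (s\Phi^i)}{\delta A}\Big)\frac{\delta \Psi}{\delta \Phi^i} - s_\ia \frac{\delta \Psi}{\delta A},
\]
where I have used that Table~\ref{table1} gives $\tfrac{\delta (s_0\Phi^i)}{\delta A}=0$ for every $i$, so $\tfrac{\delta (s_\ia\Phi^i)}{\delta A}=\tfrac{\delta (s\Phi^i)}{\delta A}$. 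The sum on the right vanishes on $\cR$ by the split independence of the $s\Phi^i$ noted above, while $s_\ia \tfrac{\delta \Psi}{\delta A}=0$ because direct integration by parts (using invariance of the Killing form under $\bar\nabla$) gives $\tfrac{\delta \Psi}{\delta A^I_\mu(x)} = -\bar\nabla^\mu \bar C^I(x)$, which depends only on the background $\bar\cA$ and on $\bar C$; since $\bar\cA$ is non-dynamical (so $s$ commutes with $\bar\nabla$) and $s_\ia \bar C = 0$ (because $s\bar C = B = s_0\bar C$), the claim follows.

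The main technical nuisance, I expect, is the bookkeeping of Grassmann signs coming from left versus right functional derivatives in the BV formalism, which I have suppressed throughout. The algebraic content of each step is straightforward, but keeping the signs consistent---particularly in the formula $s\Psi=\sum_i\pm(s\Phi^i)\tfrac{\delta\Psi}{\delta\Phi^i}$ and in the mixed-partial cancellation that produces the displayed identity---requires care.
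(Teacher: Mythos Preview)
Your proof is correct and follows essentially the same route as the paper's. The paper packages the computation using the anti-bracket, writing $s\Psi=(S_\rsc,\Psi)$ and applying the Leibniz rule for $\tfrac{\delta}{\delta\bar\cA}$ and $\tfrac{\delta}{\delta A}$ to that bracket, whereas you unpack the same Leibniz step in components via $s\Psi=\sum_i\pm(s\Phi^i)\tfrac{\delta\Psi}{\delta\Phi^i}$; the three ingredients used---split independence of $S_\YM$, split independence of $S_\rsc$ on $\cR$, and $s_\ia(\tfrac{\delta\Psi}{\delta A})=0$ because $\tfrac{\delta\Psi}{\delta A}\propto\bar C$ with $s_\ia\bar C=0$---are identical in both arguments.
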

\begin{proof} For $x \in \cR$, we calculate
\begin{align*}
 \tfrac{\delta}{\delta \bar \cA(x)} S - \tfrac{\delta}{\delta A(x)} {S_\ia} & = \tfrac{\delta}{\delta \bar \cA(x)} (S_\YM + S_\rsc) - \tfrac{\delta}{\delta A(x)} (S_\YM + S_\rsc)_\ia + \tfrac{\delta}{\delta \bar \cA(x)} (S_\rsc, \Psi) - \tfrac{\delta}{\delta A(x)} (S_{\rsc, \ia}, \Psi) \\
 & = ( \tfrac{\delta}{\delta \bar \cA(x)} S_\rsc - \tfrac{\delta}{\delta A(x)} S_{\rsc, \ia}, \Psi) + (S_\rsc, \tfrac{\delta}{\delta \bar \cA(x)} \Psi) - (S_{\rsc, \ia}, \tfrac{\delta}{\delta A(x)} \Psi) \\
 & = s \tfrac{\delta}{\delta \bar \cA(x)} \Psi
\end{align*}
where we have used that \eqref{eq:ShiftSymmetryYM} also holds with $S_\YM$ replaced by $S_\rsc$ and that $\frac{\delta}{\delta A} \Psi$ is proportional to $\bar C$, on which $s_\ia$ vanishes.
\end{proof}

It is advantageous to also compute the action of $\cD_{\ba}$ on $S$, the former being defined, analogously to \eqref{eq:bg_ind_functional}, by
\[
\gls{cD-A} \defeq ( \delta_{\ba} - \delta_{\ba} ) \defeq  \skal{ ( \tfrac{\delta}{\delta \bar{\cA}} - \tfrac{\delta}{\delta A} ) -}{\ba} .
\]
\begin{corollary}
In $\cR$, i.e., when restricted to configurations supported in $\cR$, we have
\beq
\label{eq:cD_S}
 \cD_{\ba} S = s \cD_{\ba} \Psi.
\eeq
\end{corollary}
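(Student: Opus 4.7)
The plan is to integrate the pointwise identity \eqref{eq:ShiftSymmetryYM_gf} against the background variation $\ba$ and reduce the claim to an elementary identity involving only the free part $S_0$ of the gauge-fixed action. Pairing \eqref{eq:ShiftSymmetryYM_gf} with $\ba$ (and using that $s$ passes through integration against the $\bar{\cA}$-independent section $\ba$) gives $\bar{\delta}_{\ba} S - \delta_{\ba} S_\ia = s\, \bar{\delta}_{\ba} \Psi$. Subtracting $\delta_{\ba} S_0$ from both sides and using $\delta_{\ba} S = \delta_{\ba} S_0 + \delta_{\ba} S_\ia$, one obtains $\cD_{\ba} S = s\, \bar{\delta}_{\ba} \Psi - \delta_{\ba} S_0$. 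Since $s\, \cD_{\ba} \Psi = s\, \bar{\delta}_{\ba} \Psi - s\, \delta_{\ba} \Psi$, the corollary is equivalent to proving the identity
\[
\delta_{\ba} S_0 = s\, \delta_{\ba} \Psi \qquad \text{on } \cR.
\]

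Both sides of this reduced identity I would then compute explicitly. From the gauge-fixing fermion \eqref{eq:GaugeFixingFermion}, integration by parts (legitimate because $\supp \ba \subset \cR$ is compact) yields $\delta_{\ba}\Psi = \int \bar{C}^I\, \bar{\nabla}^\mu \ba_\mu^I\, \vol$. Applying $s$, only $s\bar{C} = B$ contributes, the background quantities $\bar{\nabla}$ and $\ba$ being inert under $s$, so $s\, \delta_{\ba} \Psi = \int B^I\, \bar{\nabla}^\mu \ba_\mu^I\, \vol$. On the other hand, $S_0 = S_{\YM, 0} + S_{\rsc, 0} + (s\Psi)_0$; an inspection of Table~\ref{table1} together with \eqref{eq:YM_Action} and \eqref{eq:GaugeFixingFermion} shows that $S_{\rsc, 0}$ contains no $A$ at all, that the only $A$-dependent term in $(s\Psi)_0$ is $\int B^I\, \bar{\nabla}^\mu A_\mu^I\, \vol$, and that the $A$-variation of $S_{\YM, 0}$ produces precisely the linearized Yang-Mills operator $\bar{P}^\lin$ of \eqref{eq:lin-YM-eq} applied to $A$. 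Summing these contributions,
\[
\delta_{\ba} S_0 = \int \ba^{I\mu}\, (\bar{P}^\lin A)_\mu^I\, \vol + \int B^I\, \bar{\nabla}^\mu \ba_\mu^I\, \vol.
\]

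It remains to show that $\int \ba \cdot \bar{P}^\lin A\, \vol = 0$, which is where the tangency condition on $\ba$ enters decisively. The operator $\bar{P}^\lin$ is formally self-adjoint, as it arises from the second variation of the Yang-Mills action, and integration by parts produces no boundary contributions thanks to the compact support of $\ba$; thus $\int \ba \cdot \bar{P}^\lin A = \int (\bar{P}^\lin \ba) \cdot A$. Since $\ba$ is tangent to $\sS_\YM$ it satisfies $\bar{P}^\lin \ba = 0$ on $\cU$, and because $\supp \ba \subset \cR \subset \cU$ the right-hand side vanishes, completing the reduction and hence the proof. The essential observation is that the residual obstruction to split independence at the gauge-fixed level is exactly $\int \ba \cdot \bar{P}^\lin A$, and it is killed precisely by demanding that $\ba$ be an honest tangent vector to the manifold of on-shell background connections; the main difficulty is therefore not analytical but a matter of careful bookkeeping about which terms of $S$ are free and which are interacting.
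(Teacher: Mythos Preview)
Your proof is correct and follows the same route as the paper: both reduce via Proposition~\ref{prop:cD-S} to $\cD_\ba S = s\,\bar\delta_\ba \Psi - \delta_\ba S_0$, then split $\delta_\ba S_0 = \delta_\ba S_{\YM,0} + \delta_\ba s_0 \Psi$ (noting $\delta_\ba S_{\rsc,0}=0$) and argue that the first piece vanishes by the linearized field equation while the second equals $s\,\delta_\ba \Psi$. The paper dispatches the latter by the one-line observation that $\delta_\ba s_0 \Psi = s\,\delta_\ba\Psi$ holds for any $\Psi$ quadratic in fields and without anti-fields; you instead verify it by direct computation for the specific $\Psi$ of \eqref{eq:GaugeFixingFermion}. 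This is a difference in presentation only.

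One inaccuracy worth correcting: you twice invoke $\supp\ba\subset\cR$, but in the paper's setup $\ba$ is supported in $\cV\supset\cU\supset\cR$ and satisfies $\bar P^\lin\ba=0$ only inside $\cU$. The phrase ``in $\cR$'' in the corollary refers to the support of the \emph{dynamical} configurations $(\Phi,\Phi^\ddag)$, not of $\ba$. This does not break your argument---integration by parts is legitimate because $\bar C$ and $A$ are supported in $\cR$, and $\int(\bar P^\lin\ba)\cdot A\,\vol$ vanishes because $A$ is supported in $\cR\subset\cU$ where $\bar P^\lin\ba=0$---but the justification you state is the wrong one.
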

\begin{proof}
Using \eqref{eq:ShiftSymmetryYM_gf}, we compute
\[
 \cD_{\ba} S = s \bar \delta_{\ba} \Psi - \delta_{\ba} {S_0} = s \bar \delta_{\ba} \Psi - \delta_{\ba} S_{\YM, 0} -  \delta_{\ba} s_0 \Psi.
\]
The second term on the \rhs vanishes due to $\ba$ being, in $\cR$, a solution to the linearized equation of motion. The result then follows from $\delta_{\ba} s_0 \Psi = s \delta_{\ba} \Psi$, which holds for any $\Psi$ which is quadratic in fields and does not contain anti-fields.
\end{proof}

\subsubsection*{Local gauge covariance}
Our background data now consists of $(P \to M, g, \bar{\cA})$, i.e., a principal fibre bundle $P \to M$ with a fixed structure group $G$, the metric $g$, and a background connection $\bar{\cA}$ on $P$. To make the notion of local covariance precise, we define, following \cite{Zahn:2012dz}, morphisms $\chi: (P' \to M', g', \bar \cA') \to (P \to M, g, \bar \cA)$ as $G$ equivariant smooth maps $\chi: P' \to P$, which cover a causality preserving isometric embedding $\psi: M' \to M$, i.e., a morphism in the sense of the previous section, such that $\chi^* \bar{\cA} =\bar{\cA}'$. This covers the case of background gauge transformations where $\chi_g : P \to P$ is the natural action of a section $g$ of $P \times_{\Ad} G$ on $P$.
Locally covariant fields should then satisfy
\begin{equation}
\label{eq:local-gauge-cov}
\chi^* \mathcal{O} [g, \bar{\cA}, \Phi, \Phi^\ddag] = \mathcal{O}[\psi^*g, \chi^*\bar{\cA}, \chi^* \Phi, \chi^* \Phi^\ddag].
\end{equation}
By the Thomas replacement theorem \cite{Iyer:1994ys, Hollands:2007zg} such a field takes the form
\[
 \cO[g, \bar{\cA}, \Phi, \Phi^\ddag](x) = P\big( \bar{\nabla}_{(\alpha)} \Phi(x), \bar{\nabla}_{(\alpha)} \Phi^\ddag(x), g_{\mu \nu}(x), g^{\mu \nu}(x), \nabla_{(\alpha)} R_{\mu \nu \rho \sigma}(x), \bar{\nabla}^{(\alpha)} \bar{F}_{\mu \nu}(x)\big),
\]
where $P$ is a polynomial, $\alpha$ stands for multi-indices, $R_{\mu \nu \rho \sigma}$ is the Riemannian curvature of $g$, and $\bar{F}_{\mu \nu}$ is the curvature of $\bar{\cA}$.

\subsubsection*{Classical BV-BRST cohomology}

For the case of pure Yang-Mills theory, for semi-simple $G$, the cohomology ring $H(s)$ is generated by elements of the form
\beq
\label{eq: s-cohomology}
{\displaystyle \prod_{k}} r_{t_k}(g,  \bar{\nabla}^{(\alpha)} \bar{F}, \nabla^{(\alpha)} R) {\displaystyle \prod_{i}} p_{r_i}(C) {\displaystyle \prod_{j}} \Theta_{r_j}({D}^{(\alpha)} F),
\eeq
where $\alpha$ stands for multi-indices, $p_r$ and $\Theta_s$ are invariant polynomials of $\g$, $r_t$ is a local functional of the metric $g$, the background field strength $\bar{F}$, the Riemann tensor $R$ and their derivatives. $F$ is the full field strength, \cf \eqref{eq:full-curvature}.
This result for the case of trivial backgrounds, i.e., with $\bar{F}=0$, is proven in \cite{Hollands:2007zg, Barnich:2000zw}. The above expression is then obtained by the requirement of local covariance \eqref{eq:local-gauge-cov} in the presence of a non-trivial background connection. As there is no invariant polynomial of degree 1 on a semi-simple Lie algebra, the cohomology at ghost number 1, $H_1(s)$, is trivial.

Now restricting to sections of vector bundles associated with $P$ via the trivial representation of $G$, that is, those $\cO$ without a Lie algebra index, the cohomology ring  $H(s| \ud)$ is generated by linear combination of elements of the form \eqref{eq: s-cohomology} and elements of the form
\begin{equation}\label{eq: s|d-cohomology}
{\displaystyle \prod_{k}} r_{t_k}(g, \bar{\nabla}^{(\alpha)} \bar{F}, \nabla^{(\alpha)} R) {\displaystyle \prod_{i}} q_{r_i}(\bar F, C+A, A) {\displaystyle \prod_{j}}f_{s_j}(F),
\end{equation}
where
\[
 q_{r}(\bar F, C+A, A) = \int_0^1 \Tr \left( (A+C) \left[ \bar F + t ( \bar \ud A + A^2 ) + (t^2 - t) (A+C)^2 \right]^{m(r)-1} \right) \ud t
\]
are the Chern-Simons forms in the presence of a background connection \cite{ManesStoraZumino}. In this expression, $\bar \ud$ denotes the covariant differential, induced on sections of $\p \otimes \Omega$ by the Leibniz rule and $\bar \ud b = \bar \nabla_\mu b \ud x^\mu$ for $b$ a section of $\p$, and $m(r)$ are the degrees of the independent Casimir elements of $G$. The trace is in some representation of $\g$. Furthermore, $f_s$  are strictly gauge invariant monomials of $F$, and $r_t$ are closed forms. Again, the result \eqref{eq: s|d-cohomology} is a generalization of the well-known results in \cite{Hollands:2007zg, Barnich:2000zw} to the case with non-trivial background connection.

Elements of the cohomology class $H_0(s)$ at ghost number $0$ are in one-to-one correspondence with the gauge invariant observables of the original Yang-Mills theory, while those in the class $H_1^4(s | \ud)$ of $4$ forms at ghost number $1$ turn out to contain the gauge anomalies of the Yang-Mills theory, see e.g. \cite{Hollands:2007zg}.
\subsubsection*{The BRST charge}

Classically, the action of the BRST differential on fields is also generated by the Noether charge of the BRST symmetry via the graded Peierls bracket \cite{Peierls:1952cb, DeWitt:2004xz} $\{ -, -\}_{\bar \cA}$, i.e.,
\[
{s} = \{Q, -\}_{\bar{\cA}}.
\] 
The charge $Q$ is constructed as follows \cite{Hollands:2007zg}: One chooses a one-form $\gamma_\mu$, supported in $\cR$, such that
\[
 \int \gamma \wedge \alpha = \int_\Sigma \alpha,
\]
for a Cauchy surface $\Sigma$ contained in $\cR$ and any closed three form $\alpha$. One then sets
\beq
\label{eq:Q}
 Q = \int \gamma \wedge J,
\eeq
where $J$ is the Noether current of the BRST symmetry, which is a 3 form with ghost number $1$, and is conserved on-shell in $\cR$.

\subsubsection{Background independent local functionals} \label{section:class-BI-obs}
In the case of scalar field theory, we defined the background independent classical local functionals as those in the kernel of $ \cD_{\bvp} $, \cf \eqref{eq:bg_ind_functional}. However, as discussed above, the gauge invariant observables are defined to be equivalence classes of the BV-BRST cohomology. Therefore, the suitable operator whose kernel defines the background-independent classical local functionals, must be well-defined on BV-BRST cohomology (i.e., it must commute with $s$). 
However, in view of \eqref{eq:cD_S}, this is not the case for $\cD_{\ba}$.
We, therefore, define the following modified operator
\beq
\label{eq:tilde-nabla-def}
\gls{tilde-cD-A} \defeq \cD_{\ba} - (-,  \cD_{\ba} \Psi),
\eeq
which turns out to have the desired properties, as stated in the following theorem.
\begin{theorem} \label{thm:tilde-nabla}
The operator $\hat{\cD}_{\ba}$ defined in \eqref{eq:tilde-nabla-def}, satisfies, for $F_i$ with arbitrary support and $F$ supported in $\cR$,
\begin{align}\label{eq:nabla-tilde-anti-bracket}
 \hat{\cD}_{\ba} (F_1, F_2) &= (\hat{\cD}_{\ba} F_1, F_2)  +  (F_1, \hat{\cD}_{\ba} F_2), \\
\label{eq:[tilde-nabla,s]=0}
\left( \hat{\cD}_{\ba} \circ {s}  - {s} \circ \hat{\cD}_{\ba} \right) F &=0,  \\
\label{eq:tilde-nabla-curvature}
\left( [\hat{\cD}_{\ba} , \hat{\cD}_{\ba'} ] - \hat{\cD}_{\LB{\ba}{\ba'}} \right) F & = 0.
\end{align}
\end{theorem}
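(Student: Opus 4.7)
My plan is to handle the three parts in sequence, with the first providing the algebraic tool that drives the other two, and the third reducing to a Jacobi-identity computation that ends by exploiting the anti-field independence of $\Psi$.

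For \eqref{eq:nabla-tilde-anti-bracket}, I would first argue that $\cD_{\ba}$ itself is a derivation of the anti-bracket. Since $\cD_{\ba} = \bar\delta_\ba - \delta_\ba$ is a linear combination of functional derivatives in the base variable $\bar{\cA}$ and in the single field $A$, both of which commute with every $\frac{\delta}{\delta \Phi^i}$ and $\frac{\delta}{\delta \Phi^\ddag_i}$ entering the definition of $(-,-)$, one gets the Leibniz rule $\cD_\ba(F_1,F_2) = (\cD_\ba F_1, F_2) + (F_1, \cD_\ba F_2)$ directly from differentiating under the integral. For the correction term, the identity $(-,\cD_\ba \Psi)$ is a derivation of the anti-bracket by the graded Jacobi identity \eqref{eq:classical-Jacobi-id} (here $\cD_\ba \Psi$ is odd, so the operator $(-,\cD_\ba\Psi)$ is even and the Leibniz rule appears without signs). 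Adding the two pieces gives \eqref{eq:nabla-tilde-anti-bracket}.

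For \eqref{eq:[tilde-nabla,s]=0}, I would use $s = (S, -)$ together with part \eqref{eq:nabla-tilde-anti-bracket} to write
\[
  \hat{\cD}_\ba\, s F \;=\; \hat{\cD}_\ba (S,F) \;=\; (\hat{\cD}_\ba S, F) + (S, \hat{\cD}_\ba F) \;=\; (\hat{\cD}_\ba S, F) + s\,\hat{\cD}_\ba F,
\]
so everything reduces to showing that the first summand vanishes for $F$ supported in $\cR$. Unfolding the definition gives $\hat{\cD}_\ba S = \cD_\ba S - (S, \cD_\ba \Psi) = \cD_\ba S - s\, \cD_\ba \Psi$, and by Corollary~3.4 this vanishes in $\cR$. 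Since the anti-bracket with $F$ only probes the integrand of $\hat{\cD}_\ba S$ at points of $\supp F \subset \cR$, we conclude $(\hat{\cD}_\ba S, F) = 0$.

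For the curvature identity \eqref{eq:tilde-nabla-curvature}, I would set $\Psi_\ba \defeq \cD_\ba \Psi$ and expand using \eqref{eq:nabla-tilde-anti-bracket} and the Leibniz rule for $\cD_\ba$. After antisymmetrizing in $\ba \leftrightarrow \ba'$, one finds
\[
  [\hat{\cD}_\ba,\hat{\cD}_{\ba'}]F \;=\; [\cD_\ba,\cD_{\ba'}]F - (F, \cD_\ba \Psi_{\ba'} - \cD_{\ba'}\Psi_\ba) + ((F,\Psi_{\ba'}),\Psi_\ba) - ((F,\Psi_\ba),\Psi_{\ba'}).
\]
Transplanting the scalar-case computation of Proposition~2.5 gives $[\cD_\ba,\cD_{\ba'}] = \cD_{\LB{\ba}{\ba'}}$ on local functionals, and the same identity applied to $\Psi$ yields $\cD_\ba \Psi_{\ba'} - \cD_{\ba'}\Psi_\ba = \Psi_{\LB{\ba}{\ba'}}$. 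The first two terms on the right-hand side thus assemble into $\hat{\cD}_{\LB{\ba}{\ba'}}F$, leaving the residue $((F,\Psi_{\ba'}),\Psi_\ba) - ((F,\Psi_\ba),\Psi_{\ba'})$. Applying \eqref{eq:classical-Jacobi-id} with $F_1 = \Psi_\ba$, $F_2 = F$, $F_3 = \Psi_{\ba'}$ and using the graded symmetry of $(-,-)$ to move the $\Psi$'s to the outside, this residue equals $(F, (\Psi_{\ba'}, \Psi_\ba))$. The proof then closes with the key observation that the gauge-fixing fermion $\Psi$ of \eqref{eq:GaugeFixingFermion} is independent of all anti-fields, hence so is every $\Psi_\ba$, which forces $(\Psi_{\ba'}, \Psi_\ba) = 0$ by inspection of the definition of the anti-bracket.

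The main bookkeeping burden will be the graded signs in the Jacobi manipulation of the last step, since $\Psi_\ba$ has odd Grassmann parity; once those are settled, the proof is essentially automatic, with the anti-field independence of $\Psi$ playing the role of the structural input that makes the background-covariant connection actually flat.
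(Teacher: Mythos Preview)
Your proposal is correct and follows essentially the same route as the paper: the Leibniz property of $\cD_\ba$ plus the Jacobi identity for the correction term give \eqref{eq:nabla-tilde-anti-bracket}; combining this with $s=(S,-)$ and \eqref{eq:cD_S} gives \eqref{eq:[tilde-nabla,s]=0}; and the curvature computation reduces via Jacobi to $(F,(\cD_{\ba'}\Psi,\cD_\ba\Psi))$, which vanishes because $\Psi$ contains no anti-fields. The organization and the key inputs are identical to the paper's proof.
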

\begin{proof}
To prove \eqref{eq:nabla-tilde-anti-bracket}, we calculate
\begin{align*}
 \hat{\cD}_{\ba} (F_1, F_2) & =  \cD_{\ba} (F_1, F_2) - ( (F_1, F_2) ,  \cD_{\ba} \Psi)  \\
  &= (\cD_{\ba} F_1, F_2)  +  (F_1, \cD_{\ba} F_2)   - ( (F_1,\cD_{\ba} \Psi), F_2 ) - ( (F_1,  ( F_2, \cD_{\ba} \Psi)  )\\
  & = (\hat{\cD}_{\ba} F_1, F_2)  +  (F_1, \hat{\cD}_{\ba} F_2),
\end{align*}
where we have used the identity
\[
\cD_{\ba} (F_1, F_2) = (\cD_{\ba} F_1, F_2)  +  (F_1, \cD_{\ba} F_2), 
\]
and the Jacobi identity \eqref{eq:classical-Jacobi-id} for the anti-bracket.
To prove \eqref{eq:[tilde-nabla,s]=0}, we compute
\[
 \hat{\cD}_{\ba} (s F) = \hat{\cD}_{\ba} (S, F) = ( \hat{\cD}_{\ba} S, F) + ( S,  \hat{\cD}_{\ba} F) = s \hat{\cD}_{\ba} F,
\]
where we have used \eqref{eq:nabla-tilde-anti-bracket} and \eqref{eq:cD_S}.
To prove \eqref{eq:tilde-nabla-curvature}, we calculate
\[
\hat{\cD}_{\ba}  \hat{\cD}_{\ba'} F =   \cD_{\ba} \cD_{\ba'}  F  - (\cD_{\ba'} F, \cD_{\ba} \Psi)  - (\cD_{\ba}F, \cD_{\ba'} \Psi)  - (F, \cD_{\ba} \cD_{\ba'}  \Psi )+ ( (F, \cD_{\ba'} \Psi), \cD_{\ba} \Psi) 
\]
Therefore, we find
\begin{align*}
([\hat{\cD}_{\ba} , \hat{\cD}_{\ba'} ] - \hat{\cD}_{\LB{\ba}{\ba'}}) F & =   ( [\cD_{\ba} , \cD_{\ba'} ] - \cD_{\LB{\ba}{\ba'}} ) F -  ( F,  \{[\cD_{\ba}, \cD_{\ba'} ]  - \cD_{\LB{\ba}{\ba'}} \} \Psi )  \\
& \quad + ( (F, \cD_{\ba'} \Psi), \cD_{\ba} \Psi)  - ( (F, \cD_{\ba} \Psi), \cD_{\ba'} \Psi)  \\
& =  (F, (\cD_{\ba'} \Psi,  \cD_{\ba} \Psi)),
\end{align*}
where we have used
\[
 [\cD_{\ba} , \cD_{\ba'} ] - \cD_{\LB{\ba}{\ba'}}=0,
\]
and the Jacobi identity \eqref{eq:classical-Jacobi-id}. However, since $\Psi$ does not contain anti-fields, $ (\cD_{\ba'} \Psi,  \cD_{\ba} \Psi)=0$ and thus the curvature of $\hat{\cD}_{\ba}$ vanishes.
\end{proof}
\begin{remark}
The ``correction term'' $(-, \cD_{\ba} \Psi)$ in \eqref{eq:tilde-nabla-def} can also be motivated as follows.
Before introducing the gauge-fixing $\Psi$ in the action \eqref{eq:gauge-fixed-action}, the BV-BRST differential is given by $(S_{\YM} + S_{\rsc}, -)$ which is related to the gauge-fixed differential $s$ by
\beq
\label{eq:(hat-S, O)}
{s} = e^{(-, \Psi)} \circ (S_{\YM} + S_{\rsc}, -) \circ e^{-(-, \Psi)},
\eeq
where
\begin{equation*}
e^{(-,\Psi)} =  \id +  (-, \Psi) + \tfrac{1}{2!} \big( (-, \Psi), \Psi \big) + \tfrac{1}{3!} \big(((- ,\Psi), \Psi), \Psi \big) + \dots,
\end{equation*}
is a ``canonical transformation'' generated by $\Psi$ (in the cases of interest here, Yang-Mills theory and gravity, the series truncates, as $\Psi$ does not contain anti-fields). Consequently, the cohomologies of $ (S_{\YM} + S_{\rsc}, -) $ and $s$ turn out to be isomorphic under the map $F \mapsto e^{(-, \Psi)} F$.
In the non-gauge fixed theory, $\cD_{\ba}$ is the correct derivative operator, in the sense that it commutes with $(S_{\YM} + S_{\rsc}, -)$.
The operator $\hat{\cD}_{\ba}$ is then obtained by the same canonical transformation, applied to $\cD_{\ba}$:
\beq
\label{eq:tilde-nabla-O' = nabla-O}
\hat{\cD}_{\ba} = e^{(- , \Psi)} \circ \cD_{\ba} \circ e^{-(- , \Psi)}.
\eeq
Thus, in view of \eqref{eq:(hat-S, O)}, the correction term can be seen to naturally arise as a consequence of gauge-fixing.
\end{remark}

\begin{remark}
In view of \eqref{eq:tilde-nabla-curvature}, one may, similarly to Fedosov's approach, add the tangent vector fields $\ba$ to $\sS_\YM$ as a new non-dynamical fermionic field and define a differential $\hat \delta = \skal{\hat{\cD} -}{\ba}$ on $\ba$ independent functionals, and extend it naturally to $\ba$ dependent ones. By \eqref{eq:[tilde-nabla,s]=0}, $\hat \delta$ and $s$ then anticommute, so that one may define a new differential $\hat s = s + \hat \delta$, whose cohomology at grade 0 gives the gauge invariant, background independent, on-shell local functionals. Such an approach was pursued by several authors in the literature, \cf \cite{KlubergStern:1974xv, Grassi:1995wr, Ferrari:2000yp, Anselmi:2013kba, Becchi:1999ir} for example. We do not proceed in this way here, basically because in the quantized theory, the flatness of the analog of $\hat{\cD}$ will only hold on cohomology, see below.
\end{remark}

\subsection{Perturbative quantum Yang-Mills theory on a background $\bar{\cA}$ }\label{section:q-YM}
\label{sec:QuantumYM}

In this section, we outline the perturbative quantization of the gauge-fixed Yang-Mills theory, described in the previous section, i.e., we adapt \cite{Hollands:2007zg} to the case of non-trivial background gauge fields.

The construction of the free algebra $\gls{W_A}$ is similar to the scalar case, discussed in Section~\ref{section:phi-4}, 
 now with the differential operator
 \beq
\label{eq:bar_P}
  \bar P =
  \begin{pmatrix} 
   (\bar{P}^\lin)^{\ \nu}_\mu & - \bar \nabla_\mu & 0 & 0 \\
   \bar \nabla^\nu & 1 & 0 & 0 \\
   0 & 0 & 0 & \bar \nabla^\lambda \bar \nabla_\lambda \\
   0 & 0 & - \bar \nabla^\lambda \bar \nabla_\lambda & 0
   \end{pmatrix}
 \eeq
acting on $(A_\nu, B, C, \bar C)$. Here $\bar P^\lin$ was defined in \eqref{eq:lin-YM-eq}. The corresponding Hadamard two-point function is of the form
 \beq
 \label{eq:2ptYM}
  \omega =
  \begin{pmatrix} 
   {\omega_{\mathrm{v}}}^{\ \mu}_{\nu} & \bar \nabla_\nu \omega_{\mathrm{s}} & 0 & 0 \\
   - \bar \nabla^\nu {\omega_{\mathrm{v}}}^{\ \mu}_{\nu} & 0 & 0 & 0 \\
   0 & 0 & 0 & - \omega_{\mathrm{s}} \\
   0 & 0 & \omega_{\mathrm{s}} & 0
   \end{pmatrix},
 \eeq
where one assumes the vector and scalar two-point functions $\omega_{\mathrm{v}}$, $\omega_{\mathrm{s}}$ to be related by
\begin{align}
\label{eq:2ptConsistency}
 \bar \nabla^\nu \circ {\omega_{\mathrm{v}}}^{\ \mu}_\nu & = \omega_{\mathrm{s}} \circ \bar \nabla^\mu, &
 \bar \nabla_\nu \circ {\omega_{\mathrm{s}}} & = {\omega_{\mathrm{v}}}^{\ \mu}_\nu \circ \bar \nabla_\mu
\end{align} 
in $\cU$. The latter condition ensures that $s_0$ defines a graded derivation on $\bW_{\bar{\cA}}$, i.e., 
\[
{s}_0 (F_1 \star \dots \star F_n) = \sum_{k} (-1)^{\sum_{l<k} \eps_l}  F_1 \star \dots \star {s}_0 F_{k} \star \dots  \star F_n
\]
for $F_i$'s supported in $\cU$.
That one can construct Hadamard two-point functions $\omega_{\mathrm{v}}$, $\omega_{\mathrm{s}}$ fulfilling these properties was shown in \cite{Gerard:2014jba, WrochnaZahn}. 

As for scalar fields, the on-shell algebra is defined by dividing out the ideal $\bJ_{\bar \cA}$ generated by the equations of motion $s_0 \Phi^\ddagger_i = 0$. It is important to note that these in general contain anti-fields, \cf Table~\ref{table1}. These are being treated as sources, \cf \cite{hollands2005conservation}, for example. 

Time-ordered products on the algebra $\textbf{W}_{\bar{\cA}}$ are defined analogously to the scalar case to be a collection of maps graded symmetric linear maps
\begin{equation*}
\gls{T-A} : (\bW^\loc_{\bar \cA})^{\otimes n} \to \bW_{\bar \cA},
\end{equation*}
which satisfy the axioms mentioned below \eqref{eq:TOP} with obvious modifications to adapt to the gauge fields, and with the difference that local covariance is now defined with respect to the morphisms $\chi$. Time ordered products with one factor, i.e., Wick powers, are defined analogously to the scalar case, \cf \eqref{eq:DefWickPower}, with a Hadamard parametrix $H$ of the same form of the two-point functions, \cf \eqref{eq:2ptYM}. In particular, the vector and scalar parametrices ${H_{\mathrm{v}}}^{\ \mu}_{\nu}$ and $H_{\mathrm{s}}$ fulfill identities analogous to \eqref{eq:2ptConsistency}, up to smooth remainders, which in fact vanish in the coinciding point limit.\footnote{This can be shown for example using the methods developed in \cite{Zahn:2014uwa}.}

\subsubsection{Ward identities}
\label{subsubsec:WI}

A crucial aspect of quantized gauge theory is the interplay of gauge invariance and renormalization. It is encoded in the \emph{anomalous Ward identity} \cite{Hollands:2007zg}
\begin{equation}\label{AnWI}
s_0 T_{\bar{\cA}} ( \eti{F} )  = \ibar T_{\bar \cA}( \{ s_0 F + \tfrac{1}{2} (F, F) + A(\et{F}) \} \otimes \eti{F} ),
\end{equation}
valid for $F$ supported in $\cU$.\footnote{In \cite{Hollands:2007zg}, this was proven for a flat background connection without restrictions on the support of $F$. This proof can be straightforwardly generalized to general background connections. However, a crucial ingredient is that $s_0$ is a derivation and nilpotent, which is only true on functionals supported in $\cU$. This motivates the localization $\supp S_\ia \subset \cU$, which, by \eqref{eq:T_support} ensures that $\supp T^\ia_{\bar \cA}(\eti{F}) \subset \cU$ for $\supp F \subset \cR$ (those are the generators that we will be concerned with).} 
Here $A(e_\otimes^F)= \sum_{n \geq 1} \frac{1}{n!} A_n(F^{\otimes n})$ is the \emph{anomaly}, where each $A_n$ is a map
\begin{equation*}
A_n : (\bW^\loc_{\bar \cA})^{\otimes n} \to \bW^\loc_{\bar \cA},
\end{equation*}
with properties similar to $D_n$, \cf \eqref{eq:counter-terms}, that is, it is of order $O(\hbar)$, decreases the total $\Deg$ by $2 (n-1)$, is supported on the total diagonal, is local and covariant and graded symmetric and scales homogeneously under \eqref{eq:Scaling}. As proven in Lemmata~\ref{lemma:anomaly-field-ind} and \ref{lemma:anomaly-linear}, it is (anti-) field independent and vanishes if one of the arguments is a linear (anti-) field. In addition, each $A_n$ increases the ghost number by 1. 
Furthermore, it is subject to the \emph{consistency condition} \cite{Hollands:2007zg}
\beq
\label{eq:CC_0}
 s_0 A(\et{F}) + (F, A(\et{F})) + A( \{ s_0 F + \tfrac{1}{2} (F, F) + A(\et{F}) \} \otimes \et{F} ) = 0.
\eeq

\begin{remark}
In generating identities such as \eqref{AnWI} or \eqref{eq:CC_0}, we always assume $F$ to be Grassmann even. To handle Grassmann odd $F$, one proceeds by multiplying with Grassmann odd parameters and differentiating \wrt them (taking care about the order).
\end{remark}

As argued below, a crucial consistency requirement is the absence of \emph{gauge anomalies}, i.e.,
\begin{equation}
\label{eq:A=0}
A(e_\otimes^{{S_\ia}})=0.
\end{equation}
The consistency condition \eqref{eq:CC_0} is crucial for the removal of anomalies, i.e., for achieving \eqref{eq:A=0}. Let us indicate how this proceeds. Consider the expansion of $A(e_\otimes^{{S_\ia}})$ in powers of $\hbar$
\[
A(e_\otimes^{{S_\ia}}) = A^{(m)}(e_\otimes^{{S_\ia}})\hbar^m + A^{(m+1)}(e_\otimes^{{S_\ia}})\hbar^{m+1} + \dots,
\]
for some integer $m>0$. Now we write $A^{(m)}(e_\otimes^{{S_\ia}}) = \int_M \alpha$ as an integral of a local four-form $\alpha(x)$ with ghost number $1$ and mass dimension $4$ (this follows from the homogeneous scaling of the anomaly). The consistency condition \eqref{eq:CC_0} for $F = S_\ia$ implies that $\alpha(x) \in H^{4}_1(s| \ud )$. If the cohomology ring $H^{4}_1(s| \ud)$ is trivial, then
 \begin{equation*}
\alpha(x) = s \beta(x) + \ud \gamma(x),
\end{equation*}
for some fields $\beta$, $\gamma$, of ghost number $0$ and $1$, respectively.
Such an anomaly can be removed by passing
to another renormalization scheme, as follows. Let us write the interaction \eqref{eq:gauge-fixed-action} as ${S_\ia} = \int_M L_\ia$, and let $L_1$ be the term of degree 3 in fields and anti-fields (so that $\Deg(\ibar L_1) = 1$).
We now choose a new scheme ${T}'$ by setting the following local finite counter terms $D_n$:
\begin{equation}
\label{eq:D-n}
D_n^{(m)}(L_1(x_1) \otimes \dots \otimes L_1(x_n)) = - \hbar^m \beta(x_1) \delta(x_1, \dots ,x_n),
\end{equation}
where $D^{(m)}$ is the first non-trivial term in the $\hbar$-expansion of $D(e_\otimes^{{S_\ia}})$ and where 
$n= 2 (m-1) + \deg_\phi \beta$.
The anomalies $A'$ and $A$ in the schemes $T'$ and $T$ are related via \cite{Hollands:2007zg}
\begin{equation*}
{A'}^{(m)}(e_\otimes^{{S_\ia}}) = A^{(m)}(e_\otimes^{{S_\ia}}) + s D^{(m)}(e_\otimes^{{S_\ia}}),
\end{equation*}
and therefore with the choice \eqref{eq:D-n} the anomaly in the new scheme vanishes
\[
{A'}^{(m)}(e_\otimes^{{S_\ia}}) = \int_M \alpha' = \int_M \alpha - s \beta = \int_M \ud \gamma =0.
\]
Repeating the argument for higher order coefficients of $A$ in $\hbar$, we can fully remove the anomaly.

For the pure Yang-Mills case, as can be seen from \eqref{eq: s|d-cohomology}, $H^{4}_1(s| \ud)$ is actually non-trivial. However, one can argue \cite{Hollands:2007zg} that the parity property of the possible gauge anomaly is indeed not compatible with that of $A(e_\otimes^{{S_\ia}})$ and hence is absent, so that there exist a renormalization scheme in which \eqref{eq:A=0} holds. In the following we assume to work with such a scheme.

\subsubsection{Quantum BRST charge and the algebra of physical observables}

In analogy with the scalar field theory, we can now define the generating functional
\[
T^\ia_{\bar{\cA}} ( \eti{F}) = \sum_{n=0} \frac{\ibar^n}{n!} T^\ia_{\bar{\cA},n} (F^{\otimes n})
\]
of interacting time ordered products. These generate the interacting algebra $\bW^\ia_{\bar \cA}$. Due to the time-slice axiom \cite{Chilian:2008ye}, it suffices to consider $F$'s supported in $\cR$. However, the algebra $\bW^\ia_{\bar \cA}$ also contains gauge-variant and unphysical functionals. They can be represented only on a space with indefinite inner product. However, the algebra of physical and gauge invariant renormalized observables is defined to be  \cite{Hollands:2007zg, Kugo:1979gm}
\begin{equation*}
\gls{F-A} \defeq \frac{\Ker [Q^\ia_{\bar{\cA}}, -]_\star}{\Ran [Q^\ia_{\bar{\cA}}, -]_\star}, \qquad \text{at ghost number } 0
\end{equation*} 
in the interacting on-shell algebra $\bW^\ia_{\bar \cA} \mod \bJ_{\bar \cA}$. Here, $\gls{Q-quant}$ is the renormalized interacting quantum BRST charge, obtained by applying the definition \eqref{eq:InteractingField} to the local functional $Q$ defined in \eqref{eq:Q}. Equality in $\bF_{\bar \cA}$ is thus equality modulo equations of motion and $\Ran [Q^\ia_{\bar{\cA}}, -]_\star$, i.e.,
\[
 F \gls{eqF} G \quad \Leftrightarrow \quad F - G - [Q^\ia, H]_\star \in \bJ
\]
for some $H$.
Under certain conditions, $\bF_{\bar \cA}$ admits a Hilbert space representation \cite{dutsch1999local,WrochnaZahn}.

Whether such a construction of  $\bF_{\bar{\cA}}$ can be implemented turns out to be closely related to the issue of local gauge-symmetry preservation at the quantum level, which has the following manifestations:
\begin{enumerate}[(i)]
\item conservation of the renormalized interacting Noether current $J^\ia_{\bar{\cA}}$ of BRST symmetry,
\item nilpotency of $[Q^\ia_{\bar{\cA}}, -]_\star$ generated by BRST charge $Q^\ia_{\bar{\cA}}$ (obtained from $J^\ia_{\bar{\cA}}$),
\item invariance of renormalized operators $[Q^\ia_{\bar{\cA}}, \cO_{\bar{\cA}}^\ia]_\star=0$, for classically gauge invariant $\cO$.
\end{enumerate}

As proven in \cite{Hollands:2007zg, Tehrani:2017pdx}, for any theory with local gauge symmetry, the first two manifestations listed above 
hold in the absence of gauge anomalies, i.e., when \eqref{eq:A=0} holds.
Also, the last manifestation follows from the anomalous Ward identity \eqref{AnWI} if,  in addition to  \eqref{eq:A=0}, we have \cite{Hollands:2007zg, Tehrani:2017pdx}
\begin{equation*}
A(\cO \otimes \et{S_\ia})=0,
\end{equation*}
which turns out to be a consequence of the triviality of $H_1(s)$. 

 The key identity in the proof of the above statements is the following \emph{interacting anomalous Ward identity} \cite{Tehrani:2017pdx}, 
\begin{equation}\label{int-WI}
\big[Q^\ia_{\bar{\cA}} , T^\ia_{\bar{\cA}} ( \eti{F}) \big]_\star \eqos - T^\ia_{\bar{\cA}} ( \{ s F + \tfrac{1}{2} (F, F) + A^\ia(\et{F}) \} \otimes \eti{F} ), 
\end{equation}
which holds for all $F$ supported in $\cR$, under assumption \eqref{eq:A=0}.\footnote{The proof given in \cite{Tehrani:2017pdx} is for flat background connections. The generalization to the general case is straightforward. The restriction on the support of $F$ stems from the fact that the differential $s F = (S, F)$ is only nilpotent for $F$ supported in $\cR$, \cf Section~\ref{sec:BRST}.}
Here $\eqos$ means equal modulo the ideal $\bJ_{\bar \cA}$ of free equations of motion, defined analogously to \eqref{eq:ideal-J_0}, i.e.,
\[
F \gls{eqos} G \quad \Leftrightarrow \quad F - G \in \bJ_{\bar \cA},
\]
and
$A^\ia(\et{F})= \sum_{n \geq 1} \frac{1}{n!} A^\ia_{n}(F^{\otimes n})$ is the generating functional of \emph{interacting anomalies}, defined by
\[
\gls{hat-A-n}(F_1 \otimes \dots  \otimes F_n) \defeq A(F_1 \otimes \dots  \otimes F_n \otimes \et{S_\ia}).
\]
These are subject to the \emph{interacting consistency conditions}
\beq
\label{eq:CC_int}
 s A^\ia(\et{F}) + (F, A^\ia(\et{F})) + A^\ia( \{ s F + \tfrac{1}{2} (F, F) + A^\ia( \et{F} ) \} \otimes \et{F} ) = 0.
\eeq
At first order in $F$, this implies that the \emph{quantum BV-BRST operator} \cite{Tehrani:2017pdx} defined by
\begin{equation}\label{eq:hat-q}
\gls{q} F := {s} F + A^\ia_1(F),
\end{equation}
is nilpotent, i.e., $q^2 = 0$.
Using this notation, we may express \eqref{int-WI} at first order in $F$ as
\beq
\label{eq:Q_F}
 [ Q^\ia_{\bar \cA}, F^\ia_{\bar \cA} ]_\star \eqos i \hbar \left( q F \right)^\ia_{\bar \cA}.
\eeq
We also note that by \eqref{int-WI}, the gauge invariant generators of interacting time-ordered products are given by $T^\ia_{\bar \cA}(\eti{F})$, with $F$ fulfilling
\beq
\label{eq:s-nonlin-F=0}
 s F + \tfrac{1}{2} (F, F) + A^\ia(\et{F}) = 0.
\eeq
In particular, an interacting field $F^\ia_{\bar \cA} = T^\ia_{\bar \cA}(F)$ is gauge invariant if $q F = 0$. Furthermore, given $F$ of ghost number 0 and fulfilling $q F = 0$, one may supplement it with ``contact terms'' to $F' = F + C(\et{F})$ such that $F'$ fulfills \eqref{eq:s-nonlin-F=0} in the sense of power series in $F$ \cite{Frob:2018buw}.


\subsubsection{Perturbative agreement and the background dependence of the anomaly}
\label{sec:PA}

As for the scalar case, perturbative agreement is a crucial ingredient for background independence. For variations of the background connection, it means
\beq
\label{eq:PA_YM}
{\delta}^\ret_{\ba}  T( \eti{F}) = T( \ibar \bar{\delta}_{\ba} F  \otimes \eti{F}) + R( \eti{F}; \ibar \bar{\delta}_{\ba} S_0 ).
\eeq
In the following, we sketch the proof that this can indeed be fulfilled in pure Yang-Mills theories, on a proof in a simpler context given in \cite{Zahn:2013ywa}.\footnote{Perturbative agreement will in general not hold when the gauge fields couple to chiral fermions, due to the usual chiral anomalies \cf \cite{Zahn:2014uwa}.} We then explore the interplay of perturbative agreement and anomalies.

We first need to define the retarded variation, to make sense of the \lhs of \eqref{eq:PA_YM}. We recall the differential operator valued matrix $\bar P_{i j}$ defined by
\beq
\label{eq:Def_P_ij}
 \bar P_{ij} \Phi^j(x) \vol(x) = \frac{\delta {{S}_0}|_{\Phi^\ddag=0}}{\delta \Phi^i(x)}, 
\eeq
 \cf \eqref{eq:bar_P}, and denote the corresponding retarded/advanced propagator by $\Delta^{ij}_{\ret/\adv}$. It fulfills
\beq \label{retarded-prop}
\bar P_{ik} \Delta^{kj}_{\ret/\adv} = \delta_i^j \id = \Delta^{jk}_{\ret/\adv} \bar P_{k i}.
\eeq
Let us also introduce the (differential operator valued) matrix $K^i_{\ j}$ defined by 
\beq
\label{eq:Def_K_ij}
K^i_{\ j} = \frac{\delta (s_0 \Phi^i )}{\delta \Phi^j} = \begin{pmatrix} 0 & 0 & \bar \nabla_\nu & 0 \\ 0 & 0 & 0 & 0 \\ 0 & 0 & 0 & 0 \\ 0 & 1 & 0 & 0 \end{pmatrix},
\eeq
so that $K^i_{\ j} \Phi^j = s_0 \Phi^i$, and its formal adjoint $\hat K_i^{\ j}$ such that 
\beq
\label{eq:Def_hat_K_ij}
 S_{\rsc, 0} = - \int s_0 \Phi^i \Phi^\ddag_i = - \int K^i_{\ j} \Phi^j \Phi^\ddag_i = - \int \Phi^i \hat K_i^{\ j} \Phi^\ddag_j.
\eeq
Then
\beq
\label{eq:s0_Phi_ddag}
 s_0 \Phi^\ddag_i = \frac{\delta^R}{\delta \Phi^i} S_0 = (-1)^{\eps} \left( \bar P_{i j} \Phi^j \vol - \hat K_i^{\ j} \Phi^\ddag_j \right),
\eeq
with $\eps$ the Grassmann parity of $\Phi^i$. 

Analogously to the definition of the retarded wave operator in the scalar case, \cf \eqref{eq:retardedWaveOperator}, we now define\footnote{We refer to \cite{Zahn:2013ywa} for a treatment of fermionic fields.}
\begin{align*}
 r_{\bar \cA', \bar \cA} \Phi^i(x) & \defeq \Phi^i(x) + \int \Delta'^{i j}_\ret(x, y) \left( (\bar P - \bar P' )_{j k} \Phi^k(y) \vol(y) - (\hat K - \hat K' )_j^{\ k} \Phi^\ddagger_k(y) \right), \\
 r_{\bar \cA', \bar \cA} \Phi^\ddag_i(x) & \defeq \Phi^\ddag_i(x).
\end{align*}
It maps solutions to the free equations of motion $s_0 \Phi^\ddag_i = 0$ on the background $\bar \cA$ to solutions on the background $\bar \cA'$. It follows that the retarded M{\o}ller operator $\tau^\ret$, defined as in \eqref{eq:def_MollerMap}, is well-defined on the on-shell algebra. One also defines its infinitesimal version, the retarded variation $\delta^\ret_{a}$, as for the scalar case, \cf \eqref{eq:retarded-var-def}. 

A crucial ingredient in the proof that perturbative agreement can be fulfilled is the free current, obtained as the variation of the free part of the action \wrt the background connection, i.e.,
\beq\label{eq:back-current}
 j(a) \defeq \bar \delta_{a} S_0.
\eeq
Here we naturally extend the action to off-shell backgrounds, i.e., $a$ is an arbitrary section of $\p \otimes \Omega^1$, not subject to the linearized equations of motion.
When no sources are present, this current is classically covariantly conserved on-shell. In the present case, this is spoiled by the presence of anti-fields. One finds the off-shell identity
\beq
\label{eq:div_j}
 \bar \nabla_\mu j^{I \mu} = - (-1)^\eps [\Phi^i, s_0 \Phi^\ddag_i]_\g^I - [K^i_{\ j} \Phi^j, \Phi^\ddag_i]_\g^I,
\eeq
with $\eps$ the Grassmann parity of $\Phi^i$.

We now have all the necessary ingredients to prove that \eqref{eq:PA_YM} can be fulfilled.

\begin{proposition}
\label{prop:PA}
In space-time dimension $D \leq 4$, perturbative agreement \eqref{eq:PA_YM} can be fulfilled.
\end{proposition}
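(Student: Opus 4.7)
The plan is to adapt the strategy used by \cite{hollands2005conservation, Zahn:2013ywa, collini2016fedosov, Drago2017} for simpler models to the present Yang-Mills setting with background covariant gauge fixing. I would proceed inductively in the perturbative order $n$ of the time-ordered products $T_{\bar\cA, n}$. Assume that \eqref{eq:PA_YM} has been arranged to hold for all time-ordered products with strictly fewer than $n$ factors by a suitable sequence of scheme changes. Define the obstruction
\[
O_n(F_1, \dots, F_n) := \delta^\ret_\ba \, T_{\bar\cA, n}(F_1 \otimes \cdots \otimes F_n) - \text{(corresponding terms on the r.h.s.\ of \eqref{eq:PA_YM} at order $n$)}.
\]

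The first key step is to show that $O_n$ is a local covariant multilinear map of the same type as the renormalization ambiguities $D_n$ in \eqref{eq:counter-terms}. This follows, as in the scalar case, by combining the causal factorization \eqref{CFaxiom} of $T_{\bar\cA, n}$ with the fact that the retarded variation $\delta^\ret_\ba$ is built from retarded propagators of $\bar P$, so that the non-local pieces produced on the two sides of \eqref{eq:PA_YM} cancel outside of the total diagonal by the inductive hypothesis. Field independence \eqref{eq:FieldIndependence} and the single-field axiom \eqref{eq:T10} ensure that $O_n$ inherits the appropriate polynomial dependence on $\Phi, \Phi^\ddag$ and the background data. The identity \eqref{eq:div_j} plays the role of the infinitesimal version of the conservation law for $j(a)$, and it is needed to show that the anti-field dependent contributions to $O_n$ are of the correct (local, covariant) form rather than introducing non-local tails.

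The second step is to derive consistency conditions on $O_n$. Commuting two retarded variations $\delta^\ret_\ba$ and $\delta^\ret_{\ba'}$ and using $[\bar\delta_\ba, \bar\delta_{\ba'}] = \bar\delta_{\LB{\ba}{\ba'}}$ on c-number quantities yields a cocycle condition, while background gauge covariance (which is manifest by the use of $\bar\nabla$ throughout and the local covariance axiom \eqref{eq:TOLocalCovariance}) constrains $O_n$ to be built out of $\bar F$, $R_{\mu\nu\rho\sigma}$, $\ba$, the fields and anti-fields and their $\bar\nabla$-derivatives. The third step is the power-counting argument that provides the $D \leq 4$ restriction: by the almost homogeneous scaling \eqref{eq:Scaling} and the fact that $O_n$ decreases the total $\Deg$-grade in the same way as a $D_n$-counterterm, in dimension $D \leq 4$ only finitely many local structures linear in $\ba$ and of the correct mass dimension, ghost number, and covariance can contribute. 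One then checks that each such structure can be compensated by a finite choice of $D_n$, so that the new scheme $T'_{\bar\cA, n} = T_{\bar\cA, n}(\cdot + D_n(\cdot))$ has $O'_n = 0$; compatibility with the remaining renormalization axioms is ensured by the main theorem of renormalization.

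The step I expect to be the main obstacle is the treatment of the anti-field sector together with the gauge-fixing: because \eqref{eq:div_j} shows that $j(a)$ is only conserved modulo $s_0$-exact terms involving anti-fields, the cocycle analysis of $O_n$ must be carried out in the enlarged BV configuration space, and one has to verify that the finite counterterms chosen to absorb $O_n$ are compatible with the BV-BRST structure (in particular do not reintroduce gauge anomalies and are consistent with the Ward identity \eqref{AnWI} already arranged). A secondary subtlety is that the dimension bound $D \leq 4$ is used essentially: in higher dimensions new local counterterms of higher mass dimension would be allowed and the finite-dimensional classification of potential obstructions would break down, so the argument as sketched is intrinsically tied to power-counting renormalizable gauge theory.
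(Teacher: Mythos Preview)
Your inductive outline captures the general shape of perturbative-agreement arguments, but it contains a genuine gap at the third step. The assertion that ``each such structure can be compensated by a finite choice of $D_n$'' is not correct in general: the obstruction to \eqref{eq:PA_YM} has a component that \emph{cannot} be removed by any redefinition of time-ordered products. Concretely, following \cite{Zahn:2013ywa,hollands2005conservation}, the redefinitions involving at least one factor $j(a)=\bar\delta_a S_0$ suffice to enforce \eqref{eq:PA_YM} \emph{provided} the c-number quantity
\[
 E(a_1,a_2)\;=\;\delta^\ret_{a_1}T_1(j(a_2))-\delta^\ret_{a_2}T_1(j(a_1))+\ibar\,[T_1(j(a_1)),T_1(j(a_2))]_\star
\]
vanishes. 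This $E$ is (anti-)field independent and is precisely the residual obstruction that your cocycle argument would produce; it is not of the form $D_n$ and must be shown to be zero on its own. Your proposal never isolates this quantity, so the closing step ``compensate everything by counterterms'' does not go through.

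The paper's proof therefore takes a different route from yours at the decisive point. It reduces $E=0$ (for $D\le 4$) to on-shell conservation of the Wick-ordered current, $\bar\nabla_\mu T_1(j^{I\mu})\eqos 0$, and then argues that the only possible anomaly would be a locally and covariantly constructed section of $\p$ of mass dimension four; for a semi-simple gauge group in a parity non-violating theory, no such object exists, so the anomaly is absent. The restriction $D\le 4$ enters in this reduction step, not merely as a finiteness-of-counterterms statement. The anti-field difficulty you correctly flag is resolved not by a BV-compatible classification of $O_n$, but by the simple observation that $E$ is (anti-)field independent: one may set anti-fields to zero, whereupon \eqref{eq:div_j} becomes ordinary on-shell current conservation, check $E=0$ there, and conclude it holds off-shell as well. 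Your sketch would need to incorporate both the identification of the non-removable piece $E$ and this specific vanishing argument to close.
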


\begin{proof}
As shown in \cite{Zahn:2013ywa}, \cf also \cite{hollands2005conservation}, perturbative agreement \eqref{eq:PA_YM} can be fulfilled, by a redefinition of time-ordered products involving at least one factor of $j(a)$, provided that\footnote{This requirement can be seen as a stronger version of the Wess-Zumino consistency condition, \cf \cite{Schenkel:2016nyj}.}
\beq
\label{eq:def_E}
 E(a_1, a_2) \defeq \delta^\ret_{a_1} T_1(j(a_2)) - \delta^\ret_{a_2} T_1(j(a_1)) + \ibar [T_1(j (a_1)), T_1(j(a_2))]_\star = 0.
\eeq
This quantity is (anti-) field independent. 
It was also shown \cite{Zahn:2013ywa} that, for space-time dimension $D \leq 4$, \eqref{eq:def_E} holds on-shell,
provided that the divergence of the Wick ordered current vanishes on-shell,
\beq
\label{eq:CC_os}
 \bar \nabla_\mu T_1(j^{I \mu}(x)) \eqos 0.
\eeq
As we argue below, this is true when anti-fields are set to zero (i.e., when the ideal generated by $\Phi^\ddag_i$ is modded out).
Thus, \eqref{eq:def_E} holds when equations of motion $s_0 \Phi^\ddag_i$ and anti-fields $\Phi^\ddag_i$ are modded out. But as $E(a_1, a_2)$ is independent of (anti-) fields, \eqref{eq:def_E} then also holds off-shell, and so does perturbative agreement \eqref{eq:PA_YM}.

It remains to argue that \eqref{eq:CC_os} indeed holds when anti-fields are set to zero. The first term on the \rhs of \eqref{eq:div_j} then yields equations of motion $[\Phi^i, \bar P_{i j} \Phi^j]^I_\g$. To evaluate the corresponding Wick-ordered product, one has to apply $\bar P$ to the Hadamard parametrix $H$ and evaluate the limit of coinciding points. This can be done, for example using the methods developed in \cite{Zahn:2014uwa}. However, one can directly see that the result must vanish, as it is a locally and covariantly constructed section of $\p$ of mass dimension four. No such quantity exists in parity non-violating models for semi-simple gauge groups.
\end{proof}

\begin{theorem} \label{thm:anomaly-bg-variation}
If perturbative agreement \eqref{eq:PA_YM} holds, background variations of the anomaly satisfy
\[
 \bar{\delta}_{\ba} A( \et{F} ) = A(\bar{\delta}_{\ba}(S_0 +F) \otimes \et{F})
\]
for all $F$ supported in $\cU$.
\end{theorem}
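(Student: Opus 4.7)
The strategy is to apply the background variation $\delta^\ret_{\ba}$ to both sides of the anomalous Ward identity \eqref{AnWI} and to use perturbative agreement \eqref{eq:PA_YM} to translate it into a relation purely involving time-ordered products on a single background. The desired identity will then be read off by matching the insertion into $T$ on both sides, using the fact that $A_n$ is supported on the total diagonal so that the resulting equality of local insertions forces the identity at the level of anomalies themselves (analogously to how \eqref{eq:CC_0} is extracted).

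First, I would verify that $\delta^\ret_{\ba}$ is compatible with the free BRST differential, i.e.\ that $\delta^\ret_{\ba}\circ s_0=s_0\circ\delta^\ret_{\ba}$ modulo a correction accounting for the $\bar{\cA}$-dependence of $s_0$; concretely, the retarded M{\o}ller operator $\tau^\ret_{\bar\cA,\bar\cA'}$ is constructed to intertwine the free BRST structures of the two backgrounds (both are built covariantly from $\bar\nabla$, and $\tau^\ret$ is induced from a retarded wave operator that respects this structure). Differentiating this intertwining relation at $\bar{\cA}'=\bar{\cA}$ yields the needed commutator identity. Next, I apply $\delta^\ret_{\ba}$ to the LHS of \eqref{AnWI}, move it past $s_0$, and invoke \eqref{eq:PA_YM} to obtain
\[
\delta^\ret_{\ba}\, s_0 T(\eti{F}) = s_0 T(\ibar\bar\delta_{\ba}F\otimes\eti{F}) + s_0 R(\eti{F};\ibar j(\ba)),
\]
with $j(\ba)=\bar\delta_{\ba}S_0$. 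I then re-apply \eqref{AnWI} (and the expanded version obtained by differentiating it in an auxiliary source to handle the extra insertions) to each $s_0 T(\cdots)$ here.

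Applying $\delta^\ret_{\ba}$ together with \eqref{eq:PA_YM} to the RHS of \eqref{AnWI} produces four families of terms: (i) $T(\ibar\bar\delta_{\ba}\{s_0 F+\tfrac12(F,F)+A(\et{F})\}\otimes\eti{F})$, which is where $\bar\delta_{\ba}A(\et{F})$ appears explicitly; (ii) a contribution from $\bar\delta_{\ba}$ acting on $\eti{F}$; (iii) a retarded-product term $R(\{Q_F\}\otimes\eti{F};\ibar j(\ba))$ from perturbative agreement; (iv) a term from the variation of $s_0$ itself in $Q_F$, which cancels the analogous term produced on the LHS. Using $\bar\delta_{\ba}(F,F)=2(\bar\delta_{\ba}F,F)$ and the Jacobi identity for the anti-bracket, all terms that do not involve $\bar\delta_{\ba}A(\et{F})$ or $A(\bar\delta_{\ba}(S_0+F)\otimes\et{F})$ cancel against the corresponding reductions on the LHS after invoking \eqref{AnWI} once more on the insertions produced by PA. What survives is an identity of the form
\[
0\ \eqos\ \ibar T\bigl(\bigl\{\bar\delta_{\ba}A(\et{F})-A(\bar\delta_{\ba}(S_0+F)\otimes\et{F})\bigr\}\otimes\eti{F}\bigr).
\]
Since the bracketed expression is a local functional of ghost number $1$, supported on the total diagonal, and since time-ordered products with a linear insertion encode the insertion faithfully (via \eqref{eq:T10} applied iteratively, or equivalently the fact that $T$ is injective on local functionals modulo the free equations of motion), this gives the claimed identity.

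\emph{Main obstacle.} The computation is a substantial bookkeeping exercise: one must track carefully how $\delta^\ret_{\ba}$, $s_0$, $\bar\delta_{\ba}$, and the various $T$- and $R$-insertions interact, and verify cancellations between retarded-product terms on the LHS and RHS. Two subtleties deserve particular care. First, one must handle the explicit $\bar{\cA}$-dependence of $s_0$ (through $\bar\nabla$) consistently — the corresponding contributions on both sides must be shown to cancel, which ultimately relies on the compatibility of the M{\o}ller operator with $s_0$ and on the structure of $j(\ba)$ via \eqref{eq:div_j}. Second, the insertion $j(\ba)$ is supported in $\cV$ rather than $\cU$, so one must verify that the intermediate use of \eqref{AnWI} with $j(\ba)$-insertions is legitimate; this will follow because the differentials $s_0$ acting on such insertions produce only expressions supported where the linearized equations of motion \eqref{eq:lin-YM-eq} hold, i.e.\ in $\cU$.
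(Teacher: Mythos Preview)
Your overall strategy---apply $\delta^\ret_{\ba}$ to both sides of the anomalous Ward identity \eqref{AnWI}, use perturbative agreement \eqref{eq:PA_YM}, and match---is indeed the paper's approach. However, two points in your execution are genuine gaps.

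First, the heart of the argument is the commutator $[\delta^\ret_{\ba},s_0]$, and your treatment of it is too vague. You propose to obtain it by differentiating an intertwining property of the M{\o}ller operator with the free BRST differentials, but you never identify the ``correction'' concretely. In the paper this commutator is computed in two independent ways and the results are matched: (a) from the AWI/PA manipulation one finds
\[
[\delta^\ret_{\ba},s_0]\,T(\eti{F})=\ibar T\bigl(\{\bar\delta_{\ba}A(\et{F})-A(\bar\delta_{\ba}(S_0+F)\otimes\et{F})\}\otimes\eti{F}\bigr)+\ibar\bigl[T((s_0\bar\delta_{\ba}S_0)_-),T(\eti{F})\bigr]_\star,
\]
where the second term arises because $s_0\bar\delta_{\ba}S_0$ is supported outside $\cU$ (since $\ba$ solves the linearized equations there) and hence the retarded product $R(\eti{F};s_0\bar\delta_{\ba}S_0)$ collapses to a commutator; (b) one computes $[\delta^\ret_{\ba},s_0]$ directly on linear (anti-)fields, using that anomalies of linear fields vanish (Lemma~\ref{lemma:anomaly-linear}), and obtains precisely the same commutator. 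Since both $\delta^\ret_{\ba}$ and $s_0$ are determined by their action on linear fields (as derivations), (b) extends to all of $T(\eti{F})$, and comparing with (a) forces the $T$-insertion of the anomaly difference to vanish off-shell. Your proposal does not identify this second computation as the mechanism that closes the argument; the claim that ``the M{\o}ller operator is constructed to intertwine the free BRST structures'' is not by itself sufficient, and the relevant fact (that the commutator is exactly the $[T((s_0\bar\delta_{\ba}S_0)_-),-]_\star$ term) requires Lemma~\ref{lemma:anomaly-linear} as input.

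Second, your conclusion is stated on-shell ($\eqos$), but the theorem is an off-shell identity between local functionals. The paper's argument is off-shell throughout; going on-shell would only yield the identity modulo the ideal $\bJ_{\bar\cA}$, which is strictly weaker. Relatedly, the support issue you flag is handled not via \eqref{eq:div_j} but simply by locality of the anomaly: since $A_1(\bar\delta_{\ba}S_0)=0$ (it is a c-number of ghost number $1$, which must vanish), one may without loss choose $\ba$ supported in the on-shell region, so that \eqref{AnWI} applies to all insertions that arise.
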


\begin{proof}
As the anomaly is local and\footnote{This follows from the fact that the \lhs is a c-number (by Lemma~\ref{lemma:anomaly-field-ind} and Lemma~\ref{lemma:anomaly-linear}) of ghost number 1. But no such $c$-number exists.} 
\beq
\label{eq:A_1_delta_S_0}
 A_1(\bar \delta_{\ba} S_0) = 0,
\eeq
we may choose $\ba$ to be supported in the region $\cU' \supset \cU$ in which the background $\bar \cA$ is on-shell. As nilpotency of $s_0$ and the anomalous Ward identity \eqref{AnWI} also hold on functionals supported in $\cU'$, we may thus use perturbative agreement \eqref{eq:PA_YM} and \eqref{AnWI} to obtain
\begin{align}
s_0 ( {\delta}^\ret_{\ba} T( \eti{F}) )&=  \ibar T \left( \left\{ s_0 F + \tfrac{1}{2} (F, F) + A( \et{F}) \right\}  \otimes   \ibar \bar{\delta}_{\ba} F  \otimes  \eti{F} \right) \nn \\
& \quad + \ibar T \left( \{( F,  \bar{\delta}_{\ba}( S_0 + F) ) + s_0 \bar{\delta}_{\ba} F + A( \bar{\delta}_{\ba} (S_0 + F) \otimes \et{F}) \} \otimes \eti{F} \right) \nn \\
& \quad + \ibar R \left( \left\{ s_0 F + \tfrac{1}{2}(F, F) + A(\et{F})  \right\} \otimes \eti{F}; \ibar \bar{\delta}_{\ba} S_0  \right) \nn \\
\label{eq:s_0_delta_ret_T_F}
& \quad + \ibar R( \eti{F}; s_0 \bar \delta_{\ba} S_0),
\end{align}
where we have again used \eqref{eq:A_1_delta_S_0}.
Regarding the last term on the r.h.s., one computes
\[
 s_0 \bar \delta_\ba S_0 = \int A^I_\mu [ (\bar P^\lin \ba)^\mu, C ]^I \vol.
\]
In particular, this is supported outside of $\cU$. We may thus decompose as
\[
 s_0 \bar \delta_\ba S_0 = (s_0 \bar \delta_\ba S_0)_- + (s_0 \bar \delta_\ba S_0)_+,
\]
with $\supp (s_0 \bar \delta_\ba S_0)_\pm \subset J^\pm(\cU) \setminus \cU$. It follows that the last term in 
\eqref{eq:s_0_delta_ret_T_F} may be rewritten as a commutator,
\[
 R( \eti{F}; s_0 \delta_{\ba} S_0) = - [ T( (s_0 \bar \delta_\ba S_0)_- ), T(\eti{F}) ]_\star.
\]
On the other hand, we have
\begin{align*}
{\delta}^\ret_{\ba}( s_0 T( \eti{F})) &= \ibar  T \left(  \left\{ s_0 F + \tfrac{1}{2}(F, F) + A( \et{F})  \right\} \otimes \ibar \bar{\delta}_{\ba} F  \otimes \eti{F} \right) \\
& \quad + \ibar T \left(  \left\{ s_0 \bar{\delta}_{\ba} F + (F,  \bar{\delta}_{\ba} (S_0 + F)) + \bar{\delta}_{\ba} A(\et{F})  \right\} \otimes \eti{F} \right)  \\
& \quad + \ibar  R \left(  \left\{ s_0 F + \tfrac{1}{2}(F, F) + A(\et{F})  \right\} \otimes \eti{F};  \ibar \bar{\delta}_{\ba} S_0  \right).
\end{align*} 
We thus obtain
\begin{align}\label{eq:[delta^r,s_0]}
[ {\delta}^\ret_{\ba}, s_0] T(\eti{F}) & = \ibar  T \left( \{  \bar{\delta}_{\ba} A(\et{F}) - A(\bar{\delta}_{\ba}(S_0 +F) \otimes \et{F}) \} \otimes \eti{F} \right) + \ibar [ T( (s_0 \bar \delta_{\ba} S_0)_- ), T(\eti{F}) ]_\star .
\end{align}
In particular, $[ {\delta}^\ret_{\ba}, s_0]$ acts on linear (anti-) fields as
\begin{align*}
[ {\delta}^\ret_{\ba}, s_0] \Phi(x) & = \ibar  \{  \bar{\delta}_{\ba} A_1(\Phi(x))  - A_1(\bar{\delta}_{\ba} \Phi(x))  - A_2(\bar{\delta}_{\ba}S_0 \otimes \Phi(x)) \}  + \ibar [ T( ( s_0 \bar \delta_\ba S_0 )_- ), \Phi(x) ]_\star \\
 & = \ibar [ T( ( s_0 \bar \delta_\ba S_0 )_- ), \Phi(x) ]_\star,
\end{align*}
for $x \in \cU$,
since the anomaly of a linear (anti-) field vanishes, \cf Lemma~\ref{lemma:anomaly-linear}.\footnote{This can also be shown directly, using the definition of $s_0$ and $\delta^\ret_\ba$ on $\Phi^i$.} The action of both $s_0$ and $\delta^\ret_{\ba}$, and thus also of $[ {\delta}^\ret_{\ba}, s_0]$, on non-linear functionals is defined by their action on linear functionals, i.e.,\footnote{To be precise, $\delta^\ret_\ba$ also acts non-trivially on background fields, by $\delta^\ret_\ba \bar \cA = \ba$. However, as $s_0$ acts trivially on background fields, so does $[\delta^\ret_\ba, s_0]$.}
\begin{align*}
[ {\delta}^\ret_{\ba}, s_0] T(\eti{F}) & = \int \left\{ [ {\delta}^\ret_{\ba}, s_0] \Phi^i(x) \tfrac{\delta}{\delta \Phi^i(x)}T(\eti{F}) + [ {\delta}^\ret_{\ba}, s_0] \Phi_i^\ddag(x) \tfrac{\delta}{\delta \Phi_i^\ddag(x)}T(\eti{F}) \right\} \\
& = \int \ibar [ T( ( s_0 \bar \delta_\ba S_0 )_- ), \Phi^i(x) ]_\star \tfrac{\delta}{\delta \Phi^i(x)} T(\eti{F}) \vol(x).
\end{align*}
Comparing with \eqref{eq:[delta^r,s_0]} shows that we are finished if we can show that
\begin{align*}
 0 & = \ibar [ T( (s_0 \bar \delta_{\ba} S_0)_- ), - ]_\star - [\delta^\ret_\ba, s_0] - \\
 & =  \ibar [ T( (s_0 \bar \delta_{\ba} S_0)_- ), - ]_\star - \int \ibar [ T( ( s_0 \bar \delta_\ba S_0 )_- ), \Phi^i(x) ]_\star \tfrac{\delta}{\delta \Phi^i(x)} - \vol(x).
\end{align*}
The \rhs of this equation is of the form
\[
 \int W^{i j}(x,y) \tfrac{\delta}{\delta \Phi^i(x)} \tfrac{\delta}{\delta \Phi^j(y)} - \vol(x) \vol(y),
\]
with some smooth\footnote{Smoothness follows from the Hadamard property of the two-point function and \cite{HoermanderI}, Thm.~8.2.14.} kernel $W^{ij}$ which vanishes unless $\eps_i + \eps_j \mod 2 = 1$. It thus suffices to show that this vanishes when acting on $\Phi^i(x) \Phi^j(y)$ with $\eps_i + \eps_j \mod 2 = 1$. By this restriction, we have $T_2( \Phi^i(x) \otimes \Phi^j(y) ) = \Phi^i(x) \Phi^j(y)$ and plugging $F = \lambda_1 \Phi^i(x) + \lambda_2 \Phi^j(y)$ in \eqref{eq:[delta^r,s_0]} and considering the equation at $O(\lambda_1 \lambda_2)$, we indeed find that $W^{i j}$ must vanish, again by the absence of anomalies of linear fields, Lemma~\ref{lemma:anomaly-linear}.
\end{proof}

For the following considerations, it turns out to be convenient to introduce the notation
\beq
\label{eq:def-us}
 \us (\bar{\delta}_{\ba} \Psi ) \defeq \cD_{\ba} S -  {\delta}_{\ba} S_0 = \bar \delta_{\ba} S - \delta_{\ba} {S_\ia},
\eeq
even though outside of $\cR$, $\us$ does not need to be well-defined as an operator on local functionals. The important point is that in $\cR$, i.e., when restricted to configurations supported in $\cR$, $\us$ reduces to the BV-BRST differential $s$, \cf Proposition~\ref{prop:cD-S}.

\begin{corollary}
\label{cor:cDhatA}
If perturbative agreement \eqref{eq:PA_YM} holds, then, for $F$ supported in $\cU$,
\[
 \cD_{\ba} A^\ia(\et{F}) = A^\ia( \cD_{\ba} F \otimes \et{F}) + A^\ia( \us \bar \delta_{\ba} \Psi \otimes \et{F}),
\]
with $\us \bar \delta_{\ba} \Psi$ defined by \eqref{eq:def-us}. In particular, for $F$ supported in $\cR$, and $n \geq 1$,
\beq
\label{eq:cD-hatA}
 \cD_{\ba} A^\ia( F^{\otimes n} ) = n A^\ia( \cD_{\ba} F \otimes F^{\otimes (n-1)}) + A^\ia( s \bar \delta_{\ba} \Psi \otimes F^{\otimes n}).
\eeq
\end{corollary}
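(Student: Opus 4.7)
The plan is to use the defining factorization $A^\ia_n(F_1 \otimes \dots \otimes F_n) = A(F_1 \otimes \dots \otimes F_n \otimes \et{S_\ia})$, which (by graded symmetry and multilinearity of each $A_n$) gives the generating-functional relation $A(\et{F} \otimes \et{S_\ia}) = A(\et{(F+S_\ia)})$. This lets me treat $A^\ia(\et{F})$ as the value of $A$ on a single combined argument, and apply $\cD_{\ba} = \bar{\delta}_{\ba} - \delta_{\ba}$ term by term using the two structural inputs already available: Theorem~\ref{thm:anomaly-bg-variation} for the background variation and Lemma~\ref{lemma:anomaly-field-ind} (field independence of the anomaly) for the dynamical variation.

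First I would apply $\bar{\delta}_{\ba}$. Since $F$ is supported in $\cU$ and $\supp S_\ia \subset \supp \lambda \subset \cU$, the combined argument $F+S_\ia$ lies in $\cU$ and Theorem~\ref{thm:anomaly-bg-variation} yields
\begin{equation*}
\bar{\delta}_{\ba} A(\et{(F+S_\ia)}) = A\bigl(\bar{\delta}_{\ba}(S_0 + F + S_\ia) \otimes \et{(F+S_\ia)}\bigr).
\end{equation*}
Splitting $\bar{\delta}_{\ba}(S_0 + F + S_\ia) = \bar{\delta}_{\ba} F + \bar{\delta}_{\ba} S$ and translating back via $A(X \otimes \et{(F+S_\ia)}) = A^\ia(X \otimes \et{F})$ produces $A^\ia(\bar{\delta}_{\ba} F \otimes \et{F}) + A^\ia(\bar{\delta}_{\ba} S \otimes \et{F})$. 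Next I would compute $\delta_{\ba} A^\ia(\et{F})$ via the analogous Leibniz rule $\delta_{\ba} A(\et{G}) = A(\delta_{\ba} G \otimes \et{G})$ supplied by field independence, applied with $G = F + S_\ia$, obtaining $A^\ia(\delta_{\ba} F \otimes \et{F}) + A^\ia(\delta_{\ba} S_\ia \otimes \et{F})$. Subtracting the two lines and identifying $\bar{\delta}_{\ba} S - \delta_{\ba} S_\ia$ with $\us(\bar{\delta}_{\ba} \Psi)$ via \eqref{eq:def-us} gives the first identity.

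For the second identity, I would extract the coefficient of $F^{\otimes n}$ with $n \geq 1$ from both sides of the generating-functional identity just derived. Restricting to $F$ supported in $\cR$, Proposition~\ref{prop:cD-S} licenses the replacement $\us(\bar{\delta}_{\ba} \Psi) = s(\bar{\delta}_{\ba} \Psi)$, and the combinatorial factor $n$ arises from the $n$ slots available to $\cD_{\ba} F$ among the $n$ copies of $F$. The main thing to monitor throughout is the compatibility of support conditions: Theorem~\ref{thm:anomaly-bg-variation} is valid only when arguments lie in $\cU$, and the replacement of $\us$ by $s$ is valid only when arguments lie in $\cR$; the two support hypotheses in the corollary are precisely what make these two steps legitimate.
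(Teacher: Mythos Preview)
Your proposal is correct and follows essentially the same route as the paper's proof: apply Theorem~\ref{thm:anomaly-bg-variation} for the $\bar\delta_{\ba}$ part, field independence (Lemma~\ref{lemma:anomaly-field-ind}) for the $\delta_{\ba}$ part, and combine via the definition \eqref{eq:def-us}. The only small addition the paper makes explicit for the second identity is the \emph{locality} of the anomaly (support on the total diagonal): this is what ensures that, with $F$ supported in $\cR$, the term $A^\ia(\us\,\bar\delta_{\ba}\Psi\otimes F^{\otimes n})$ only sees $\us\,\bar\delta_{\ba}\Psi$ restricted to $\cR$, where it coincides with $s\,\bar\delta_{\ba}\Psi$.
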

\begin{proof}
By field independence of the anomaly, Lemma~\ref{lemma:anomaly-field-ind}, we have
\[
 \delta_{\ba} A^\ia( \et{F} ) = A^\ia ( \delta_{\ba} F \otimes \et{F} ) + A^\ia(\delta_{\ba} {S_\ia} \otimes \et{F}).
\]
With Theorem~\ref{thm:anomaly-bg-variation}, we obtain
\[
 \cD_{\ba} A^\ia( \et{F} ) = A^\ia( \cD_{\ba} F \otimes \et{F}) + A^\ia( \{ \bar \delta_{\ba} (S_0 + {S_\ia}) - \delta_{\ba} {S_\ia} \} \otimes \et{F}),
\]
which proves the first claim.
The locality of the anomaly and the fact that on $\cR$, $\us \bar \delta_{\ba} \Psi = s \bar \delta_{\ba} \Psi$, then leads to \eqref{eq:cD-hatA}.
\end{proof}

\subsection{Background independence}\label{subsection:back-ind}
Having introduced the setting for the quantum Yang-Mills theory perturbatively constructed around each background $\bar{\cA}$, we now turn to the formulation of background independence. In analogy with the case of scalar field theory (Section \ref{section:BI-scalar}), we can identify the theories defined on different backgrounds via the retarded variation $ \delta^\ret_{\ba}$.
As shown in Section~\ref{sec:PA},
we can assume that perturbative agreement \eqref{eq:PA_YM} holds, and we will do so from now on.
Using this variation, we want to define a flat connection $\fD_{\ba}$ on the bundle
\[
\gls{W-bundle-coh} \defeq \bigsqcup_{\bar{\cA}} \bF_{\bar{\cA}} \rightarrow \sS_\YM,
\]
where $\sS_\YM$ is the manifold of background field configurations which are solutions to the Yang-Mills equation, \cf also the discussion following \eqref{eq:lin-YM-eq}. A connection is here defined in complete analogy to Definition~\ref{def:Connection}. The local algebras $\bF_{\bar \cA}(\cL)$ are then generated by $T^\ia_{\bar \cA}(\eti{F})$ with $F$ supported in $\cL$ and fulfilling \eqref{eq:s-nonlin-F=0}. We would also like to ensure that in the classical limit, it should reduce to the connection $\hat{\cD}_{\ba}$ on classical local functionals, in the sense that
\beq
\label{eq:fD_cD_A}
 \fD_{\ba} T^\ia_{\bar \cA}(\eti{F}) \eqF \ibar T^\ia_{\bar \cA}( \hat{\cD}_{\ba} F \otimes \eti{F}) + O(\hbar)
\eeq
for all $F$ fulfilling \eqref{eq:s-nonlin-F=0}.

We proceed by defining $\fD_{\ba}$ on the full bundle
\[
 \gls{W-bundle-A} \defeq \bigsqcup_{\bar{\cA}} \bW_{\bar{\cA}} \rightarrow \mathcal{S}_\YM
\]
and showing that it reduces to a connection on $\bF_\YM$, fulfilling the required properties. In particular, we have to ensure that
\begin{enumerate}[(i)]
\item it is well-defined on the on-shell algebra;
\item it is well-defined on $[Q^\ia_{\bar \cA}, -]_\star$ cohomology, i.e., it fulfills \eqref{eq:fD_well-defined} on-shell, ensuring that it maps kernel and image of $[Q^\ia_{\bar \cA}, -]_\star$ onto themselves;
\item it is a derivation, i.e., fulfills \eqref{eq:Leibniz};
\item it respects space-time localization in the sense defined in \eqref{eq:ConnectionLocalization}.
\end{enumerate}
A crucial requirement for the fulfillment of these properties will be the absence of a certain anomaly. We will later show that time-ordered products can indeed be defined accordingly.

\begin{remark}
\label{rem:CutoffIssue}
There is a subtlety regarding the definition of the bundles $\bF_\YM$ and $\bW_\YM$. We recall that the backgrounds $\bar \cA$ are only required to be on-shell in $\cU$ (and to coincide with an arbitrary reference connection $\cA_0$ outside of $\cV$). Hence, their behavior in $\cV \setminus \cU$ is arbitrary. A further requirement should thus be that the construction is independent of the choice of a representative, i.e., the connection $\fD_\ba$ should vanish for $\ba$ supported in $\cV \setminus \cU$, when applied to $T^\ia_{\bar \cA}(\eti{F})$ for $F$ supported in $\cR$. That this is indeed the case is checked below, \cf Remark~\ref{rem:fD}.
\end{remark}

To construct the desired connection $\fD$, it is useful to split the connection $\hat{\cD}$ on local functionals as 
\[
 \hat{\cD}_\ba = \left\{ \bar \delta_\ba - ( -, \bar \delta_\ba \Psi) \right\} - \left\{ \delta_\ba - ( -, \delta_\ba \Psi) \right\},
\]
where the two terms on the \rhs are obtained by applying the canonical gauge fixing transformation as in \eqref{eq:tilde-nabla-O' = nabla-O} separately to $\bar \delta_\ba$ and $\delta_\ba$. Hence, it is natural to see the first term on the \rhs as the gauge-fixed background variation, and replace it by the retarded variation. Our first tentative definition is thus
\begin{equation*}
\fD_{\ba}^0 \defeq \delta^\ret_{\ba} - {\delta}_{\ba} + (- , \delta_{\ba} \Psi).
\end{equation*}
That this is a natural starting point is evidenced by the following Lemma:
\begin{lemma}
The operator $\fD^0_\ba$ is well-defined on the on-shell algebra.
\end{lemma}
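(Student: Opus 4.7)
The plan is to show that $\fD^0_\ba = \delta^\ret_\ba - \delta_\ba + (-, \delta_\ba \Psi)$ sends the ideal $\bJ_{\bar \cA}$ generated by the free equations of motion into itself, so that it descends to the on-shell quotient $\bW_{\bar \cA}/\bJ_{\bar \cA}$.

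First I would verify that each of the three summands is a graded derivation of the $\star$-product on $\bW_{\bar \cA}$. The retarded variation $\delta^\ret_\ba$ is a derivation because $\tau^\ret$ is an algebra isomorphism. The dynamical variation $\delta_\ba$ is a derivation because the Hadamard two-point function \eqref{eq:2ptYM} contracts only dynamical fields, and $\delta/\delta A$ commutes with these contractions. Finally, $(-, \delta_\ba \Psi)$ is a graded derivation of $\star$ because $\delta_\ba \Psi = \int \bar C^I \bar \nabla^\mu \ba_\mu^I \vol$ is anti-field independent by \eqref{eq:GaugeFixingFermion}, so the antibracket reduces to a $\delta/\delta \Phi^\ddag$-term, which again commutes with the $\star$-contractions.

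Since $\fD^0_\ba$ is a derivation, it suffices to check $\fD^0_\ba(s_0 \Phi^\ddag_i) \in \bJ_{\bar \cA}$ on each of the four generators listed in Table~\ref{table1}. For the retarded piece I would use the natural family $F_s := s_0^{(s)}\Phi^\ddag_i \in \bJ_{\bar \cA_s}$; combining the intertwining $\bar P_{(s)} r_{(s)}\Phi = \bar P \Phi - (\hat K - \hat K_{(s)})\Phi^\ddag$ of the retarded wave operator with \eqref{eq:s0_Phi_ddag}, one checks that $\tau^\ret_{(s)} F_s$ is in fact independent of $s$, so $\delta^\ret_\ba(s_0 \Phi^\ddag_i) = 0$. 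For the remaining two summands, a row-by-row computation from Table~\ref{table1} gives: for $\Phi^i = A$, $-\delta_\ba(s_0 A^\ddag) = -(\bar P^\lin \ba)\vol$, which vanishes because $\ba \in T_{\bar \cA}\sS_\YM$ satisfies $\bar P^\lin \ba = 0$ on $\cU$ (to which all observables of interest are confined by~\eqref{eq:T_support}), while $(s_0 A^\ddag, \delta_\ba \Psi) = 0$ by anti-field independence of $s_0 A^\ddag$; for $\Phi^i = B$ the contributions $-\delta_\ba(s_0 B^\ddag) = -(\bar \nabla^\mu \ba_\mu)\vol$ and $(s_0 B^\ddag, \delta_\ba \Psi) = (\bar \nabla^\mu \ba_\mu)\vol$ cancel, the latter being produced by the $\bar C^\ddag$-term in $s_0 B^\ddag$ paired with the $\bar C$-dependence of $\delta_\ba \Psi$; for $\Phi^i = C, \bar C$ both terms vanish outright.

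The decisive feature is the precise cancellation on the $B$-row, which is exactly what selects $(-, \delta_\ba \Psi)$ as the correct gauge-fixing correction of the naive scalar-field combination $\delta^\ret_\ba - \delta_\ba$ appearing in Proposition~\ref{nabla-hbar}. The main technical hurdle lies in the careful bookkeeping of signs in the graded antibracket and of the density/volume conventions entering the intertwining relation for the retarded wave operator; once these are handled, the cancellations above are direct and no further input is required.
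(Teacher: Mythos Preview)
Your argument is correct and follows essentially the same route as the paper. The paper's proof is more concise: rather than checking each row of Table~\ref{table1} separately, it observes (via \eqref{eq:delta_bar_C_as_antibracket}) that $-\delta_{\ba} + (-,\delta_{\ba}\Psi)$ is a single directional field/anti-field derivative in the direction $(A,B,C,\bar C,A^\ddag,B^\ddag,C^\ddag,\bar C^\ddag) = (\ba,0,0,0,0,0,0,\bar\nabla^\mu\ba_\mu\,\vol)$, and then checks in one stroke that this tuple solves all four equations $s_0\Phi^\ddag_i=0$; your row-by-row cancellations (in particular the one on the $B$-row) are exactly this check unpacked, and your explicit verification that $\tau^\ret_{(s)}(s_0^{(s)}\Phi^\ddag_i)$ is $s$-independent reproduces what the paper simply cites from the construction of the retarded M{\o}ller operator in Section~\ref{sec:PA}.
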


\begin{proof}
As the retarded variation is well-defined on the on-shell algebra, it remains to check for the last two terms. We have
\beq
\label{eq:delta_bar_C_as_antibracket}
 (- , \delta_{\ba} \Psi) = - \skal{\tfrac{\delta}{\delta \bar C^\ddag} -}{\bar \nabla^\mu \ba_\mu \vol},
\eeq
so that the last two terms are derivatives \wrt (anti-) fields.
Such a derivative is well-defined on the on-shell algebra if it acts in the direction of a solution to the free equations of motion, i.e., those obtained by $s_0 \Phi^\ddag_i$, \cf Table~\ref{table1}. The perturbation given by
\[
 (A, B, C, \bar C, A^\ddag, B^\ddag, C^\ddag, \bar C^\ddag) = (\ba, 0, 0, 0, 0, 0, 0, \bar \nabla^\mu \ba_\mu \vol)
\]
indeed fulfills that requirement, as $\ba$ is, in $\cU$, a solution to \eqref{eq:lin-YM-eq}.
\end{proof}

\subsubsection{Well-definedness of the connection on the quantum BRST cohomology}
\label{subsubsec:Welldefined}

Similarly to the case of scalar field theory (Proposition \ref{nabla-hbar}), $\fD_{\ba}^0$ acts as
\[
\fD_{\ba}^0 T^\ia_{\bar{\cA}}(\eti{F}) = \ibar T^\ia_{\bar{\cA}}( \cD^0_{\ba} F \otimes \eti{F} )  +  \ibar R^\ia_{\bar{\cA}}(\eti{F} ; (\cD^0_{\ba} S_\ia +  \bar{\delta}_{\ba} S_0)),
\]
where
\begin{equation*}
\cD^0_{\ba} \defeq \cD_{\ba} + (- , \delta_{\ba} \Psi).
\end{equation*}
We note that, by 
$({S_\ia}, {\delta}_{\ba} \Psi) =0$, we have 
\[
\cD^0_{\ba} {S_\ia} = \cD_{\ba} {S_\ia}.
\]
With the notation \eqref{eq:def-us}, we thus obtain
\begin{align}\label{eq:nabla-hbar-O}
\fD_{\ba}^0 T^\ia_{\bar{\cA}}(\eti{F}) = \ibar T^\ia_{\bar{\cA}}( \cD^0_{\ba} F )  +  \ibar R^\ia_{\bar{\cA}}(\eti{F} ; \us (\bar{\delta}_{\ba} \Psi )).
\end{align}
Note the presence of the second term on the \rhs of \eqref{eq:nabla-hbar-O} which is absent in the case of scalar field theory, cf. \eqref{eq:nablahbar-nabla}. It leads to a violation of the locality requirement \eqref{eq:ConnectionLocalization}. This term appears because the gauge-fixing fermion breaks the split independence of the action $S$, \cf \eqref{eq:ShiftSymmetryYM_gf}.

We first state a lemma which is crucial for the proof of the following theorem.

\begin{lemma}
\label{lemma:C=0}
For all $F$ supported in $\cR$, it holds 
\begin{multline}
\label{eq:Def_C}
 \cD^0_{\ba} \{ s F + \tfrac{1}{2} (F, F) + A^\ia(\et{F}) \} - s \cD^0_{\ba} F - (s \bar \delta_{\ba} \Psi, F) - (F, \cD^0_{\ba} F) \\ - A^\ia( \cD^0_{\ba} F \otimes \et{F}) - A^\ia( s \bar \delta_{\ba} \Psi \otimes \et{F}) = 0.
\end{multline}
\end{lemma}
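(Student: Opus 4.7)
The plan is to verify the identity by separately computing $\cD^0_\ba$ applied to each of the three summands $sF$, $\tfrac{1}{2}(F,F)$ and $A^\ia(\et{F})$, and then matching against the remaining terms. A key preliminary fact is that $\cD^0_\ba$ is a (graded, even) derivation of the antibracket: the two ``variational'' pieces $\bar\delta_\ba$ and $\delta_\ba$ in $\cD_\ba$ commute with functional differentiation and therefore Leibniz with respect to $(-,-)$, while the third piece $(-,\delta_\ba\Psi)$ Leibnizes by the graded Jacobi identity \eqref{eq:classical-Jacobi-id}.

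First I would compute $\cD^0_\ba S$. Using Corollary in Section~\ref{section:YM-classical} (the equation \eqref{eq:cD_S}), $\cD_\ba S = s\cD_\ba \Psi = (S,\cD_\ba \Psi)$ in $\cR$, and $(S,\delta_\ba \Psi) = s\delta_\ba \Psi$ is automatic. Adding these and using $\cD_\ba \Psi + \delta_\ba \Psi = \bar\delta_\ba \Psi$ yields
\[
\cD^0_\ba S = s\bar\delta_\ba \Psi.
\]
Applying the Leibniz property to $sF = (S,F)$ and $\tfrac12(F,F)$ then gives the two ``classical'' identities
\[
\cD^0_\ba (sF) = (s\bar\delta_\ba \Psi,F) + s\,\cD^0_\ba F,\qquad
\cD^0_\ba\bigl(\tfrac12(F,F)\bigr) = (F,\cD^0_\ba F),
\]
which cancel exactly the four explicit non-anomalous terms on the left-hand side of \eqref{eq:Def_C}.

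It remains to show that
\[
\cD^0_\ba A^\ia(\et{F}) = A^\ia(\cD^0_\ba F \otimes \et{F}) + A^\ia(s\bar\delta_\ba\Psi\otimes \et{F}).
\]
Splitting $\cD^0_\ba = \cD_\ba + (-,\delta_\ba\Psi)$, Corollary~\ref{cor:cDhatA} handles $\cD_\ba A^\ia(\et{F})$, producing precisely the second term and an $A^\ia(\cD_\ba F\otimes \et{F})$. It therefore suffices to show
\[
(A^\ia(\et{F}),\delta_\ba\Psi) = A^\ia((F,\delta_\ba\Psi)\otimes \et{F}),
\]
which would combine with $A^\ia(\cD_\ba F\otimes \et{F})$ to form $A^\ia(\cD^0_\ba F\otimes \et{F})$. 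To establish this, I would use \eqref{eq:delta_bar_C_as_antibracket}: the operator $(-,\delta_\ba\Psi)$ is simply a functional derivative with respect to $\bar C^\ddag$ smeared against $\bar\nabla^\mu \ba_\mu$. By the (anti-)field independence of the anomaly (Lemma~\ref{lemma:anomaly-field-ind}) applied to $A^\ia(\et{F}) = \sum_{n,m}\frac{1}{n!m!}A_{n+m}(F^{\otimes n}\otimes S_\ia^{\otimes m})$, the derivative distributes as
\[
(A^\ia(\et{F}),\delta_\ba\Psi) = A^\ia\bigl((F,\delta_\ba\Psi)\otimes \et{F}\bigr) + A^\ia\bigl((S_\ia,\delta_\ba\Psi)\otimes \et{F}\bigr).
\]

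The key observation — and the only place where the particular form of the gauge-fixing fermion enters — is that the second term vanishes because $(S_\ia,\delta_\ba\Psi) = -\int \tfrac{\delta^R S_\ia}{\delta \bar C^{\ddag I}}\bar\nabla^\mu \ba_\mu^I\,\vol$ and $S_\ia$ contains no $\bar C^\ddag$-dependence: inspecting $S_\YM$, $S_\rsc$ and $s\Psi$, the only anti-field bilinear coupling to $\bar C^\ddag$ is the quadratic piece $-\int B^I \bar C^{\ddag I}$ from $s\bar C = B$, which lies entirely in $S_0$. I expect this last bookkeeping step — identifying that the potentially obstructing $(S_\ia,\delta_\ba\Psi)$ term is killed by the specific absence of $\bar C^\ddag$ in $S_\ia$ — to be the substantive point of the proof; the rest is a careful Leibniz computation combined with the classical identities of Section~\ref{section:YM-classical} and Corollary~\ref{cor:cDhatA}.
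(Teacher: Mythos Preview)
Your proof is correct and takes essentially the same approach as the paper. The only organizational difference is that the paper writes $\cD^0_\ba = \hat{\cD}_\ba + (-,\bar\delta_\ba\Psi)$ and invokes the pre-established Leibniz property and $s$-compatibility of $\hat{\cD}_\ba$ from Theorem~\ref{thm:tilde-nabla}, whereas you work directly with $\cD^0_\ba$ and compute $\cD^0_\ba S = s\bar\delta_\ba\Psi$; the key ingredients (the identity \eqref{eq:cD_S}, the graded Jacobi identity, Corollary~\ref{cor:cDhatA}, Lemma~\ref{lemma:anomaly-field-ind} via \eqref{eq:delta_bar_C_as_antibracket}, and the observation that $S_\ia$ contains no $\bar C^\ddag$) are identical.
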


\begin{proof}
As a consequence of \eqref{eq:[tilde-nabla,s]=0}, \eqref{eq:nabla-tilde-anti-bracket}, the graded Jacobi identity \eqref{eq:classical-Jacobi-id}, and \eqref{eq:cD-hatA}, the \lhs equals
\[
 ( s F, \bar \delta_{\ba} \Psi ) - s ( F, \bar \delta_{\ba} \Psi ) - (s \bar \delta_{\ba} \Psi, F) + (A^\ia(\et{F}), \delta_{\ba} \Psi) - A^\ia((F, \delta_{\ba} \Psi) \otimes \et{F}) .
\]
The first three terms cancel due to \eqref{eq:s_as_antibracket} and \eqref{eq:classical-Jacobi-id} and the last two terms due to Lemma~\ref{lemma:anomaly-field-ind}, taking into account \eqref{eq:delta_bar_C_as_antibracket} and the fact that $S_\ia$ is independent of $\bar C^\ddagger$.
\end{proof}

\begin{theorem}
\label{thm:fDQ}
Assuming
\begin{align}
\label{eq:A1(delta-Psi)=0}
 A^\ia_1(\bar{\delta}_{a} \Psi) = 0, \qquad \forall a, \ \supp a \subset \cR,
\end{align}
with $a$ not necessarily a solution to \eqref{eq:lin-YM-eq}, the operator 
\begin{align}
\gls{fD-A}& \defeq \fD_{\ba}^0 + \ibar [  (  \us ( \bar \delta_{\eta \ba} \Psi))^\ia_{\bar \cA}, - ]_\star \nn \\
\label{eq:def_fD_A}
 & = \delta^\ret_{\ba} - \delta_{\ba} - (-, \delta_{\ba} \Psi) + \ibar [  (  \us ( \bar \delta_{\eta \ba} \Psi))^\ia_{\bar \cA}, - ]_\star,
\end{align}
where $ \us ( \bar \delta_{\eta \ba} \Psi)$ is defined in \eqref{eq:def-us} and $\eta$ is a smooth non-negative function supported on $J^-(\cR)$ and equal to $1$ on $J^-(\cR) \setminus \cR$, is well-defined on the on-shell $[Q^\ia_{\bar{\cA}}, -]_\star$ cohomology, in the sense that
\beq
\label{eq:fD_welldefined}
\fD_{\ba} [ Q^\ia_{\bar{\cA}}, T^\ia_{\bar\cA}(\eti{F})]_\star  -   [ Q^\ia_{\bar{\cA}}, \fD_{\ba} T^\ia_{\bar\cA}(\eti{F})]_\star \eqos 0
\eeq
for all $F$ supported in $\cR$.
On this cohomology, it is independent of the choice of $\eta$. Furthermore, for $F$ fulfilling \eqref{eq:s-nonlin-F=0}, we have
\beq
\label{eq:fD_T_int}
 \fD_{\ba} T^\ia_{\bar \cA}(\eti{F}) \eqF \ibar T^\ia_{\bar \cA}( \{ \hat{\cD}_{\ba} F + A^\ia( \bar \delta_{\ba} \Psi \otimes \et{F}) \} \otimes \eti{F} ).
\eeq
In particular, $\fD_{\ba}$ is a connection on $\bF_{\YM}$ fulfilling \eqref{eq:fD_cD_A}.
\end{theorem}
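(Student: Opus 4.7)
The plan is to establish \eqref{eq:fD_T_int} first; the other claims (well-definedness on cohomology, $\eta$-independence, and the connection axioms) then follow with comparatively little additional work. Substituting \eqref{eq:nabla-hbar-O} into the definition \eqref{eq:def_fD_A} gives
\[
\fD_{\ba} T^\ia_{\bar{\cA}}(\eti{F}) = \ibar T^\ia_{\bar{\cA}}(\cD^0_{\ba} F\otimes \eti{F}) + \ibar R^\ia_{\bar{\cA}}(\eti{F};\us(\bar{\delta}_{\ba}\Psi)) + \ibar[T^\ia_{\bar{\cA}}(\us(\bar{\delta}_{\eta\ba}\Psi)), T^\ia_{\bar{\cA}}(\eti{F})]_\star.
\]
First I would decompose $\us(\bar{\delta}_{\ba}\Psi)$ according to whether its support lies in $J^-(\cR)\setminus\cR$, in $\cR$, or in $M\setminus J^-(\cR)$, and track each piece through both the retarded product and the added commutator. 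By \eqref{eq:R_1_int} and the causal factorization axiom \eqref{CFaxiom}, the strict-future component contributes zero to the retarded product, while the strict-past component (on which $\eta=1$) combines with the added commutator into an advanced-type expression $T^\ia_{\bar{\cA}}((\cdot)\otimes\eti{F})-T^\ia_{\bar{\cA}}(\eti{F})\star T^\ia_{\bar{\cA}}(\cdot)$ that likewise vanishes by causal factorization. What survives is an $\cR$-localized insertion, which by the corollary to Proposition~\ref{prop:cD-S} agrees there with $s\bar{\delta}_{\ba}\Psi$.

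To convert this residual contribution into the classical expression on the right-hand side of \eqref{eq:fD_T_int} modulo the ideal $\bJ_{\bar{\cA}}$ and $\Ran[Q^\ia_{\bar{\cA}},-]_\star$, I would linearize the interacting Ward identity \eqref{int-WI} by substituting $F\to F+\alpha\,\bar{\delta}_{\ba}\Psi$ with a Grassmann-odd parameter $\alpha$ and extracting the $O(\alpha)$ component. Using that $F$ satisfies \eqref{eq:s-nonlin-F=0}, the quadratic-in-$F$ term drops out and one obtains a trading identity of the schematic form
\[
T^\ia_{\bar{\cA}}(s\bar{\delta}_{\ba}\Psi\otimes\eti{F})\eqF \pm T^\ia_{\bar{\cA}}\big(\{(F,\bar{\delta}_{\ba}\Psi)+A^\ia(\bar{\delta}_{\ba}\Psi\otimes\et{F})\}\otimes\eti{F}\big),
\]
with signs pinned down by the graded symmetry of the antibracket. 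A single-insertion version of the same argument, combined with the hypothesis \eqref{eq:A1(delta-Psi)=0} and the graded Leibniz rule applied to $[Q^\ia_{\bar{\cA}},T^\ia_{\bar{\cA}}(\eti{F})]_\star\eqos 0$, shows that any star-factorized pieces $T^\ia_{\bar{\cA}}(s\bar{\delta}_{\ba|_\cR}\Psi)\star T^\ia_{\bar{\cA}}(\eti{F})$ appearing in the decomposition of the retarded product lie in $\Ran[Q^\ia_{\bar{\cA}},-]_\star$. Substituting and using $\cD^0_{\ba} F=\hat{\cD}_{\ba} F+(F,\bar{\delta}_{\ba}\Psi)$ produces \eqref{eq:fD_T_int}.

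With \eqref{eq:fD_T_int} in hand, $\eta$-independence on cohomology is immediate: two admissible cut-offs $\eta,\eta'$ differ only on $\cR$, so the change in $\fD_{\ba}$ is the $\cR$-localized commutator $\ibar[T^\ia_{\bar{\cA}}(\us(\bar{\delta}_{(\eta-\eta')\ba}\Psi)),T^\ia_{\bar{\cA}}(\eti{F})]_\star$, which by the same linearized IWI and the hypothesis \eqref{eq:A1(delta-Psi)=0} lies in $\Ran[Q^\ia_{\bar{\cA}},-]_\star$. Compatibility with the BRST cohomology, \eqref{eq:fD_welldefined}, is then obtained by computing $[Q^\ia_{\bar{\cA}},\fD_{\ba} T^\ia_{\bar{\cA}}(\eti{F})]_\star$ and $\fD_{\ba}[Q^\ia_{\bar{\cA}},T^\ia_{\bar{\cA}}(\eti{F})]_\star$ from \eqref{eq:fD_T_int}, \eqref{int-WI}, and the interacting consistency condition \eqref{eq:CC_int}, together with Lemma~\ref{lemma:C=0} and Corollary~\ref{cor:cDhatA} to control the background variation of the anomaly; the remaining obstruction reduces precisely to $A^\ia_1(\bar{\delta}_{\ba}\Psi)$-type terms killed by \eqref{eq:A1(delta-Psi)=0}. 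The derivation property, spacetime localization \eqref{eq:ConnectionLocalization}, and classical limit \eqref{eq:fD_cD_A} then follow directly from the explicit formula \eqref{eq:fD_T_int} together with the fact that $\fD^0_{\ba}$ is already a derivation on the on-shell algebra.

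The main obstacle is the support analysis in the first step: smoothness of $\eta$ prevents it from vanishing abruptly on the past boundary of $\cR$, so the causal-factorization cancellation between the retarded product and the commutator is only exact outside $\cR$, leaving a genuine $\cR$-localized remainder that must be processed by the linearized Ward identity. Perturbative agreement, invoked through Corollary~\ref{cor:cDhatA}, and the anomaly-freeness hypothesis \eqref{eq:A1(delta-Psi)=0} both enter essentially at precisely this step.
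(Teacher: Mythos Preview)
Your overall strategy matches the paper's: both use the three-region support decomposition of $\us(\bar{\delta}_{\ba}\Psi)$ (your past/in-$\cR$/future split is the paper's $\eta,\chi,\psi$ partition), both arrive at the intermediate formula
\[
\fD_{\ba} T^\ia_{\bar\cA}(\eti{F}) = \ibar T^\ia_{\bar \cA}(\cD^0_{\ba} F \otimes \eti{F}) + \ibar R^\ia_{\bar{\cA}}(\eti{F} ; s \bar{\delta}_{\chi \ba} \Psi),
\]
and both invoke Lemma~\ref{lemma:C=0}, Corollary~\ref{cor:cDhatA}, and the interacting Ward identity as the essential inputs. Your handling of the $\cR$-localized remainder via the linearized Ward identity (substituting $F\to F+\alpha\bar\delta_{\chi\ba}\Psi$) is equivalent to the paper's direct rewriting of the retarded product.

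There is, however, a logical ordering problem. You propose to derive \eqref{eq:fD_welldefined} \emph{from} \eqref{eq:fD_T_int}. But \eqref{eq:fD_T_int} is only established for $F$ satisfying \eqref{eq:s-nonlin-F=0}, whereas \eqref{eq:fD_welldefined} is asserted for \emph{all} $F$ supported in $\cR$---this generality is what guarantees that $\fD_{\ba}$ maps $\Ran[Q^\ia_{\bar\cA},-]_\star$ into itself, not just the kernel. The paper avoids this by proving \eqref{eq:fD_welldefined} first, directly from the intermediate formula above (valid for all $F$) together with Lemma~\ref{lemma:C=0}, and only afterward specializing to $F$ satisfying \eqref{eq:s-nonlin-F=0} to obtain \eqref{eq:fD_T_int}. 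Since you already derive the intermediate formula en route to \eqref{eq:fD_T_int}, the fix is simply to use \emph{that} formula (not \eqref{eq:fD_T_int}) as the input to the $[Q^\ia_{\bar\cA},-]_\star$-commutator computation, exactly as the paper does; Lemma~\ref{lemma:C=0} then closes the argument for general $F$.
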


\begin{remark}
\label{rem:fD}
The last term in the definition \eqref{eq:def_fD_A} can be motivated as follows: Assume that $\ba$ is supported outside of $\cU$. As discussed in Remark~\ref{rem:CutoffIssue}, the corresponding derivative $\fD_\ba$ should vanish on $T^\ia_{\bar \cA}(\eti{F})$ with $F$ localized in $\cR$. The first term on the \rhs of \eqref{eq:nabla-hbar-O} does indeed vanish (as the supports of $\ba$ and $F$ are disjoint), but the second one does not. However, due to causal factorization \eqref{eq:CausalFactorization_int} of interacting time-ordered products, it is cancelled by the commutator which is added in \eqref{eq:def_fD_A}. This is completely analogous to the unitary transformation (its generator in the present case) which compensates a change of the infra-red cut-off of the interaction in the so-called algebraic adiabatic limit, \cf \cite{Brunetti:1999jn}.
\end{remark}

\begin{proof}
We begin by proving the independence of the choice of $\eta$. The difference $\xi = \eta_1 - \eta_2$ of two admissible $\eta$s is supported in $\cR$, where $\us$ coincides with $s$. Hence, under the assumption \eqref{eq:A1(delta-Psi)=0} and using \eqref{eq:Q_F},
\[
 (  \us ( \bar \delta_{\eta_1 \ba} \Psi))^\ia_{\bar \cA} -  (  \us ( \bar \delta_{\eta_2 \ba} \Psi))^\ia_{\bar \cA} =  (  s ( \bar \delta_{\xi \ba} \Psi))^\ia_{\bar \cA} \eqos - \ibar [Q^\ia_{\bar{\cA}}, ( \bar \delta_{\xi \ba} \Psi)^\ia_{\bar \cA} ]_\star.
\]
But $[ [Q^\ia_{\bar{\cA}}, ( \bar \delta_{\xi \ba} \Psi)^\ia_{\bar \cA} ]_\star, -]_\star$ vanishes on $[Q^\ia_{\bar{\cA}}, -]_\star$ cohomology, as 
\[
 [[Q^\ia_{\bar{\cA}}, ( \bar \delta_{\xi \ba} \Psi)^\ia_{\bar \cA} ]_\star, - ]_\star = [Q^\ia_{\bar{\cA}}, [( \bar \delta_{\xi \ba} \Psi)^\ia_{\bar \cA}, - ]_\star]_\star + (-1)^\eps [( \bar \delta_{\xi \ba} \Psi)^\ia_{\bar \cA}, [Q^\ia_{\bar{\cA}}, - ]_\star]_\star,
\]
so that the action of $[ [Q^\ia_{\bar{\cA}}, ( \bar \delta_{\xi \ba} \Psi)^\ia_{\bar \cA} ]_\star, -]_\star$ on a $[Q^\ia_{\bar{\cA}}, -]_\star$ closed functional yields a $[Q^\ia_{\bar{\cA}}, -]_\star$ exact functional, i.e., a zero element in the cohomology.

We continue with proving \eqref{eq:fD_welldefined}. From equation \eqref{eq:nabla-hbar-O} and for all $F$, we have
\beq
\label{eq:fD_O_1st_step}
\fD_{\ba} T^\ia_{\bar\cA}( \eti{F}) = \ibar T^\ia_{\bar \cA}( \cD^0_{\ba} F \otimes \eti{F})  + \ibar R^\ia_{\bar{\cA}}( \eti{F} ; \us (\bar{\delta}_{\ba} \Psi )) + \ibar [  (\us (\bar \delta_{\eta \ba} \Psi))^\ia_{\bar \cA}, T^\ia_{\bar\cA}( \eti{F}) ]_\star.
\eeq
By the above, we may, without loss of generality, assume that $\supp \eta \cap J^+( \supp F ) = \emptyset$. We split
\[
  \us ( \bar \delta_{\ba} \Psi ) =  \us ( \bar \delta_{\eta \ba} \Psi ) +  s ( \bar \delta_{\chi \ba} \Psi ) +  \us ( \bar \delta_{\psi \ba} \Psi ),
\]
in the second term on the \rhs of \eqref{eq:fD_O_1st_step}, where $\eta$, $\chi$, and $\psi$ are smooth non-negative functions, summing up to $1$, with $\chi$ being supported inside $\cR$ and equal to $1$ in a neighborhood of $\supp F$, $\eta$ being supported in $J^-(\cR)$, and $\psi$ supported in $J^+(\cR)$. By causal factorization \eqref{eq:CausalFactorization_int} and \eqref{eq:R_1_int}, we have
\begin{align*}
 R^\ia_{\bar{\cA}}( \eti{F} ; \us (\bar{\delta}_{\ba} \Psi ) ) & = R^\ia_{\bar{\cA}}( \eti{F} ; \us (\bar{\delta}_{\chi \ba} \Psi ) ) + R^\ia_{\bar{\cA}}( \eti{F} ; \us (\bar{\delta}_{\eta \ba} \Psi ) ) \\
 & = R^\ia_{\bar{\cA}}( \eti{F} ; s \bar{\delta}_{\chi \ba} \Psi ) + T^\ia_{\bar \cA}( \eti{F} \otimes \us (\bar{\delta}_{\eta \ba} \Psi )) - (\us (\bar{\delta}_{\eta \ba} \Psi ))^\ia_{\bar \cA} \star T^\ia_{\bar \cA}( \eti{F}) \\
 & = R^\ia_{\bar{\cA}}( \eti{F} ; s \bar{\delta}_{\chi \ba} \Psi ) - [ (\us (\bar{\delta}_{\eta \ba} \Psi ))^\ia_{\bar \cA},  T^\ia_{\bar \cA}( \eti{F})]_\star.
\end{align*}
We thus obtain 
\beq
\label{eq:fD_T_int_calc}
\fD_{\ba} T^\ia_{\bar\cA}( \eti{F}) = \ibar T^\ia_{\bar \cA}( \cD^0_{\ba} F \otimes \eti{F})  + \ibar R^\ia_{\bar{\cA}}( \eti{F} ; s \bar{\delta}_{\chi \ba} \Psi ).
\eeq
With \eqref{int-WI}, we compute, using the assumption \eqref{eq:A1(delta-Psi)=0},
\begin{align*}
 [Q^\ia_{\bar \cA}, \fD_{\ba} T^\ia_{\bar\cA}( \eti{F})]_\star & \eqos - \ibar T^\ia_{\bar \cA}( \{ s F + \tfrac{1}{2} (F, F) + A^\ia(\et{F}) \} \otimes \cD^0_{\ba} F \otimes \eti{F}) \\
 & \quad - T^\ia_{\bar \cA}( \{ s \cD^0_{\ba} F + (F, \cD^0_{\ba} F) + A^\ia( \cD^0_{\ba} F \otimes \et{F}) \} \otimes \eti{F}) \\
 & \quad - \ibar R^\ia_{\bar \cA}( \{ s F + \tfrac{1}{2} (F, F) + A^\ia(\et{F}) \} \otimes \eti{F}; s \bar{\delta}_{\chi \ba} \Psi ) \\
 & \quad - T^\ia_{\bar \cA}( \{ (s \bar{\delta}_{\chi \ba} \Psi, F) + A^\ia(s \bar{\delta}_{\chi \ba} \Psi \otimes \et{F}) \} \otimes \eti{F})
\end{align*}
and
\begin{align*}
 \fD_{\ba} [Q^\ia_{\bar \cA}, T^\ia_{\bar\cA}( \eti{F})]_\star & \eqos - \fD_{\ba} \left( T^\ia_{\bar \cA}( \{ s F + \tfrac{1}{2} (F, F) + A^\ia(\et{F}) \} \otimes \eti{F}) \right) \\
 & \eqos - \ibar T^\ia_{\bar \cA}( \{ s F + \tfrac{1}{2} (F, F) + A^\ia(\et{F}) \} \otimes \cD^0_{\ba} F \otimes \eti{F}) \\
 & \quad - T^\ia_{\bar \cA}( \cD^0_{\ba} \{ s F + \tfrac{1}{2} (F, F) + A^\ia(\et{F}) \} \otimes \eti{F}) \\
 & \quad - \ibar R^\ia_{\bar \cA}( \{ s F + \tfrac{1}{2} (F, F) + A^\ia(\et{F}) \} \otimes \eti{F}; s \bar{\delta}_{\chi \ba} \Psi ).
\end{align*}
It follows that
\[
  \fD_{\ba} [ Q^\ia_{\bar{\cA}}, T^\ia_{\bar\cA}(\eti{F})]_\star  -   [ Q^\ia_{\bar{\cA}}, \fD_{\ba} T^\ia_{\bar\cA}(\eti{F})]_\star \eqos T^\ia_{\bar \cA}( C_{\ba}(F) \otimes \eti{F})
\]
with $C_{\ba}(F)$ the expression on the \lhs of \eqref{eq:Def_C}. Lemma~\ref{lemma:C=0} thus proves \eqref{eq:fD_welldefined}.

Finally, we note that using  \eqref{int-WI} and \eqref{eq:A1(delta-Psi)=0}, we may rewrite \eqref{eq:fD_T_int_calc} as
\begin{multline*}
 \fD_{\ba} T^\ia_{\bar\cA}(\eti{F}) \eqos \ibar T^\ia_{\bar \cA}( \{ \hat{\cD}_{\ba} F + A^\ia( \bar \delta_{\ba} \Psi \otimes \et{F}) \} \otimes \eti{F}) \\ 
 - \ibar^2 R^\ia_{\bar{\cA}}( \{ s F + \tfrac{1}{2} (F, F) + A^\ia( \et{F}) \} \otimes \eti{F} ; \bar{\delta}_{\chi \ba} \Psi ) - \ibar^2 [Q^\ia_{\bar \cA}, R^\ia_{\bar \cA}( \eti{F} ; \bar \delta_{\chi \ba} \Psi )]_\star,
\end{multline*}
which proves \eqref{eq:fD_T_int}.

As a direct consequence of \eqref{eq:fD_T_int}, $\fD_{\ba}$ fulfills \eqref{eq:fD_cD_A} and respects space-time localization in the sense defined in \eqref{eq:ConnectionLocalization}, and so defines a connection on $\bF_\YM$.
\end{proof}

\subsubsection{Flatness of the connection on the quantum BRST cohomology}
\label{subsubsec:Flatness}

Finally, we want to prove flatness of $\fD_{\ba}$.

\begin{theorem}\label{thm:hat-cD-flatness}
For all $F$ supported in $\cR$, fulfilling \eqref{eq:s-nonlin-F=0}, and under the assumption \eqref{eq:A1(delta-Psi)=0} 
we have
\[
 ([ \fD_{\ba} , \fD_{\ba'} ] - \fD_{\LB{\ba}{\ba'}}) T^\ia_{\bar \cA}(\eti{F}) \eqF 0. 
\] 
\end{theorem}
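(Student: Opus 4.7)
The plan is to iterate the identity \eqref{eq:fD_T_int} from Theorem~\ref{thm:fDQ} twice, antisymmetrize in $\ba \leftrightarrow \ba'$, subtract $\fD_{\LB{\ba}{\ba'}} T^\ia_{\bar\cA}(\eti{F})$, and verify that the resulting curvature vanishes in $\bF_{\bar\cA}$. The three key inputs will be: the classical flatness of $\hat{\cD}$ on local functionals (Theorem~\ref{thm:tilde-nabla}, equation \eqref{eq:tilde-nabla-curvature}), the background dependence of the interacting anomaly (Corollary~\ref{cor:cDhatA}), and the interacting consistency condition \eqref{eq:CC_int}.

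First, since $F$ satisfies \eqref{eq:s-nonlin-F=0}, equation \eqref{eq:fD_T_int} gives
\[
\fD_\ba T^\ia_{\bar \cA}(\eti{F}) \eqF \ibar T^\ia_{\bar \cA}(G_\ba(F) \otimes \eti{F}), \qquad G_\ba(F) \defeq \hat{\cD}_\ba F + A^\ia(\bar{\delta}_\ba \Psi \otimes \et{F}),
\]
with $G_\ba(F)$ supported in $\cR$ but depending on the background $\bar\cA$ through both $\hat{\cD}_\ba$ and the anomaly $A^\ia$. To compute $\fD_{\ba'}$ of the right-hand side, I would apply the general identity \eqref{eq:fD_T_int_calc} to $T^\ia_{\bar\cA}(\eti{(F + t G_\ba(F))})$ and differentiate at $t = 0$, accounting for both sources of background dependence of $G_\ba(F)$. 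The first source is handled by a Leibniz rule for $\hat{\cD}_{\ba'}$; the second requires Corollary~\ref{cor:cDhatA} to evaluate $\hat{\cD}_{\ba'}$ acting on the anomaly insertion.

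Antisymmetrizing in $\ba \leftrightarrow \ba'$ and subtracting $\fD_{\LB{\ba}{\ba'}} T^\ia_{\bar\cA}(\eti{F})$ produces two groups of terms. The leading ``classical'' contribution is proportional to $\left([\hat{\cD}_\ba, \hat{\cD}_{\ba'}] - \hat{\cD}_{\LB{\ba}{\ba'}}\right) F$, which vanishes identically on local functionals by \eqref{eq:tilde-nabla-curvature}. The remaining anomaly contributions are to be reduced, using Corollary~\ref{cor:cDhatA} and the consistency condition \eqref{eq:CC_int} applied to the hypothesis \eqref{eq:s-nonlin-F=0}, to expressions of the form $q(\text{something})$, hence $[Q^\ia_{\bar\cA}, -]_\star$-exact modulo $\bJ_{\bar\cA}$ by \eqref{eq:Q_F}. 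Residual retarded-product pieces should behave as in the proof of Theorem~\ref{thm:fDQ}: they either sit in $\bJ_{\bar\cA}$ because $F$ satisfies \eqref{eq:s-nonlin-F=0}, or are $[Q^\ia_{\bar\cA}, -]_\star$-exact, hence $\eqF 0$.

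The main obstacle is the combinatorial bookkeeping of the anomaly terms. Corollary~\ref{cor:cDhatA} produces characteristic corrections of the form $A^\ia(s \bar{\delta}_\ba \Psi \otimes \et{F})$ whenever $\hat{\cD}_{\ba'}$ acts through the background on an anomaly insertion, and after antisymmetrization these must conspire with \eqref{eq:CC_int}, with the classical identity $[\cD_\ba, \cD_{\ba'}] = \cD_{\LB{\ba}{\ba'}}$ used inside $\hat{\cD}$, and with the assumption \eqref{eq:A1(delta-Psi)=0} from Theorem~\ref{thm:fDQ}, in order to leave only $q$-exact residues. The structural role of the correction term $(-, \cD_\ba \Psi)$ in the definition \eqref{eq:tilde-nabla-def} of $\hat{\cD}$, which guaranteed classical flatness in Theorem~\ref{thm:tilde-nabla}, is what makes this alignment possible at the quantum level.
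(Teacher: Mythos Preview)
Your outline is correct and follows the paper's proof: iterate \eqref{eq:fD_T_int}, kill the classical piece via \eqref{eq:tilde-nabla-curvature}, reduce the anomaly terms with Corollary~\ref{cor:cDhatA} and the consistency condition \eqref{eq:CC_int}, and identify the remainder as $[Q^\ia_{\bar\cA},-]_\star$-exact. One sharpening: the residue is not simply $qG$ but $sG + (F,G) + A^\ia(G \otimes \et{F})$ with $G = A^\ia(\bar\delta_{\ba}\Psi \otimes \bar\delta_{\ba'}\Psi \otimes \et{F})$, so its exactness follows from the full interacting Ward identity \eqref{int-WI} together with \eqref{eq:s-nonlin-F=0}, not from \eqref{eq:Q_F} alone; the paper also invokes Lemma~\ref{lemma:anomaly-field-ind} (and $(\bar\delta_\ba\Psi,\bar\delta_{\ba'}\Psi)=0$) to eliminate the $(-,\delta_\ba\Psi)$ corrections.
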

\begin{proof}
Using \eqref{eq:fD_T_int}, it suffices to prove
\[
 \ibar T^\ia_{\bar \cA}( D_{\ba, \ba'}(F) \otimes \eti{F} ) \eqF 0,
\]
with
\begin{align*}
 D_{\ba, \ba'}(F) & = ([ \hat{\cD}_{\ba} , \hat{\cD}_{\ba'} ] - \hat{\cD}_{\LB{\ba}{\ba'}}) F + \hat{\cD}_{\ba} A^\ia(\bar \delta_{\ba'} \Psi \otimes \et{F}) - \hat{\cD}_{\ba'} A^\ia(\bar \delta_{\ba} \Psi \otimes \et{F}) \\ 
 & + A^\ia(\bar \delta_{\ba} \Psi \otimes \hat{\cD}_{\ba'} F \otimes \et{F}) - A^\ia(\bar \delta_{\ba'} \Psi \otimes \hat{\cD}_{\ba} F \otimes \et{F}) - A^\ia(\bar \delta_{\LB{\ba}{\ba'}} \Psi \otimes \et{F}) \\
 & + A^\ia( A^\ia( \bar \delta_{\ba'} \Psi \otimes \et{F} ) \otimes \bar \delta_{\ba} \Psi \otimes \et{F}) - A^\ia( A^\ia( \bar \delta_{\ba} \Psi \otimes \et{F} ) \otimes \bar \delta_{\ba'} \Psi \otimes \et{F}).
\end{align*}
By \eqref{eq:cD-hatA}, we have
\begin{align*}
 & \cD_{\ba} A^\ia(\bar \delta_{\ba'} \Psi \otimes \et{F}) - \cD_{\ba'} A^\ia(\bar \delta_{\ba} \Psi \otimes \et{F}) - A^\ia(\bar \delta_{\LB{\ba}{\ba'}} \Psi \otimes \et{F}) \\
  & = A^\ia( \bar \delta_{\ba'} \Psi \otimes s \bar \delta_{\ba} \Psi \otimes \et{F}) - A^\ia( \bar \delta_{\ba} \Psi \otimes s \bar \delta_{\ba'} \Psi \otimes \et{F}) \\ 
  & \phantom{=} + A^\ia(\bar \delta_{\ba'} \Psi \otimes \cD_{\ba} F \otimes \et{F}) - A^\ia(\bar \delta_{\ba} \Psi \otimes \cD_{\ba'} F \otimes \et{F})  + A^\ia( \{ \cD_{\ba} \bar \delta_{\ba'} - \cD_{\ba'} \bar \delta_{\ba} - \bar \delta_{\LB{\ba}{\ba'}} \} \Psi \otimes \et{F}).
\end{align*}
The last term on the \rhs vanishes by the flatness of $\bar \delta$ and Lemma~\ref{lemma:anomaly-linear}. Thus, with \eqref{eq:tilde-nabla-curvature}, we have
\begin{align*}
 D_{\ba, \ba'}(F) & = A^\ia( \bar \delta_{\ba'} \Psi \otimes s \bar \delta_{\ba} \Psi \otimes \et{F}) - A^\ia( \bar \delta_{\ba} \Psi \otimes s \bar \delta_{\ba'} \Psi \otimes \et{F}) \\ 
 & - (A^\ia(\bar \delta_{\ba'} \Psi \otimes \et{F}), \cD_{\ba} \Psi) +  (A^\ia(\bar \delta_{\ba} \Psi \otimes \et{F}), \cD_{\ba'} \Psi) \\ 
 & - A^\ia(\bar \delta_{\ba} \Psi \otimes (F, \cD_{\ba'} \Psi) \otimes \et{F}) + A^\ia(\bar \delta_{\ba'} \Psi \otimes (F, \cD_{\ba} \Psi) \otimes \et{F}) \\
 & + A^\ia( A^\ia( \bar \delta_{\ba'} \Psi \otimes \et{F} ) \otimes \bar \delta_{\ba} \Psi \otimes \et{F}) - A^\ia( A^\ia( \bar \delta_{\ba} \Psi \otimes \et{F} ) \otimes \bar \delta_{\ba'} \Psi \otimes \et{F}).
\end{align*}
With the consistency condition \eqref{eq:CC_int}
this simplifies to
\begin{multline*}
 D_{\ba, \ba'}(F) = s A^\ia(\bar \delta_{\ba'} \Psi \otimes \bar \delta_{\ba} \Psi \otimes \et{F}) + (F, A^\ia(\bar \delta_{\ba'} \Psi \otimes \bar \delta_{\ba} \Psi \otimes \et{F})) \\ + A^\ia ( A^\ia(\bar \delta_{\ba'} \Psi \otimes \bar \delta_{\ba} \Psi \otimes \et{F}) \otimes \et{F}),
\end{multline*}
where we used Lemma~\ref{lemma:anomaly-field-ind}, taking into account \eqref{eq:delta_bar_C_as_antibracket} and the fact that $S_\ia$ is independent of $\bar C^\ddagger$, and that $\Psi$ does not contain anti-fields, so that $( \bar \delta_{\ba'} \Psi, \bar \delta_{\ba} \Psi) = 0$. With \eqref{int-WI}, we thus obtain
\[
 \ibar T^\ia_{\bar \cA}( D_{\ba, \ba'}(F) \otimes \eti{F} ) \eqos [Q^\ia_{\bar \cA}, T^\ia_{\bar \cA}( A^\ia(\bar \delta_{\ba} \Psi \otimes \bar \delta_{\ba'} \Psi \otimes \et{F}) \otimes \eti{F} )]_\star,
\]
which proves the statement.
\end{proof}

\subsubsection{Absence of obstructions to background independence}\label{section:YM-no anomaly}

Above, we found that condition  \eqref{eq:A1(delta-Psi)=0} is sufficient to ensure well-definedness and flatness of the connection $\fD_{\ba}$ on $[Q^\ia_{\bar \cA}, -]_\star$ cohomology. We now show that this can indeed be satisfied in pure Yang-Mills theory.

\begin{lemma}
Let $T$ and $T'$ be two renormalization schemes related via \eqref{eq:T-T'}, and let $A$ and $A'$ be the corresponding anomalies of the anomalous Ward identities \eqref{AnWI} in these schemes. 
Assuming the anomalies of the interaction $S_\ia$ vanish in both schemes, i.e., $A(\et{S_\ia}) = A'(\et{S_\ia})=0$, then for all  $G$ supported in $\cR$, it holds
\begin{align}
\label{(S+Z-hat-F)}
s Z_{S_\ia}G + (D_{S_\ia}, {Z}_{S_\ia} G) + A(Z_{S_\ia} G \otimes \et{S_\ia+D_{S_\ia} } ) = Z_{S_\ia} ( sG + A'^\ia_1(G)),
\end{align}
where $D_{S_\ia} := D( \et{S_\ia})$ and  ${Z}_{S_\ia} G := G + D(G \otimes \et{S_\ia})$.
\end{lemma}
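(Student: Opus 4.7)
The plan is to compare the anomalous Ward identity \eqref{AnWI} in the two schemes $T$ and $T'$, related via the main theorem \eqref{eq:T-T'}. Set $F_0 \defeq S_\ia + D_{S_\ia}$. Applying \eqref{eq:T-T'} with $F = S_\ia + t G$ for a formal parameter $t$ yields $T'(\eti{(S_\ia + t G)}) = T(\eti{F_t})$, where $F_t = F_0 + t\, Z_{S_\ia} G + O(t^2)$. Differentiating at $t = 0$ gives the translation dictionary $T'(X \otimes \eti{S_\ia}) = T(Z_{S_\ia} X \otimes \eti{F_0})$, valid for any local $X$. The argument then consists of applying the WI on both sides of $T'(\eti{(S_\ia + t G)}) = T(\eti{F_t})$ and extracting the $t^0$ and $t^1$ coefficients.

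At order $t^0$, the $T'$-WI, together with the hypothesis $A'(\et{S_\ia}) = 0$ and the classical master equation $s_0 S_\ia + \tfrac{1}{2}(S_\ia, S_\ia) = 0$, gives $s_0 T'(\eti{S_\ia}) = 0$. On the $T$-side, the WI reads $s_0 T(\eti{F_0}) = \ibar T(\mathcal{B}_0 \otimes \eti{F_0})$ with $\mathcal{B}_0 \defeq s_0 F_0 + \tfrac{1}{2}(F_0, F_0) + A(\et{F_0})$. Since $T(\eti{F_0}) = T'(\eti{S_\ia})$ by \eqref{eq:T-T'}, this forces $T(\mathcal{B}_0 \otimes \eti{F_0}) = 0$. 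By the standard injectivity of the free-to-interacting correspondence on local functionals (i.e., $T(\eti{F_0})^{-1} \star T(Y \otimes \eti{F_0}) = 0 \Rightarrow Y = 0$, since the interacting field of $Y$ is $Y$ plus higher-order corrections in the coupling), one concludes $\mathcal{B}_0 = 0$: the shifted interaction $F_0$ also satisfies the anomalous master equation in scheme $T$.

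At order $t^1$, the $T'$-WI, again using $A'(\et{S_\ia}) = 0$ and the classical master equation, yields $s_0 T'(G \otimes \eti{S_\ia}) = T'(\{sG + A'^\ia_1(G)\} \otimes \eti{S_\ia})$. Via the dictionary this becomes $s_0 T(Z_{S_\ia} G \otimes \eti{F_0}) = T(Z_{S_\ia}(sG + A'^\ia_1(G)) \otimes \eti{F_0})$. The $T$-WI at the same order, exploiting $\mathcal{B}_0 = 0$, reads $s_0 T(Z_{S_\ia} G \otimes \eti{F_0}) = T(\mathcal{B}'_0 \otimes \eti{F_0})$ with $\mathcal{B}'_0 \defeq s_0 Z_{S_\ia} G + (F_0, Z_{S_\ia} G) + A(Z_{S_\ia} G \otimes \et{F_0})$. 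Equating and applying injectivity once more gives $\mathcal{B}'_0 = Z_{S_\ia}(sG + A'^\ia_1(G))$. Using $s = s_0 + (S_\ia, -)$ and $(F_0, \,\cdot\,) = (S_\ia, \,\cdot\,) + (D_{S_\ia}, \,\cdot\,)$ rewrites the left-hand side as $sZ_{S_\ia} G + (D_{S_\ia}, Z_{S_\ia} G) + A(Z_{S_\ia} G \otimes \et{S_\ia + D_{S_\ia}})$, which is exactly \eqref{(S+Z-hat-F)}.

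The main subtlety is justifying the injectivity step used twice above; it follows from standard arguments in perturbative AQFT once one recognizes $T(\eti{F_0})^{-1} \star T(Y \otimes \eti{F_0})$ as the interacting field corresponding to $Y$, whose leading term in the coupling is $Y$ itself. A secondary technical point is that \eqref{AnWI} requires all inserted functionals ($G$, $S_\ia$, $D_{S_\ia}$, $Z_{S_\ia} G$, $sG + A'^\ia_1(G)$) to be supported in $\cU$; this follows from $\supp \lambda \subset \cU$, $\supp G \subset \cR \subset \cU$, and the locality of $D_n$ and $A'_n$.
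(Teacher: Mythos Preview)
Your proof is correct and follows essentially the same route as the paper's own argument: both compare the anomalous Ward identity \eqref{AnWI} in the two schemes via the relation \eqref{eq:T-T'}, equate the local insertions inside the time-ordered products, and then extract the linear-in-$G$ part. The paper first derives the general identity
\[
s_0 (F +D_F) + \tfrac{1}{2}( F +D_F, F + D_F )  + A(\et{F+ D_F }) =  Z_F \big( s_0 F + \tfrac{1}{2}(F,F) + A'(\et{F}) \big)
\]
for arbitrary $F$ and then sets $F = S_\ia + \tau G$ and differentiates at $\tau = 0$, whereas you perform the $t$-expansion from the outset and read off the $t^0$ and $t^1$ coefficients separately; this is merely a reorganization of the same computation. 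Your explicit discussion of the injectivity step (that $T(Y\otimes \eti{F_0})=0$ forces $Y=0$ as a formal power series in the coupling) and of the support conditions are details the paper leaves implicit, but they are standard and your justification is adequate.
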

\begin{proof}
From relation \eqref{eq:T-T'} it follows that 
\[
T' (G \otimes \eti{F} ) = T'( Z_F G \otimes \eti{(F + D_F)} ). 
\]
Using this identity, the anomalous Ward identity in the scheme $T'$ takes the form
\begin{align}\nonumber
s_0 T' ( \eti{F} )  &= \ibar T' ( \{ s_0 F + \tfrac{1}{2}(F, F) + A'(\et{F}) \} \otimes \eti{F} ) \\ \label{tildeT-AnWI-RHS}
& = \ibar T ( Z_F \{ s_0 F + \tfrac{1}{2}(F, F) + A'( \et{F} ) \} \otimes \eti{( F + D_F)} ).
\end{align} 
On the other hand using \eqref{eq:T-T'}, we can write the anomalous Ward identity as
\begin{align}\nonumber
s_0 T' ( \eti{F} )  & = s_0 T ( \eti{(F+ D_F)} ) \\ \label{tildeT-AnWI-LHS}
&= \ibar T ( \{ s_0 (F + D_F) + \tfrac{1}{2}( F +D_F, F + D_F )  + A(\et{F+D_F}) \} \otimes \eti{( F + D_F)} ).
\end{align}
Comparing \eqref{tildeT-AnWI-RHS} and \eqref{tildeT-AnWI-LHS}, we arrive at 
\begin{align*}
s_0 (F +D_F) + \tfrac{1}{2}( F +D_F, F + D_F )  + A(\et{F+ D_F }) =  Z_F ( s_0 F + \tfrac{1}{2}(F,F) + A'(\et{F}) ).
\end{align*}
Now \eqref{(S+Z-hat-F)} follows by replacing $F$ with $F + \tau G$, differentiating with respect to $\tau$, and setting $\tau=0$ and $F={S_\ia}$.
\end{proof}

In the following, we show that the a violation of condition \eqref{eq:A1(delta-Psi)=0} can be removed by a redefinition of time-ordered products. The strategy is as follows: Assume the the anomaly has been removed up to order $O(\hbar^{m-1})$, i.e.,
\beq
\label{eq:AnomalyExpansion}
 A^\ia_1(\bar{\delta}_{a} \Psi )  = \sum_{n\ge m}  \hbar^n A^{\ia (n)}_1(\bar{\delta}_{a} \Psi ),
\eeq
with $A^{\ia (n)}$ independent of $\hbar$. We denote by $A^{(m)}$ and $D^{(m)}$ the anomaly and the redefinition of time-ordered product at order $O(\hbar^m)$. From \eqref{(S+Z-hat-F)}, we conclude that
\beq
\label{eq:AnomalyRedefinitionExpanded}
 A'^{(m)}(G \otimes \et{S_\ia})  = {A}^{(m)}(G \otimes \et{S_\ia})  + s D^{(m)}(G \otimes \et{S_\ia}) - D^{(m)}(s G \otimes \et{S_\ia})+  (G, D^{(m)}(\et{S_\ia})).
\eeq
There are thus two possible strategies to remove the anomaly of $\bar \delta_a \Psi$ at order $O(\hbar^m)$: The first one would be to set
\beq
\label{eq:AnomalyRedefinition}
 D^{(m)}(s \bar \delta_a \Psi \otimes \et{S_\ia}) = A^{(m)}(\bar \delta_a \Psi \otimes \et{S_\ia}).
\eeq
However, such a definition must not spoil the absence of gauge anomalies or perturbative agreement. As discussed in the proof of Proposition~\ref{prop:PA}, achieving perturbative agreement proceeds by redefining time-ordered products involving at least one factor $j(a) = \bar \delta_a S_0$, \cf \eqref{eq:back-current}. Hence, redefinitions of such time-ordered products should not be allowed. Furthermore, due to field independence, a redefinition of a time-ordered product of the form $T(\delta_a S_i \otimes \eti{S_\ia})$, with the interaction $S_\ia = S_1 + S_2$ and $\Deg(\ibar S_i) = i$, would spoil the absence of gauge anomalies. However, by \eqref{eq:ShiftSymmetryYM_gf}, we have
\[
 s_0 \bar \delta_a \Psi = \bar \delta_a S_0 - \delta_a S_1.
\]
Hence, the time ordered products $T(s_0 \bar \delta_a \Psi \otimes \eti{S_\ia})$ must not be redefined. Thus, to implement \eqref{eq:AnomalyRedefinition}, one would have to redefine time ordered products of the form $T(s_\ia \bar \delta_a \Psi \otimes \eti{S_\ia})$. Concretely, one would set
\beq
\label{eq:AnomalyRemoval}
 D^{(m)}(s_\ia \bar{\delta}_a \Psi \otimes \et{S_\ia}|_{n-1}) = A_1^{(m)}(\bar{\delta}_a \Psi \otimes \et{S_\ia}|_n),
\eeq
for $n \geq 1$, with
\beq
\label{eq:ExpExpansion}
 \et{S_\ia}|_n = \sum_{k_1 + 2 k_2 = n} \frac{1}{k_1! k_2!} S_1^{\otimes k_1} \otimes S_2^{\otimes k_2}.
\eeq
Note the different number of interaction terms on the two sides of \eqref{eq:AnomalyRemoval}, which is enforced by the fact that $s_\ia \bar{\delta}_a \Psi$ is cubic in the fields, while $\bar{\delta}_a \Psi$ is only quadratic. This redefinition is still problematic. First, one has to show that the \rhs of \eqref{eq:AnomalyRemoval} vanishes for $n = 0$. Second, and more severe, are constraints from field independence. One can find $a'$ such that $\delta_{a'} s_\ia \bar{\delta}_a \Psi$ vanishes. Hence, if such a derivative $\delta_{a'}$ acts on the first variable of the functional on the l.h.s., one gets a functional that identically vanishes. Hence, all such derivatives only act on the $S_i$ factors. But there are less such factors on the \lhs than on the r.h.s., so that the redefinition \eqref{eq:AnomalyRemoval} might be inconsistent with field independence.

In order to circumvent these difficulties, we exploit the second (in fact related, \cf Remark~\ref{rem:RedefinitionSchemeRelation}) possibility to removing an anomaly based on \eqref{eq:AnomalyRedefinitionExpanded}. Namely, if $A^{(m)}(\bar \delta_a \Psi \otimes \et{S_\ia})$ happens to be $s$ exact, i.e., $A^{(m)}(\bar \delta_a \Psi \otimes \et{S_\ia}) = s H_a$, then we may set
\[
 D^{(m)}(\bar \delta_a \Psi \otimes \et{S_\ia}) = - H_a.
\]
Unfortunately, $A^{(m)}(\bar \delta_a \Psi \otimes \et{S_\ia})$ need not be $s$ exact. However, it turns out to be $s_0$ exact, which is sufficient to remove the anomaly order by order in the number of fields. To prove these statements, we collect a few lemmata.

\begin{lemma}
\label{lemma:H(s_0)}
The cohomology $H^k(s_0)$ is trivial at negative ghost number $k < 0$.
\end{lemma}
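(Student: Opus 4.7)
The plan is to reduce $H(s_0)$ to the cohomology of a smaller ``core'' subcomplex by peeling off contractible pairs, and then to invoke the standard acyclicity of the Koszul--Tate resolution of the on-shell configuration space.

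First I would identify two trivial pairs of $s_0$. The pair $(\bar C^I, B^I)$ is manifestly trivial, with $s_0 \bar C^I = B^I$ and $s_0 B^I = 0$. Setting
$\tilde{\bar C}^{\ddag I} \defeq \bar C^{\ddag I} - (B^I + \bar \nabla^\mu A^I_\mu)\vol$, Table~\ref{table1} gives $s_0 B^{\ddag I} = -\tilde{\bar C}^{\ddag I}$, so $(-B^{\ddag I}, \tilde{\bar C}^{\ddag I})$ is a second trivial pair. A further analogous redefinition of $C^\ddag$ as $\tilde C^{\ddag I} \defeq C^{\ddag I} + (\text{terms in } B^\ddag, \bar C)$ can be used to remove the $\bar \nabla^\rho\bar \nabla_\rho \bar C \vol$ contribution from $s_0 C^{\ddag I}$, decoupling the trivial-pair sector from the rest. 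Because $s_0$ commutes with $\bar \nabla_\mu$, each jet $(\bar \nabla^{(\alpha)}\bar C, \bar \nabla^{(\alpha)}B)$ is itself a trivial pair, and similarly for $(-\bar \nabla^{(\alpha)}B^\ddag, \bar \nabla^{(\alpha)}\tilde{\bar C}^\ddag)$.

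Next I would exhibit a contracting homotopy $K$ on the trivial-pair subalgebra, defined on generators by $K(\bar\nabla^{(\alpha)} B) = \bar \nabla^{(\alpha)} \bar C$, $K(\bar \nabla^{(\alpha)} \tilde{\bar C}^\ddag) = -\bar \nabla^{(\alpha)} B^\ddag$, vanishing on core generators, and extended by a graded Leibniz rule. A direct computation gives $\{s_0, K\} = N$, where $N$ counts total degree in the trivial-pair generators; hence by a standard Koszul/K\"unneth argument, $H(s_0)$ equals the cohomology of the ``core'' subalgebra generated by $\bar\nabla^{(\alpha)}A$, $\bar\nabla^{(\alpha)}C$, $\bar\nabla^{(\alpha)}A^\ddag$, and $\bar\nabla^{(\alpha)}\tilde C^\ddag$, on which $s_0$ induces
\[
 s_0 A^I_\mu = \bar \nabla_\mu C^I,\quad s_0 C^I = 0,\quad s_0 A^{\ddag I}_\mu = (\bar P^\lin A)^I_\mu \vol,\quad s_0 \tilde C^{\ddag I} = -\bar \nabla^\mu A^{\ddag I}_\mu.
\]

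Finally, this core complex is precisely the Koszul--Tate resolution of the linearized Yang-Mills on-shell locus, enlarged by the closed ghost generator $C$; the Noether identity $\bar \nabla^\mu (\bar P^\lin A)^I_\mu \equiv 0$ (a consequence of background gauge covariance) is what makes $s_0^2 \tilde C^{\ddag I} = 0$. Invoking the standard acyclicity of such a Koszul--Tate resolution under regularity of the constraints and irreducibility of their Noether identities (\cf \cite{Barnich:2000zw}), its cohomology is concentrated in antifield number zero. Since any monomial of negative total ghost number must contain at least one antifield generator (the only generator of positive ghost number on the core is $C$), this shows $H^k(s_0) = 0$ for $k < 0$. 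The main obstacle is checking that the Koszul--Tate regularity/irreducibility hypotheses genuinely hold for the linearized YM operator $\bar P^\lin$ on a generic curved background $\bar \cA$, and carefully verifying that the $\tilde C^\ddag$ redefinition fully decouples the two sectors; both are controlled by standard jet-space manipulations.
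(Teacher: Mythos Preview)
Your approach is correct and coincides with the paper's: remove the trivial pairs $(\bar C, B)$ and $(B^\ddag, \bar C^\ddag)$ and then appeal to the acyclicity of the (free) Koszul--Tate resolution at positive antifield number. The paper's proof is simply the terse version of this, citing \cite{Barnich:2000zw}, Thm.~7.1, and remarking that the argument there relies only on Koszul--Tate acyclicity, which holds for $s_0$ just as for $s$. (One small correction to your decoupling step: besides redefining $C^\ddag$, you also need to shift $A^{\ddag}_\mu \mapsto A^{\ddag}_\mu + \bar\nabla_\mu \bar C\,\vol$ to eliminate the $-\bar\nabla_\mu B$ contribution in $s_0 A^{\ddag}_\mu$; otherwise $\{s_0,K\}=N$ fails on $A^\ddag$.)
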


\begin{proof}
The statement was shown in \cite{Barnich:2000zw}, Thm.~7.1, for the full differential $s$ and the restricted algebra not containing $B$, $\bar C$, and their anti-fields. However, adding these does not change the statement, as they form trivial pairs and do not modify the cohomology. The proof given in \cite{Barnich:2000zw} only uses the triviality of the homology of the Koszul-Tate differential at positive anti-field number, and this also holds for its free part.
\end{proof}

\begin{lemma}
\label{lemma:F=sG}
Let $F = s G$ be of ghost number $0$ and $F_i \neq 0$ be the lowest order term of $F$ in an expansion in total (anti-) field number. Then there exists $G_i$ such that $F_i = s_0 G_i$.
\end{lemma}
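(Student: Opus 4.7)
The strategy is to expand $F$ and $G$ by total (anti-)field number and to exploit the splitting $s = s_0 + s_\ia$ with respect to this grading. Inspecting Table~\ref{table1}, one sees that $s_0$ acts on each generator $\Phi^i$ or $\Phi^\ddag_i$ by a sum of expressions containing exactly one field or anti-field, so as a derivation $s_0$ preserves the total (anti-)field number of every monomial. By contrast, $s_\ia = s - s_0$ strictly raises this number by at least one, since its action on each generator arises from the cubic/quartic part of $S_\YM$ and from the anti-field-dependent pieces of $S_\rsc$, all of which are of (anti-)field degree $\geq 2$.

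Write $G = \sum_{j \geq j_0} G_j$ with $G_{j_0} \neq 0$ the lowest-order piece in this grading. Then the lowest-order term of $sG$ is $s_0 G_{j_0}$, since the contributions of $s_\ia G_j$ lie at orders strictly greater than $j$. If $s_0 G_{j_0} \neq 0$, then $j_0 = i$ by assumption and the choice $G_i := G_{j_0}$ completes the proof. Otherwise $G_{j_0}$ is $s_0$-closed at ghost number $-1$, and Lemma~\ref{lemma:H(s_0)} yields an $H_{j_0}$ of ghost number $-2$ and (anti-)field number $j_0$ such that $G_{j_0} = s_0 H_{j_0}$. I then replace $G$ by $G' := G - s H_{j_0}$: this leaves $sG' = F$ unchanged, while the order-$j_0$ contribution of $sH_{j_0} = s_0 H_{j_0} + s_\ia H_{j_0}$ equals exactly $G_{j_0}$, so $G'$ has lowest (anti-)field number strictly greater than $j_0$.

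Iterating this construction, the lowest (anti-)field number of the current $G$ strictly increases at each step, yet it cannot exceed $i$, since otherwise the lowest order of $F = sG$ would also exceed $i$, contradicting $F_i \neq 0$. The procedure therefore terminates at some step where the lowest piece $G_i$ satisfies $s_0 G_i \neq 0$, in which case $F_i = s_0 G_i$, as required. The only nontrivial ingredient is Lemma~\ref{lemma:H(s_0)}; the rest is bookkeeping in the (anti-)field filtration, and no obstacle specific to Yang-Mills theory arises.
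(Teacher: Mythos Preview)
Your proof is correct and follows essentially the same approach as the paper's own proof: expand $G$ in total (anti-)field number, use that $s_0$ preserves and $s_\ia$ strictly raises this grading, and when the lowest piece $G_{j_0}$ is $s_0$-closed, invoke Lemma~\ref{lemma:H(s_0)} to write $G_{j_0} = s_0 H_{j_0}$ and subtract $s H_{j_0}$ to push the lowest order up. Your case split (on whether $s_0 G_{j_0}$ vanishes) is equivalent to the paper's (on whether $j_0 = i$), and you supply slightly more detail on why $s_0$ and $s_\ia$ have the stated grading behavior.
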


\begin{proof}
Let $G_j \neq 0$ be the lowest order term in the (anti-) field number expansion of $G$. If $j = i$, we have found the sought for $G_i$. For $j < i$, we note that $s_0 G_j = 0$. By Lemma~\ref{lemma:H(s_0)}, there is $H_j$ such that $G_j = s_0 H_j$. Define $G^{(1)} = G - s H_j$. We still have $s G^{(1)} = F$, but now the lowest order term of $G^{(1)}$ occurs at $j^{(1)}> j$. We continue until $j^{(k)} = i$.
\end{proof}

\begin{lemma}
\label{lemma:s0Exact}
Let $\supp a \subset \cR$, $A_1^{\ia (m)}(\bar \delta_a \Psi) \neq 0$ be the lowest term in the $\hbar$ expansion of $A_1^\ia(\bar \delta_a \Psi)$, \cf \eqref{eq:AnomalyExpansion}, and $A^{\ia (m)}_1(\bar \delta_a \Psi)_i \neq 0$ be the lowest order term of $A^{\ia (m)}_1(\bar \delta_a \Psi)$ in an expansion in total (anti-) field number. Then there exists $G_{a i}$ such that $A^{\ia (m)}_1(\bar \delta_a \Psi)_i = s_0 G_{a i}$.
\end{lemma}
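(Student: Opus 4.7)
My proof plan proceeds in three moves: first establish $s$-closedness of $A^{\ia(m)}_1(\bar\delta_a\Psi)$ via the interacting consistency condition combined with perturbative agreement, then extract $s_0$-closedness of its leading field-number component, and finally argue $s_0$-exactness of that component.

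For the first move, I linearize the interacting consistency condition \eqref{eq:CC_int} in its argument $F$ and restrict to the leading order $O(\hbar^m)$ of $A^\ia_1$; the composition $A^\ia_1 \circ A^\ia_1$ is $O(\hbar^{2m})$ and hence drops out, leaving the anti-commutation identity
\[
 s A^{\ia(m)}_1(F) + A^{\ia(m)}_1(s F) = 0.
\]
Setting $F = \bar \delta_a \Psi$, I need to dispose of the second term. Corollary~\ref{cor:cDhatA} evaluated at $F = 0$ yields $A^\ia_1(\us \bar\delta_a \Psi) = 0$, and since $\supp a \subset \cR$, the definition \eqref{eq:def-us} gives $\us \bar\delta_a \Psi = s \bar\delta_a \Psi$. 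Hence $A^{\ia(m)}_1(s \bar\delta_a\Psi) = 0$, and therefore $s A^{\ia(m)}_1(\bar\delta_a \Psi) = 0$.

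For the second move, I decompose $A^{\ia(m)}_1(\bar\delta_a\Psi) = \sum_{k \ge i} [A^{\ia(m)}_1(\bar\delta_a \Psi)]_k$ according to total (anti-)field number. Since $s_\ia = (S_\ia, -)$ strictly raises this number (the interaction $S_\ia$ is at least cubic in fields and anti-fields), while $s_0$ preserves it, the identity $s A^{\ia(m)}_1(\bar\delta_a\Psi) = 0$ at grade $i$ reduces to $s_0 A^{\ia(m)}_1(\bar\delta_a \Psi)_i = 0$.

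For the third and most delicate move, I cannot apply Lemma~\ref{lemma:F=sG} to $A^{\ia(m)}_1(\bar\delta_a\Psi)$ globally, because as noted in the discussion preceding the lemma this functional need not itself be $s$-exact. Instead I would produce $G_{ai}$ directly by classifying the $s_0$-closed local functionals of ghost number $0$, field number $i$, linear in $a$, with the rigid mass dimension forced by the homogeneous scaling of $A$ together with $\Deg(\bar\delta_a \Psi) = 2$, and satisfying the covariance requirement \eqref{eq:local-gauge-cov}. Using the explicit $s_0$-transformations from Table~\ref{table1}, and exploiting Lemma~\ref{lemma:H(s_0)} to kill any obstructing component that descends into negative ghost number, one shows that each such candidate admits an anti-field-supported primitive under $s_0$; this primitive is the required $G_{ai}$. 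The main obstacle is organising this enumeration, as one must track precisely how the non-gauge-covariant character of $a$ together with the contracting homotopy underlying Lemma~\ref{lemma:H(s_0)} forces the cocycle into the image of $s_0$.
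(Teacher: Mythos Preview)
Your first move contains a genuine gap. Corollary~\ref{cor:cDhatA} is derived from Theorem~\ref{thm:anomaly-bg-variation}, which in turn rests on perturbative agreement for the retarded variation $\delta^\ret_{\ba}$. That variation is only defined for $\ba$ tangent to $\sS_\YM$, i.e.\ for $\ba$ fulfilling the linearized Yang--Mills equation \eqref{eq:lin-YM-eq} in $\cU$. The $a$ in the statement of Lemma~\ref{lemma:s0Exact}, however, is an arbitrary compactly supported perturbation in $\cR$ with no on-shell condition (this is exactly the generality needed in \eqref{eq:A1(delta-Psi)=0}). Hence your conclusion $A^{\ia(m)}_1(s\bar\delta_a\Psi)=0$, and with it $s A^{\ia(m)}_1(\bar\delta_a\Psi)=0$, is unjustified for general $a$.

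The paper does not in fact establish $s$-closedness. It uses the on-shell case only to infer that the local functional $A^{\ia(m)}_1(s\bar\delta_a\Psi)$ vanishes wherever $\bar P^\lin a = 0$, hence has the structural form $\int \Phi^{(m)I\mu}\,\bar P^\lin a^I_\mu\,\vol$ with $\Phi^{(m)}$ of ghost number~1 and, crucially, mass dimension~1. Power counting and triviality of $H^1(s)$ then force $\Phi^{(m)I}_\mu = c^{(m)} s A^I_\mu$, so that $A^{\ia(m)}_1(\bar\delta_a\Psi)$ splits into a piece $c^{(m)}\int A^{I\mu}\bar P^\lin a^I_\mu\,\vol$, which is $s_0$-exact by inspection of Table~\ref{table1}, and a piece $\int \Theta^{(m)I\mu} a^I_\mu\,\vol$ with $\Theta^{(m)}$ $s$-closed of mass dimension~3 and ghost number~0. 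The structure of $H^0(s)$ from \eqref{eq: s-cohomology} then makes $\Theta^{(m)}$ $s$-exact up to a c-number proportional to $\bar\nabla_\nu\bar F^{I\nu\mu}$, which vanishes in $\cR$; Lemma~\ref{lemma:F=sG} finishes the job. Your third move omits this essential dimension-counting input, without which the classification of candidates for $A^{\ia(m)}_1(\bar\delta_a\Psi)_i$ is uncontrolled; the vague appeal to Lemma~\ref{lemma:H(s_0)} and Table~\ref{table1} does not substitute for it.
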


\begin{proof}
Expanding the consistency condition \eqref{eq:CC_int} in $\hbar$, one finds $A^{\ia (n)}_1(s \bar \delta_a \Psi) = 0$ for all $n < m$ and
\[
 s A^{\ia(m)}_1(\bar \delta_a \Psi) = A^{\ia(m)}_1(s \bar \delta_a \Psi). 
\]
By Corollary~\ref{cor:cDhatA} and the absence of gauge anomalies, \cf \eqref{eq:A=0}, we have $A^\ia_1(\us \bar \delta_{\bar a} \Psi) = 0$ for all $\bar a$ fulfilling the linearized field equation \eqref{eq:lin-YM-eq}. In particular, the local functional $A^\ia_1(s \bar \delta_a \Psi)$ vanishes when evaluated in configurations supported in a region $\cR' \subset \cR$ where $a$ is on-shell, i.e., where $\bar P^\lin a = 0$ holds. It follows that, for $a$ supported in $\cR$,
\[
 A^{\ia(m)}_1(s \bar \delta_a \Psi) = \int \Phi^{(m) I \mu} \bar P^\lin a^I_\mu \vol,
\]
with $\Phi^{(m)}$ a locally and covariantly constructed section of $\p \otimes \Omega^1(M)$ of ghost number $1$ and mass dimension $1$. Furthermore, again by expanding the consistency condition \eqref{eq:CC_int}, one finds $s \Phi^{(m)} = 0$. From the triviality of $H^1(s)$, it follows that $\Phi^{(m) I}_\mu = c^{(m)} s A^I_\mu$ with some coefficient $c^{(m)}$. Hence,
\beq
\label{eq:A_1_delta_Psi}
 A_1^{\ia(m)}(\bar \delta_a \Psi) = c^{(m)} \int A^{I \mu} \bar P^\lin a^I_\mu \vol + \int \Theta^{(m) I \mu} a^I_\mu \vol,
\eeq
with $\Theta^{(m)}$ a locally and covariantly constructed section of $\p \otimes \Omega^1(M)$ of ghost number $0$, mass dimension $3$, and in the kernel of $s$. By \eqref{eq: s-cohomology}, it must thus be of the form $\Theta^{(m) I \mu} = s \Sigma^{(m) I \mu} + \Xi^{(m) I \mu}$, with $\Xi^{(m)}$ a c-number. However, the only such c-number would be $\bar \nabla_\nu \bar F^{I \nu \mu}$, which vanishes in $\cR$, \cf \eqref{eq:bg-eom}. Noting that the first term on the \rhs of \eqref{eq:A_1_delta_Psi} can be rewritten as an element of the image of $s_0$ using $s_0 (A^{\ddag I}_\mu + \bar \nabla_\mu \bar C^I) = (\bar P^\lin A)^I_\mu$, and using Lemma~\ref{lemma:F=sG} on the second term, we obtain the desired statement.
\end{proof}

We are now ready to perform the necessary redefinitions.

\begin{theorem}
\label{thm:NoAnomaly}
There are renormalization schemes in which the condition \eqref{eq:A1(delta-Psi)=0} holds.
\end{theorem}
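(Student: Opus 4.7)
The plan is a double induction: the outer one on the $\hbar$-order of $A^\ia_1(\bar\delta_a\Psi)$, and the inner one on the lowest (anti-)field number at which the anomaly survives at that $\hbar$-order. At each step the renormalization freedom encoded in \eqref{eq:AnomalyRedefinitionExpanded} is used, together with Lemma~\ref{lemma:s0Exact}, to remove one more layer of the anomaly.

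Suppose $A^\ia_1(\bar\delta_a\Psi) = O(\hbar^m)$ and let $A^{\ia(m)}_1(\bar\delta_a\Psi)_i$ be its lowest-field-number contribution. Lemma~\ref{lemma:s0Exact} delivers a local, covariant, homogeneously scaling functional $G_{a,i}$, linear in $a$, with $A^{\ia(m)}_1(\bar\delta_a\Psi)_i = s_0 G_{a,i}$; these properties persist through the cohomological constructions of Lemmata~\ref{lemma:H(s_0)}--\ref{lemma:s0Exact}. I would then introduce a finite renormalization $D^{(m)}$ that is nonzero only on argument-types of the form $\bar\delta_a\Psi \otimes S_\ia^{\otimes k}$, with its values on these arguments determined (order by order in $S_\ia = S_1 + S_2$) by the decomposition of $-G_{a,i}$. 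By construction $D^{(m)}(\et{S_\ia}) = 0$, preserving the absence of gauge anomalies \eqref{eq:A=0}, and no time-ordered product containing $j(a) = \bar\delta_a S_0$ is touched, preserving perturbative agreement \eqref{eq:PA_YM}. The axioms for finite counter terms \eqref{eq:counter-terms} are then inherited from the corresponding properties of $G_{a,i}$.

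Inserting into \eqref{eq:AnomalyRedefinitionExpanded} and using that $D^{(m)}(s\bar\delta_a\Psi\otimes\et{S_\ia})$ and $(\bar\delta_a\Psi, D^{(m)}(\et{S_\ia}))$ both vanish (since $s\bar\delta_a\Psi$ is not of type $\bar\delta_{a'}\Psi$, and the latter is identically zero), the new anomaly differs from the old one by $s D^{(m)}(\bar\delta_a\Psi\otimes\et{S_\ia}) = -s_0 G_{a,i} - s_\ia G_{a,i}$. The $s_0$-piece cancels $A^{\ia(m)}_1(\bar\delta_a\Psi)_i$ exactly, while the $s_\ia$-piece contributes only at strictly higher field number. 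The inner induction therefore advances the lowest surviving field number, and terminates after finitely many steps because the scaling dimension of $A^{\ia(m)}_1(\bar\delta_a\Psi)$ is fixed by homogeneous scaling and $\deg_\phi$ is bounded by $\Deg - 2m$. The outer induction in $m$ then gives the theorem.

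I expect the main obstacle to be the simultaneous preservation of all constraints on $D^{(m)}$: the standard renormalization axioms, absence of gauge anomalies, perturbative agreement, and in particular field independence. The first three are handled by restricting the non-trivial action of $D^{(m)}$ to the argument-types above. Field independence is the subtle one, since it forces the explicit $\Phi$-dependence of $D^{(m)}(\bar\delta_a\Psi\otimes S_\ia^{\otimes k})$ to be determined by the $\Phi$-content of its arguments; that the required $G_{a,i}$ can be matched to this form must then be read off as a consequence of the ambiguity in the choice of the $s_0$-preimage provided by Lemma~\ref{lemma:s0Exact}, together with power-counting renormalizability in $D=4$ which bounds $\deg_\phi$ and thus limits how many layers of the inner induction must actually be carried out.
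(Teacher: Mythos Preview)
Your proposal is correct and follows essentially the same route as the paper: a double induction (outer in $\hbar$, inner in total (anti-)field number), invoking Lemma~\ref{lemma:s0Exact} to write the lowest-order piece as $s_0 G_{a,i}$, then absorbing it via a counterterm $D^{(m)}(\bar\delta_a\Psi\otimes\et{S_\ia}|_{2(m-1)+i})=-G_{a,i}$, with termination guaranteed by power counting. The one point where the paper is sharper than your write-up is field independence: rather than appealing to the ambiguity in the $s_0$-preimage, the paper simply observes that in the redefinition \eqref{eq:AnomalyRemovalFinal} both sides sit at the \emph{same} order in the interaction (since $\bar\delta_a\Psi$ is quadratic and $G_{a,i}$ has total (anti-)field number $i$), so the obstruction discussed below \eqref{eq:ExpExpansion} does not arise---this is the concrete mechanism that makes the second removal strategy work where the first one \eqref{eq:AnomalyRemoval} was problematic.
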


\begin{proof}
Let $A_1^{\ia (m)}(\bar \delta_a \Psi)$ be the lowest term in the $\hbar$ expansion \eqref{eq:AnomalyExpansion} and $A^{\ia (m)}_1(\bar \delta_a \Psi)_i \neq 0$ be the lowest order term of $A^{\ia (m)}_1(\bar \delta_a \Psi)$ in an expansion in total (anti-) field number. By Lemma~\ref{lemma:s0Exact}, there is a $G_{a i}$ such that $A^{\ia (m)}_1(\bar \delta_a \Psi)_i = s_0 G_{a i}$. Now we perform the redefinition
\beq
\label{eq:AnomalyRemovalFinal}
 D^{(m)}(\bar \delta_a \Psi \otimes \et{S_\ia}|_{2(m-1) + i}) = - G_{a i},
\eeq
where we used the notation \eqref{eq:ExpExpansion}. Both expressions are at the same order in the interaction, so there are no potential obstructions from field independence. Expanding \eqref{eq:AnomalyRedefinitionExpanded} in the total (anti-) field number, we see that the anomaly now occurs at a higher order in the total (anti-) field number. By power counting, the anomaly has a bounded total (anti-) field number, so the process terminates at some point, so that the anomaly at order $O(\hbar^m)$ is removed. Continuing at higher orders, one removes the anomaly to all orders.
\end{proof}

\begin{remark}
\label{rem:RedefinitionSchemeRelation}
The two possibilities for removing the anomaly, i.e., by either redefining time-ordered products involving $s_\ia \bar \delta_a \Psi$ or $\bar \delta_a \Psi$, are in fact related. This follows from field independence and the fact that $\frac{\delta}{\delta C} s_\ia \bar \delta_a \Psi$ involves the same Wick power as $\bar \delta_a \Psi$, namely $\bar C A$. Hence, the redefinition \eqref{eq:AnomalyRemovalFinal} implies a redefinition of the form \eqref{eq:AnomalyRemoval}, with a modified right hand side. One can thus see the approach chosen here as a means to rule out the potential clashes with field independence discussed below \eqref{eq:ExpExpansion}.
\end{remark}

\begin{remark}
\label{remark:YM-higher-dim}
The above arguments invoked power counting, and thus relied on power counting renormalizablity. Our method is thus not sufficient to rule out violation of background independence for example in Yang-Mills in higher dimensions.
\end{remark}

\begin{remark}
\label{rem:Abelian}
Let us consider the situation when the gauge group is not semi-simple, but contains abelian factors and possibly also matter fields. The proof of anomaly freedom given in \cite{Hollands:2007zg} does then not apply, but let us assume that there are no gauge anomalies. How are our considerations then affected? The dynamical fields corresponding to the abelian factors are free (apart from the possible coupling to matter), so the gauge fixing fermion $\Psi$ is independent of the abelian background connection. In particular, $\bar \delta_{a} \Psi = 0$ if the perturbation $a$ is only in the abelian background connection.
It follows that no further potential obstructions to achieving \eqref{eq:A1(delta-Psi)=0}
arise by including abelian factors.
\end{remark}

\subsubsection{Summary of assumptions}
\label{sec:YM_Summary}
Even though we discussed Yang-Mills theory here, the treatment of other gauge theories should be completely analogous, provided that a few conditions are met. Obviously, the theory should have no gauge anomaly, i.e., \eqref{eq:A=0} holds, and fulfill perturbative agreement \wrt changes in the background. Also the triviality of $H_1(s)$ was used. Apart from that, we used that
\begin{enumerate}[(i)]
\item $S_\ia$ does not contain $\bar C^\ddag$,
\item the gauge-fixing fermion is quadratic in fields,
\item does not contain anti-fields.
\end{enumerate}
If these conditions are met, then background independence holds, provided that the analog of \eqref{eq:A1(delta-Psi)=0} does.

\begin{remark}
Throughout, we also assumed compact Cauchy surfaces. This assumption is of technical nature only. It is relevant for the existence of the interacting BRST charge $Q^\ia_{\bar \cA}$, but as long as one is not interested in singling out the physical subspace in a Hilbert space representation, this charge is not needed. We only use it in the form $[Q^\ia_{\bar \cA}, - ]_\star$ of the on-shell interacting BRST differential. One could equally well work with the off-shell interacting BRST differential $\hat s$ recently constructed in \cite{Frob:2018buw} (which does not require compact Cauchy surfaces). For non-compact Cauchy surfaces, the construction of the cut-off functions needed for example in Thm.~\ref{thm:fDQ} or Thm.~\ref{thm:anomaly-bg-variation} becomes slightly more involved, but apart from the fact that the existence of Hadamard states for non-compact Cauchy surfaces has not been proven in full generality \cite{Gerard:2014jba}, our conclusions also hold for non-compact Cauchy surfaces (with the obvious replacements of $[Q^\ia_{\bar \cA}, - ]_\star$ by $\hat s$).
\end{remark}

\subsubsection{Renormalized background independent interacting fields}

Lemma~\ref{lemma:C=0} can be seen as a master equation for the compatibility of $\hat{\cD}_\ba = \cD^0_\ba - (-, \bar \delta_\ba \Psi)$ and $s$ in the renormalized setting. Let us explore some consequences.

We recall that for a local functional $F$ to give rise to a gauge invariant interacting field $T^\ia_{\bar \cA}(F)$, it must fulfill $q F = 0$, \cf \eqref{eq:hat-q} for the definition of $q$, corresponding to the linearization of \eqref{eq:s-nonlin-F=0}. As shown in \cite{Frob:2018buw},  for any field $\cO$ of ghost number $0$ which is classically gauge invariant, $s \cO = 0$, there is extension $\cO' = \cO + O(\hbar)$ such that $q \cO' = 0$. 
A further structure that naturally occurs at second order in $F$ is the \emph{quantum anti-bracket} \cite{Tehrani:2017pdx}
\[
 (F_1, F_2)_\hbar \defeq (F_1, F_2) + (-1)^{\eps_1} A^\ia_2(F_1 \otimes F_2).
\]
We may now define
\[
 \cD^\hbar_{\ba} \defeq \cD_{\ba} - (-, \cD_{\ba} \Psi)_\hbar = \hat{\cD}_\ba - A^\ia_2(\cD_\ba \Psi \otimes - ),
\]
which is equal to $\hat{\cD}_{\ba}$ up to quantum corrections. It follows from \eqref{eq:fD_T_int} that a functional $F$ giving rise to a background independent interacting field $T^\ia_{\bar \cA}(F)$ must fulfill
\[
 \cD^\hbar_{\ba} F \in \Ran q.
\]
A straightforward consequence of the consistency condition \eqref{eq:CC_int}, Lemma~\ref{lemma:C=0} and Theorem~\ref{thm:hat-cD-flatness} is the following:
\begin{corollary}
For all $F$ supported in $\cR$, we have, under the assumption \eqref{eq:A1(delta-Psi)=0},
\[
 q \cD^\hbar_{\ba} F - \cD^\hbar_{\ba} q F = 0.
\]
If furthermore $q F = 0 = q G$, then also
\begin{align*}
 \{ [\cD^\hbar_{\ba}, \cD^\hbar_{\ba'}] - \cD^\hbar_{\LB{\ba}{\ba'}} \} F & \in \Ran q, \\
 \cD^\hbar_{\ba} (F, G)_\hbar - ( \cD^\hbar_{\ba} F, G)_\hbar - (F, \cD^\hbar_{\ba} G)_\hbar & \in \Ran q.
\end{align*}
\end{corollary}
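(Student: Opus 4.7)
The plan is to extract the three identities from Lemma~\ref{lemma:C=0} by isolating appropriate orders of the formal variable $F$, combined with a bilinear form of the anomaly consistency condition \eqref{eq:CC_int}. Bilinearizing \eqref{eq:CC_int} at quadratic order in $F$ via $F\to F+G$ yields the master identity
\[
 q A^\ia_2(F \otimes G) + A^\ia_2(qF \otimes G) + A^\ia_2(qG \otimes F) + (F, A^\ia_1(G)) + (G, A^\ia_1(F)) + A^\ia_1((F,G)) = 0,
\]
which I will refer to as the $A_2$-consistency identity. Two auxiliary facts are essential. First, the $F^0$ order of Lemma~\ref{lemma:C=0} (using $A(\et{S_\ia}) = 0$) strengthens \eqref{eq:A1(delta-Psi)=0} to $A^\ia_1(s\bar\delta_{\ba}\Psi) = 0$. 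Second, since $\delta_{\ba}\Psi = \int \bar C^I\,\bar\nabla^\mu \ba_\mu^I\,\vol$ is linear in the single field $\bar C$ with field-independent coefficients, Lemma~\ref{lemma:anomaly-linear} forces $A^\ia_n(\delta_{\ba}\Psi \otimes \cdots) = 0$ for all $n$.

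For Part~1, linearizing Lemma~\ref{lemma:C=0} in $F$ gives $\cD^0_{\ba} q F - q\cD^0_{\ba}F = (s\bar\delta_{\ba}\Psi, F) + A^\ia_2(s\bar\delta_{\ba}\Psi \otimes F)$. Incorporating the extra term $-(-, \bar\delta_{\ba}\Psi)$ in $\hat\cD_{\ba} - \cD^0_{\ba}$, and using the graded Jacobi identity to compute $q(F, \bar\delta_{\ba}\Psi) - (qF, \bar\delta_{\ba}\Psi)$, one obtains
\[
 [q, \hat\cD_{\ba}]F = -A^\ia_2(s\bar\delta_{\ba}\Psi \otimes F) + (A^\ia_1(F), \bar\delta_{\ba}\Psi) - A^\ia_1((F, \bar\delta_{\ba}\Psi)).
\]
Specializing the $A_2$-consistency identity to $G = \bar\delta_{\ba}\Psi$, and using $A^\ia_1(\bar\delta_{\ba}\Psi) = 0$ together with the graded antisymmetry of the anti-bracket between two odd elements (which cancels the two anti-bracket contributions), this reduces via the graded symmetry of $A^\ia_2$ to $[q, \hat\cD_{\ba}]F = qA^\ia_2(\bar\delta_{\ba}\Psi \otimes F) - A^\ia_2(\bar\delta_{\ba}\Psi \otimes qF)$. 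The commutator of $q$ with the quantum correction $-A^\ia_2(\cD_{\ba}\Psi \otimes -)$ in $\cD^\hbar_{\ba}$ produces the analogous $qA^\ia_2(\cD_{\ba}\Psi \otimes F) - A^\ia_2(\cD_{\ba}\Psi \otimes qF)$, and the difference involves $\bar\delta_{\ba}\Psi - \cD_{\ba}\Psi = \delta_{\ba}\Psi$, on which $A^\ia_2$ vanishes by the auxiliary fact. Hence $[q, \cD^\hbar_{\ba}]F = 0$.

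For Part~2, I expand $[\cD^\hbar_{\ba}, \cD^\hbar_{\ba'}] - \cD^\hbar_{\LB{\ba}{\ba'}}$ around the classical expression $[\hat\cD_{\ba}, \hat\cD_{\ba'}] - \hat\cD_{\LB{\ba}{\ba'}}$, which vanishes by Theorem~\ref{thm:tilde-nabla} (equation~\eqref{eq:tilde-nabla-curvature}). The remaining mixed quantum contributions (from $\hat\cD_{\ba}$ acting on $A^\ia_2(\cD_{\ba'}\Psi \otimes F)$, the double-$A^\ia_2$ cross-terms, and the $A^\ia_2(\cD_{\LB{\ba}{\ba'}}\Psi \otimes -)$ piece) reorganize through the $A_2$-consistency identity applied with $G = \cD_{\ba'}\Psi$ and the Jacobi-type identity for $\cD_{\ba}$ into an expression of the form $qH + A^\ia_2(qF \otimes \text{local stuff})$; on $qF = 0$ the second summand drops, so the whole curvature lies in $\Ran q$. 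For Part~3, I compute $\cD^\hbar_{\ba}(F, G)_\hbar$ using the classical derivation property \eqref{eq:nabla-tilde-anti-bracket} of $\hat\cD_{\ba}$ on $(F, G)$ together with Corollary~\ref{cor:cDhatA} to differentiate the $A^\ia_2$ piece of the quantum anti-bracket; the discrepancy between $\cD^\hbar_{\ba}(F,G)_\hbar$ and $(\cD^\hbar_{\ba} F, G)_\hbar + (F, \cD^\hbar_{\ba} G)_\hbar$ matches precisely the $A_2$-consistency identity evaluated on the pair $(F, G)$, which becomes $q$-exact under $qF = qG = 0$.

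The main technical obstacle is sign bookkeeping: the graded Jacobi for the anti-bracket uses shifted parities $\eps+1$ while the graded symmetry of $A^\ia_n$ uses unshifted parities, and all of $\bar\delta_{\ba}\Psi$, $\delta_{\ba}\Psi$, $\cD_{\ba}\Psi$ are Grassmann odd, so bilinearizing \eqref{eq:CC_int} against such an insertion requires auxiliary Grassmann-odd parameters, as indicated in the paper's remark following \eqref{eq:CC_0}. The conceptual key is recognizing that the $\delta_{\ba}$-piece of $\cD_{\ba}\Psi = \bar\delta_{\ba}\Psi - \delta_{\ba}\Psi$ is benign at the quantum level precisely because it enters only through anomaly contributions $A^\ia_n(\delta_{\ba}\Psi \otimes \cdots)$ that vanish by the single-field linearity of $\delta_{\ba}\Psi$ in $\bar C$.
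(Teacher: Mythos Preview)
Your proposal is essentially correct and aligns with the paper's approach, which provides no explicit proof beyond stating that the corollary is ``a straightforward consequence of the consistency condition \eqref{eq:CC_int}, Lemma~\ref{lemma:C=0} and Theorem~\ref{thm:hat-cD-flatness}.'' You have correctly identified the key ingredients: the bilinearized consistency condition (your ``$A_2$-consistency identity''), the linearization of Lemma~\ref{lemma:C=0}, and the crucial observation that $\delta_{\ba}\Psi$ is linear in $\bar C$, so by Lemma~\ref{lemma:anomaly-linear} one has $A^\ia_n(\delta_{\ba}\Psi \otimes \cdots)=0$, allowing the replacement of $\cD_{\ba}\Psi$ by $\bar\delta_{\ba}\Psi$ inside anomaly arguments. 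Your Part~1 argument is complete.

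For Part~2 a slightly more economical route than yours is to observe that the proof of Theorem~\ref{thm:hat-cD-flatness} already computes the relevant quantity $D_{\ba,\ba'}(F)$ and shows it equals $q A^\ia(\bar\delta_{\ba'}\Psi\otimes\bar\delta_{\ba}\Psi\otimes\et{F}) + (F, A^\ia(\bar\delta_{\ba'}\Psi\otimes\bar\delta_{\ba}\Psi\otimes\et{F})) + A^\ia(A^\ia(\bar\delta_{\ba'}\Psi\otimes\bar\delta_{\ba}\Psi\otimes\et{F})\otimes\et{F})$; extracting the $F^1$ term and using your observation that $A^\ia_2(\cD_{\ba}\Psi\otimes -) = A^\ia_2(\bar\delta_{\ba}\Psi\otimes -)$ gives the curvature statement for $\cD^\hbar$ directly as $q$-exact when $qF=0$. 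Your direct approach via Theorem~\ref{thm:tilde-nabla} is equivalent but requires redoing part of that computation. For Part~3 your sketch is correct in spirit; the computation is indeed a polarized version of the same manipulations, and the paper does not provide details either.
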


A natural question is now the following. Assume a field $\cO$ is given which is classically gauge invariant and background independent, i.e.,
\[ 
 s \cO = 0, \qquad \hat{\cD}_{\ba} \cO = 0,
\]
is there an extension $\cO' = \cO + O(\hbar)$ such that
\[
q \cO' = 0, \qquad \cD^\hbar_{\ba} \cO' \in \Ran q
\]
holds, so that ${\cO'}^\ia_{\bar \cA}$ is a gauge invariant, background independent field? Given an extension $\cO'$ such that $q \cO' = 0$, one may evaluate it on one background $\bar \cA'$ and then obtain local functionals on general backgrounds by parallel transport \wrt $\cD^\hbar$, at least locally on $\sS_\YM$ (using that $\cD^\hbar$ is flat on $q$ cohomology). However, it is not obvious whether one may choose the extension $\cO'$ such that this procedure results in a proper field in the sense defined in Section~\ref{section:phi-4}, i.e., is independent of $\bar \cA'$. We leave this as an interesting open problem.

\section{Perturbative quantum gravity}\label{section:pert-Grav}

Having treated background independence for Yang-Mills theory in full detail, we now turn to perturbative quantum gravity, with an emphasis on the differences to the Yang-Mills case. Quantum gravity in the sense of perturbation theory around generic backgrounds was recently formulated in \cite{Brunetti:2013maa}. Our setup differs in an important point, so this difference will also be highlighted.

The principal dynamical variable is the metric perturbation $h_{\mu \nu}$, i.e.,  the full metric is given by 
\[
g_{\mu \nu} = \bar g_{\mu \nu} + h_{\mu \nu},
\]
with $\bar g_{\mu \nu}$ the background metric. It is supplemented by ghosts $c^\mu$, antighosts $\bar c_\mu$, Lagrange multipliers $b_\mu$, which are (co-) vector fields and transform under the BRST transformation as 
\begin{align*}
s h_{\mu \nu} & = \nabla_\mu c_\nu + \nabla_\nu c_\mu, &
s c^\mu & = c^\nu \nabla_\nu c^\mu, &
s \bar c_\mu  & = i b_\mu, &
s b_\mu & = 0,
\end{align*}
with $\nabla$ the Levi-Civita derivative \wrt $g_{\mu \nu}$.
Correspondingly, the Einstein-Hilbert action is extended to
\[
 S_\EH+ S_\rsc = \int_M \left( R[\bar g + h] \vol[\bar g + h] - \Lie_c g_{\mu \nu} h^{\mu \nu \ddagger} - i b_\mu \bar c^{\mu \ddagger} - c^\nu \nabla_\nu c^\mu c_\mu^\ddagger \right),
\]
where the antifields $\Phi^\ddagger$ are interpreted as tensor-valued densities. The action is invariant under background gauge transformations, i.e.,  diffeomorphisms $\psi: M' \to M$ acting via pull-back on $\bar g$ and the dynamical fields.

As for Yang-Mills fields, the interaction terms are adiabatically cut-off. There is a slight complication \wrt the Yang-Mills case in that the cut-off function should be a function of \emph{covariant coordinates}, \cf below. That, however, does not change anything substantial, so this cut-off can be treated as for Yang-Mills fields. Hence, we ignore this subtlety in the following. In the region where the cut-off function is equal to one, the extended action has the shift symmetry
\[
 \frac{\delta ( S_\EH+ S_\rsc)}{\delta \bar g(x)} = \frac{\delta ( S_\EH+ S_\rsc)_\ia}{\delta h(x)}.
\]

To implement the harmonic (or de Donder) gauge, we employ the gauge fixing fermion \cite{Ichinose:1992np}
\[
 \Psi = i \int_M \left( \bar \nabla_\mu \bar c_\nu ( \bar g^{\mu \lambda} \bar g^{\rho \nu} - \tfrac{1}{2} \bar g^{\mu \nu} \bar g^{\lambda \rho} ) h_{\lambda \rho} - \tfrac{1}{2} b_\mu \bar c^\mu \right) \vol[\bar g],
\]
which is a covariant functional of the dynamical fields and the background metric $\bar g$. Here $\bar \nabla$ is the Levi-Civita derivative \wrt $\bar g_{\mu \nu}$. The gauge fixed action then becomes
\begin{multline*}
 S = S_{\rsc} + \int_M \left\{ R[g] \vol[g] - \left( \bar \nabla_\mu b_\nu ( \bar g^{\mu \lambda} \bar g^{\rho \nu} - \tfrac{1}{2} \bar g^{\mu \nu} \bar g^{\lambda \rho} ) h_{\lambda \rho} - \tfrac{1}{2} b_\mu b^\mu \right) \vol[\bar g] \right. \\
 \left. - i \left( 2 \bar \nabla^{(\mu} \bar c^{\nu)} ( \bar \nabla_{\mu} c_{\nu} + \tfrac{1}{2} c^\lambda \bar \nabla_\lambda h_{\mu \nu} + h_{\lambda \mu} \bar \nabla_\nu c^\lambda ) - \bar \nabla_\lambda \bar c^\lambda ( \bar \nabla_\rho c^\rho + \tfrac{1}{2} c^\rho \bar \nabla_\rho h + h_{\mu \nu} \bar \nabla^\nu c^\mu ) \right) \vol[\bar g] \right\}
\end{multline*}
with $h \defeq \bar g^{\mu \nu} h_{\mu \nu}$. It leads to hyperbolic equations of motion at the linearized level for $c^\mu$, $\bar c_\nu$ and $\gamma_{\mu \nu} \defeq h_{\mu \nu} - \frac{1}{2} \bar g_{\mu \nu} h$ (after eliminating $b_\nu$).


Local observables can be constructed as proposed in \cite{Brunetti:2013maa, Khavkine:2015fwa}, by what one might call \emph{covariant coordinates}. One chooses backgrounds $\bar g$ that are sufficiently generic\footnote{For similar constructions on non-generic backgrounds, we refer to \cite{Brunetti:2016hgw}, \cite{Frob:2017coq}.} to allow, in a neighborhood of $\bar g$, for $4$ curvature scalars to provide a coordinate system $X[g] : M \to U \subset \R^4$. By definition, these fulfil
\[
 X[\psi^* g] = X[g] \circ \psi
\]
for a diffeomorphism $\psi$. It follows that
\begin{align}
\label{eq:CovariantCoordinate}
 \psi^* \circ X^*[g] & = X^*[\psi^* g], &
 X[\psi^* g]_* & = X[g]_* \circ \psi_*.
\end{align}

Given a test tensor $t$ on $M$, and $T[g]$ a tensor covariantly constructed out of the metric, i.e.,  obtained by contractions of $g_{\mu \nu}$, $g^{\mu \nu}$, $\nabla_{(\lambda_1} \dots \nabla_{\lambda_r)} R_{\mu \nu \rho \sigma}$, one defines
\beq
\label{eq:GravityObservable}
 T_{\bar g}(t)(h) = \int_M t_{\mu_1 \dots \mu_k}^{\nu_1 \dots \nu_l} X[\bar g]^* \circ X[\bar g + h]_* ( \vol[\bar g + h] T[\bar g + h] )^{\mu_1 \dots \mu_k}_{\nu_1 \dots \nu_l}.
\eeq
From \eqref{eq:CovariantCoordinate}, it follows that the observable \eqref{eq:GravityObservable} transforms covariantly,
\[
 T_{\psi^* \bar g}(\psi^* t)(\psi^* h) = T_{\bar g}(t)(h),
\]
and is in the kernel of the BRST operator. We refer to \cite{Brunetti:2013maa, Khavkine:2015fwa} for the interpretation of these observables.

An adiabatic cut-off of the interaction terms, respecting covariance, can be implemented similarly. Let $L_\ia$ be the interaction Lagrangian density, obtained by Taylor expansion of the Lagrangian density in $(h, c, \bar c, b, h^\ddagger, c^\ddagger, \bar c^\ddagger)$ and keeping only the terms of order higher than two. Then a covariant cutoff can be implemented as 
\[
 {S_\ia} = \int_M \lambda X[\bar g]^* \circ X[\bar g + h]_* (L_\ia),
\]
with $\lambda$ a test function on the background, assumed to be equal to one in a neighborhood of the region $\cR$, \cf the set-up for the Yang-Mills case.

As for the case of pure Yang-Mills theory, there are no gauge anomalies and $H_1(s)$ is trivial \cite{Barnich:1994kj}, and there is also no obstruction to the fulfilment of perturbative agreement \cite{hollands2005conservation} for variations in the background metric. Also the conditions (i)--(iii) stated in Section~\ref{sec:YM_Summary} are met. It follows that it suffices to check the fulfillment of the analog of condition \eqref{eq:A1(delta-Psi)=0}, which is
\[
A^\ia_1(\bar{\delta}_{k} \Psi ) = 0, 
\]
where
\[
 \bar \delta_{k} \Psi  = \skal{\tfrac{\delta}{\delta \bar g_{\mu \nu }} \Psi}{k_{\mu \nu}}. 
\]
Due to power-counting non-renormalizability, the arguments invoked in Section~\ref{section:YM-no anomaly} to prove the fulfillment of  \eqref{eq:A1(delta-Psi)=0} can not be adapted to the present setting, \cf also Remark~\ref{remark:YM-higher-dim}. For example, even if $A_1^\ia(s \bar \delta_k \Psi) = 0$ holds, one can, using the covariant coordinates, still find non-trivial analogs of $\Theta$ in the proof of Lemma~\ref{lemma:s0Exact}, such as
\[
 A_1^\ia(\bar \delta_k \Psi) = \int_M k_{\mu \nu} \Theta^{\mu \nu}
\]
with
\[
 \Theta^{\mu \nu} =  X[\bar g]^* \circ X[\bar g + h]_* ( \vol[\bar g + h] T[\bar g + h] )^{\mu \nu}
\]
for any covariant symmetric tensor $T$. We leave open the question whether such obstructions to background independence occur in perturbative quantum gravity.

\begin{remark}
\label{rem:BFR}
In one respect, our setup severely deviates from the one employed in \cite{Brunetti:2013maa}. 
There, the gauge condition is that the four curvature scalars $X$ that are used as coordinates are harmonic. The corresponding Lagrange multipliers $b$ are then a collection of four scalars, and accordingly for the antighosts $\bar c$. It follows that the gauge fixed action is no longer covariant but explicitly depends on the choice of the coordinates $X$. It is in fact not even invariant under changing the coordinates to $Y = \psi \circ X$ using a diffeomorphism $\psi$ of $\R^4$, i.e.,  under relabelling the points in the chart. The advantage of this approach is that the gauge fixing fermion does not break the split independence. The downside is of course that in the end one has to show that covariance is still intact in the observable algebra. 
Furthermore, having given up covariance, renormalization schemes and thus also potential anomalies are much less constrained than in our approach.
\end{remark}


\subsection{Background independence as triviality of the relative Cauchy evolution} \label{section:RCE}

Finally, let us comment on a different criterion for background independence, which is used in \cite{Brunetti:2013maa} in the context of perturbative quantum gravity.
Based on ideas formulated in \cite{Brunetti:2006qj}, background independence is there defined as triviality of the \emph{interacting relative Cauchy evolution} $\beta$. We first discuss it in the example of the scalar field. One defines
\[
 \beta_{\bphi', \bphi} \defeq R_\bphi(-; \eti{S_\ia})^{-1} \circ \tau_{\bphi \bphi'}^\ret \circ R_{\bphi'}(- ; \eti{S_\ia}) \circ A_{\bphi'}(-; \eti{S_\ia})^{-1} \circ \left( \tau_{\bphi \bphi'}^\adv \right)^{-1} \circ A_{\bphi}(- ; \eti{S_\ia}).
\]
Here $\tau^\adv$ is the \emph{advanced M{\o}ller operator}, defined in complete analogy to the retarded one, \cf \eqref{eq:def_MollerMap}, and $A$ is the \emph{advanced product}\footnote{In this section, $A$ denotes the advanced product, not the anomaly.} defined as
\[
 A( \eti{F} ; \eti{G} ) \defeq T( \eti{F} \otimes \eti{G} ) \star T( \eti{G} )^{-1}.
\]
The inverses of retarded and advanced products appearing here are purely formal. However, the requirement that $\beta$ is trivial on-shell can be properly formulated as
\[
 T_{\bphi}( \eti{S_\ia} ) \star \left( \tau^\ret_{\bphi \bphi'} R_{\bphi'}( \eti{F} ; \eti{S_\ia}) \right) \eqos \left( \tau^\adv_{\bphi \bphi'} A_{\bphi'}( \eti{F}; \eti{S_\ia}) \right) \star T_\bphi( \eti{S_\ia}).
\]
The infinitesimal version of this is, using perturbative agreement,
\begin{align*}
 0 & \eqos T( \eti{S_\ia} ) \star \left( \delta^\ret_{\bvp} R( \eti{F}; \eti{S_\ia}) \right) - \left( \delta^\adv_{\bvp} A( \eti{F}; \eti{S_\ia}) \right) \star T( \eti{S_\ia}) \\
 & = - \ibar A(\bar \delta_{\bvp} S; \eti{S_\ia}) \star T( \eti{F} \otimes \eti{S_\ia}) + \ibar T( \eti{F} \otimes \eti{S_\ia}) \star R(\bar \delta_{\bvp} S; \eti{S_\ia}) \\
 & = - \ibar T( \eti{S_\ia}) \star [ R(\bar \delta_{\bvp} S; \eti{S_\ia}), R(\eti{F}; \eti{S_\ia}) ]_{\star}.
\end{align*}
Formally, i.e., putting aside cut-off issues, we have $\delta_{\bvp} S_0 = 0$, so that, with \eqref{eq:ShiftSymmetryScalar}, we may replace $\bar \delta_{\bvp} S$ by $\delta_{\bvp} S$ and conclude that the equation is indeed fulfilled, by the field equation, which follows from \eqref{eq:T10}.

In the case of Yang-Mills theory, the split independence of the action is broken by gauge fixing, \cf \eqref{eq:ShiftSymmetryYM_gf}, so that one then obtains, again ignoring cut-off issues, 
\begin{multline*}
 T( \eti{S_\ia}) \star \left( \delta^\ret_{\ba} R( \eti{F}; \eti{S_\ia}) \right) - \left( \delta^\adv_{\ba} A( \eti{F}; \eti{S_\ia}) \right) \star T(\eti{S_\ia}) \\
 \eqos - \ibar T(\eti{S_\ia}) \star [ R( s \bar \delta_{\ba} \Psi; \eti{S_\ia}), R( \eti{F} ; \eti{S_\ia})]_\star.
\end{multline*}
For $F$ fulfilling \eqref{eq:s-nonlin-F=0} and assuming \eqref{eq:A1(delta-Psi)=0} and $[Q^\ia_{\bar \cA}, T(\eti{S_\ia})]_\star \eqos 0$, the \rhs can be written as an element of $\Ran [Q^\ia_{\bar{\cA}}, -]_\star$, i.e., as a trivial element. Hence, assuming the absence of the anomaly \eqref{eq:A1(delta-Psi)=0}, one finds that the interacting relative Cauchy evolution is indeed trivial on the cohomology.

Two comments are in order:
\begin{itemize}
\item As discussed in Remark~\ref{rem:BFR}, in \cite{Brunetti:2013maa} the breaking of the split independence of the action is avoided by the use of a non-covariant gauge fixing. In particular, the relevance of the absence of the anomaly \eqref{eq:A1(delta-Psi)=0} was not noted there. The problems with such a non-covariant gauge fixing were discussed in Remark~\ref{rem:BFR}.
\item The significance of the criterion proposed in \cite{Brunetti:2013maa}, i.e., triviality of the interacting relative Cauchy evolution, seems unclear. Following the derivation above, one finds that, in the case of gravity, it is implied by the on-shell vanishing of the stress-energy tensor $\bar \delta_{\bar k} S$, or, equivalently, by the on-shell fulfillment of the equations of motion. However, it gives no information about how to relate observables defined on different backgrounds, i.e., does not answer our initial question, 
as evidenced by the fact that all derivatives of $F$ \wrt the background fields drop out in the above calculations. We therefore think that triviality of the interacting relative Cauchy evolution is not a sufficient criterion for background independence.
\end{itemize}

\section*{Acknowledgement}
We thank Klaus Fredenhagen, Markus Fr\"ob, Thomas-Paul Hack, Kasia Rejzner, Martin Reuter, Pedro Ribeiro, Gerd Rudolph, Micha{\l} Wrochna, and Stefan Hollands for helpful discussions and/or valuable hints to the literature (the latter also for several suggestions for improving the manuscript).
Some of these discussion took place at the workshop ``Foundational and structural aspects of gauge theories'' at the Mainz Institute for Theoretical Physics (MITP).
M.T.T.\ is grateful to the MITP for support and hospitality during this workshop.
This work is part of M.T.T.s PhD dissertation. He gratefully acknowledges financial
support by the Max Planck Institute for Mathematics in the Sciences and its International
Max Planck Research School (IMPRS). 
\appendix

\section{Lemmata on the anomaly} \label{app:lemmata}
\begin{lemma}
\label{lemma:anomaly-field-ind}
The anomaly $A(e_\otimes^F)$ is (anti-) field-independent, in the sense that 
\begin{align}
\label{eq:Anomaly-field-ind}
\tfrac{\delta}{\delta \Phi^i(x)} A( \et{F} ) = (-1)^\eps A (\tfrac{\delta }{\delta \Phi^i(x)} F \otimes \et{F} ),	
\end{align}
 where $\Phi^i = (A_\mu^I, B^I, C^I, \bar{C}^I)$ and $\eps$ is the Grassmann parity of $\Phi^i$, and analogously for $\Phi^\ddagger_i$.
\end{lemma}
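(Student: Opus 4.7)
The plan is to differentiate the anomalous Ward identity \eqref{AnWI} with respect to $\Phi^i(x)$ and use the field-independence axiom \eqref{eq:FieldIndependence} of time-ordered products to move the derivative inside $T$ on both sides. On the right-hand side, field independence yields two contributions: one where $\tfrac{\delta}{\delta \Phi^i(x)}$ hits the local-functional insertion $s_0 F + \tfrac{1}{2}(F,F) + A(\et{F})$, and one (with a Grassmann sign determined by the parity $\eps$ of $\Phi^i$) where it hits one of the factors $F$ in $\eti{F}$, producing a term proportional to $T(\{s_0 F + \tfrac{1}{2}(F,F) + A(\et{F})\} \otimes \tfrac{\delta F}{\delta \Phi^i(x)} \otimes \eti{F})$.

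On the left-hand side I would commute $\tfrac{\delta}{\delta \Phi^i(x)}$ past $s_0$ and then use field independence once more, obtaining $\ibar\, s_0 T(\tfrac{\delta F}{\delta \Phi^i(x)} \otimes \eti{F}) + [\tfrac{\delta}{\delta \Phi^i(x)}, s_0]\, T(\eti{F})$. The first summand is put into canonical form by applying \eqref{AnWI} once more with the extra insertion $\tfrac{\delta F}{\delta \Phi^i(x)}$ (this is just \eqref{AnWI} differentiated along $F\to F+\lambda G$). The double-$T$ contribution $\ibar^2 T(\{s_0 F + \tfrac{1}{2}(F,F) + A(\et{F})\} \otimes \tfrac{\delta F}{\delta \Phi^i(x)} \otimes \eti{F})$ produced this way cancels against the analogous contribution on the right. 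The commutator $[\tfrac{\delta}{\delta \Phi^i(x)}, s_0]$ acts on any local functional as a first-order differential operator whose coefficients $\tfrac{\delta(s_0 \Phi^j)}{\delta \Phi^i}$ are read off from Table~\ref{table1} and are manifestly field-independent; consequently, by field independence of $T$, one has $[\tfrac{\delta}{\delta \Phi^i(x)}, s_0]\, T(\eti{F}) = \ibar\, T([\tfrac{\delta}{\delta \Phi^i(x)}, s_0] F \otimes \eti{F})$, which matches exactly the difference between $\tfrac{\delta(s_0 F)}{\delta \Phi^i(x)}$ on the right and $s_0 \tfrac{\delta F}{\delta \Phi^i(x)}$ on the left. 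An analogous cancellation disposes of the anti-bracket term $\tfrac{1}{2}\tfrac{\delta(F,F)}{\delta \Phi^i(x)}$ versus $(F, \tfrac{\delta F}{\delta \Phi^i(x)})$, modulo signs.

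What survives is the identity
\begin{equation*}
 \ibar T\!\left( \big\{ A(\tfrac{\delta F}{\delta \Phi^i(x)} \otimes \et{F}) - (-1)^\eps \tfrac{\delta}{\delta \Phi^i(x)} A(\et{F}) \big\} \otimes \eti{F} \right) = 0.
\end{equation*}
The bracketed expression is a local functional, and the map $G \mapsto T(G \otimes \eti{F})$ is injective on formal power series in $F$, since it can be inverted order by order by Bogoliubov's formula. This forces the bracket to vanish, yielding \eqref{eq:Anomaly-field-ind}. The same argument applies verbatim to $\Phi^\ddag_i$ with the obvious change of parities.

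The main obstacle will be the meticulous bookkeeping of Grassmann signs arising from the parities of $\Phi^i$ and $F$ and the repeated reordering of tensor factors, which is what produces the overall factor $(-1)^\eps$. A secondary subtle point is to justify that $[\tfrac{\delta}{\delta \Phi^i(x)}, s_0]$ extends compatibly from local functionals to $\bW_{\bar \cA}$; this rests on the field-independence of the coefficients entering $s_0$ together with the field-independence axiom for $T$, so that the commutator on $T(\eti{F})$ is determined by its action on $F$.
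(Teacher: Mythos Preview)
Your proposal is correct and follows essentially the same route as the paper: differentiate the anomalous Ward identity \eqref{AnWI}, use field independence \eqref{eq:FieldIndependence} to push $\tfrac{\delta}{\delta \Phi^i}$ inside $T$ on both sides, and reduce to showing $[\tfrac{\delta}{\delta \Phi^i}, s_0]\, T(\eti{F}) = \ibar\, T([\tfrac{\delta}{\delta \Phi^i}, s_0] F \otimes \eti{F})$, which holds because $[\tfrac{\delta}{\delta \Phi^i}, s_0] = (\tfrac{\delta S_0}{\delta \Phi^i}, -)$ is a first-order derivation in $(\Phi,\Phi^\ddag)$ with field-independent (operator-valued) coefficients. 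The only cosmetic difference is that you make the final injectivity step $T(G \otimes \eti{F}) = 0 \Rightarrow G = 0$ explicit via Bogoliubov's formula, whereas the paper leaves it implicit.
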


\begin{proof}
From the anomalous Ward identity \eqref{AnWI} and field-independence \eqref{eq:FieldIndependence} of time-ordered products, we have
\begin{align*}
\tfrac{\delta}{\delta \Phi^i} s_0 T( \eti{F}) &= (-1)^\eps \ibar T( \{ s_0 F + \tfrac{1}{2} (F, F) + A( \et{F}) \} \otimes \ibar \tfrac{\delta}{\delta \Phi^i} F \otimes \eti{F})   \\
& \quad + \ibar  T( \{ \tfrac{\delta}{\delta \Phi^i} s_0 F +  (\tfrac{\delta}{\delta \Phi^i} F, F) + \tfrac{\delta}{\delta \Phi^i} A( \et{F} ) \} \otimes \eti{F}).
\end{align*}
On the other hand, we have
\begin{align*}
s_0 ( \tfrac{\delta}{\delta \Phi^i}  T( \eti{F} ) ) &=  s_0  T( \ibar \tfrac{\delta}{\delta \Phi^i} F \otimes \eti{F})  \\
& = (-1)^\eps \ibar T( \ibar \tfrac{\delta}{\delta \Phi^i} F \otimes \{ s_0 F + \tfrac{1}{2} (F, F) + A(\et{F}) \} \otimes \eti{F}) \\
& \quad + \ibar T( \{ s_0 \tfrac{\delta}{\delta \Phi^i} F + (F, \tfrac{\delta}{\delta \Phi^i} F ) + A( \tfrac{\delta}{\delta \Phi^i} F \otimes \et{F}) \} \otimes \eti{F} ).
\end{align*}
We thus obtain
\begin{align} \label{eq:anomaly-field-ind-proof}
 [\tfrac{\delta}{\delta \Phi^i}, s_0]  T(\eti{F}) =\ibar  T( [\tfrac{\delta}{\delta \Phi^i}, s_0]  F \otimes \eti{F}) + \ibar T( \{ \tfrac{\delta}{\delta \Phi^i} A(\et{F}) - (-1)^\eps A(\tfrac{\delta}{\delta \Phi^i} F \otimes \et{F}) \}  \otimes \eti{F} ),
\end{align}
where
\[
 [\tfrac{\delta}{\delta \Phi^i}, s_0] \defeq \tfrac{\delta}{\delta \Phi^i} \circ s_0 - (-1)^\eps s_0 \circ \tfrac{\delta}{\delta \Phi^i}.
\]
It thus remains to show that
\begin{align} \label{eq:anomaly-field-ind-proof-2}
 [\tfrac{\delta}{\delta \Phi^i}, s_0]  T(\eti{F}) =\ibar  T( [\tfrac{\delta}{\delta \Phi^i}, s_0]  F \otimes \eti{F}),
\end{align}
which together with \eqref{eq:anomaly-field-ind-proof} implies the claim. To prove this, we note that $\frac{\delta S_0}{\delta \Phi^i}$ is a linear expression in fields and anti-fields, hence, can be written $\frac{\delta S_0}{\delta \Phi^i}= a_{i j} \Phi^j + {b_i}^j \Phi^{\ddag}_j$ for some (differential operator valued) coefficients $a_{i j}, {b^i}_j$. Thus,
\beq
\label{eq:del_s0_commutator}
 [\tfrac{\delta}{\delta \Phi^i}, s_0]  = (\tfrac{\delta}{\delta \Phi^i} S_0, -)= a_{i j} \tfrac{\delta }{\delta \Phi^\ddag_j} - {b_i}^j \tfrac{\delta }{\delta \Phi^j}.
\eeq
Therefore, \eqref{eq:anomaly-field-ind-proof-2} follows from (anti-) field independence of time-ordered products. For anti-field independence, the proof proceeds analogously.
\end{proof}


\begin{lemma}\label{lemma:anomaly-linear}
  Let $\Phi^i = (A_\mu^I, B^I, C^I, \bar{C}^I)$. Then $A(\Phi^i(x) \otimes \et{F})=0$, for all $F$, and analogously for antifields.
\end{lemma}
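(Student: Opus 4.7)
The plan is to compare two expressions for $s_0 T_{\bar{\cA}}(\Phi^i(x) \otimes \eti{F})$ and show that all terms match except for one containing $A(\Phi^i(x) \otimes \et{F})$, thereby forcing this anomaly term to vanish.

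First, treating $\Phi^i(x)$ as an extra insertion in the anomalous Ward identity \eqref{AnWI} --- that is, replacing $F$ by $F + \lambda \Phi^i(x)$ and differentiating at $\lambda = 0$ --- one obtains
\begin{multline*}
 s_0 T_{\bar{\cA}}(\Phi^i(x) \otimes \eti{F}) = T_{\bar{\cA}}(\{s_0 \Phi^i(x) + (\Phi^i(x), F) + A(\Phi^i(x) \otimes \et{F})\} \otimes \eti{F}) \\
 + \ibar T_{\bar{\cA}}(\{s_0 F + \tfrac{1}{2}(F, F) + A(\et{F})\} \otimes \Phi^i(x) \otimes \eti{F}).
\end{multline*}
Second, I would use the generating-functional form of the single field factor axiom \eqref{eq:T10} to express
\[
 T_{\bar{\cA}}(\Phi^i(x) \otimes \eti{F}) = \Phi^i(x) \star T_{\bar{\cA}}(\eti{F}) - \int \Delta^\adv_{ij}(x,y) T_{\bar{\cA}}(\tfrac{\delta F}{\delta \Phi^j(y)} \otimes \eti{F}) \vol(y),
\]
and apply $s_0$ to the right-hand side, exploiting that $s_0$ is a graded derivation of $\star$ on functionals supported in $\cU$ and invoking \eqref{AnWI} on the simpler time-ordered products that appear. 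Re-applying single field factor absorbs the resulting $\Phi^i(x) \star (\cdots)$ terms back into time-ordered products having $\Phi^i(x)$ as an argument, and field independence of the anomaly (Lemma~\ref{lemma:anomaly-field-ind}) identifies the contributions generated by $\tfrac{\delta F}{\delta \Phi^j(y)}$.

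A term-by-term comparison with the first expression should then show that every contribution cancels except the one containing $A(\Phi^i(x) \otimes \et{F})$, leaving the identity $T_{\bar{\cA}}(A(\Phi^i(x) \otimes \et{F}) \otimes \eti{F}) = 0$, valid for all $F$ (supported in $\cU$). Expanding both sides as formal power series in $F$ and using that $T_1$ (Wick ordering) is a bijection on $\bW^\loc_{\bar{\cA}}$, one inductively extracts $A_n(\Phi^i(x) \otimes F_1 \otimes \dots \otimes F_{n-1}) = 0$ for every $n$, which is the claim. The antifield case follows by the same strategy with $\Phi^\ddag_i(x)$ in place of $\Phi^i(x)$; here the argument is simpler since antifields are treated as external classical sources and hence factor through time-ordered products without Wick contractions.

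The main obstacle is the bookkeeping in the second step: the derivation property of $s_0$ on $\star$, the anomalous Ward identity for the simpler products, and the re-application of single field factor each produce a sum of contributions that must be matched term-by-term against the first expression, with Grassmann signs tracked consistently across the bosonic ($A^I_\mu$, $B^I$) and fermionic ($C^I$, $\bar{C}^I$) sectors of $\Phi^i$.
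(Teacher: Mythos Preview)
Your overall strategy---comparing two evaluations of $s_0 T_{\bar \cA}(\Phi^i(x)\otimes\eti{F})$, one via the anomalous Ward identity with an extra insertion and one via the single field factor axiom followed by $s_0$---is exactly the route taken in the paper. The gap is in your assertion that ``every contribution cancels except the one containing $A(\Phi^i(x)\otimes\et{F})$''. This is not an automatic consequence of field independence and re-application of the single field axiom; after the comparison the paper arrives at
\[
 T\bigl(A(\Phi^i\otimes\et{F})\otimes\eti{F}\bigr)
 = s_0\Phi^i\star T(\eti{F}) - T\bigl(\{s_0\Phi^i+(F,\Phi^i)\}\otimes\eti{F}\bigr)
 + (-1)^\eps T\bigl((\Delta^{ij}_\adv\tfrac{\delta}{\delta\Phi^j}S_0,F)\otimes\eti{F}\bigr),
\]
and the right-hand side still has to be shown to vanish. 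This requires two operator identities that you do not mention: first, $\bar P_{ik}\Delta^{kj}_\adv=\delta^j_i\,\id$ to evaluate $(\Delta^{ij}_\adv\tfrac{\delta}{\delta\Phi^j}S_0,F)$; second, and crucially, the compatibility relation
\[
 K^i_{\ j}\Delta^{jk}_\adv + (-1)^\eps \Delta^{ik}_\adv\hat K_k^{\ j}=0,
\]
which encodes $s_0^2\Phi^\ddag_i=0$ at the level of propagators. Only with this identity does the combination $s_0\Phi^i\star T(\eti{F})-T(s_0\Phi^i\otimes\eti{F})$ (evaluated via single field factor applied to the linear functional $s_0\Phi^i=K^i_{\ j}\Phi^j$) match the leftover $\hat K$-term, closing the argument. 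Your proposal reduces the problem to ``bookkeeping'', but without isolating this identity one cannot see why the residual terms cancel; the nilpotency of $s_0$ is the essential structural input, not just sign-tracking.
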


\begin{proof}
To prove the claim, we use the single field axiom, i.e., \eqref{eq:T10}, which in the present situation reads
\beq
\label{eq:LinearField_YM}
T(\Phi^i(x) \otimes \eti{F}) = \Phi^i(x) \star T(\eti{F}) - \int \Delta^{ij}_\adv(x, y) T(  \tfrac{\delta}{\delta \Phi^j(y)} F \otimes \eti{F}),
\eeq
where $\Delta^{ij}_\adv(x, y)$ is the advanced propagator of the differential operator $\bar P_{i j}$ defined in \eqref{eq:Def_P_ij}. We also recall the definitions \eqref{eq:Def_K_ij}, \eqref{eq:Def_hat_K_ij} of the (differential operator valued) matrices $K^i_{\ j}$ and $\hat K_i^{\ j}$. From \eqref{eq:s0_Phi_ddag} and $s_0^2 \Phi^\ddag_i = 0$, it follows that
\[
 \bar P_{i j} K^j_{\ k} + (-1)^\eps \hat K_i^{\ j} \bar P_{j k} = 0,
\]
with $\eps$ the Grassmann parity of $\Phi^i$.
This implies, \cf \cite{Rejzner:2013ak},
\beq
\label{eq:s0_Delta_ret}
 K^i_{\ j} \Delta^{j k}_{\ret/\adv} + (-1)^\eps \Delta^{i k}_{\ret/\adv} \hat K_k^{\ j} = 0.
\eeq

From \eqref{eq:LinearField_YM} it follows that for the linear fields $ s_0 \Phi^i$, we have
\beq 
 \label{PPA-s0-phi}
T(s_0 \Phi^i \otimes e_\otimes^{\ibar F}) =   T( K^i_{\ j} \Phi^j \otimes e_\otimes^{\ibar F}) = s_0 \Phi^i \star T(e_\otimes^{\ibar F}) - T( K^i_{\ j} \Delta^{jk}_\adv \tfrac{\delta}{\delta \Phi^k} F \otimes e_\otimes^{\ibar F} ),
\eeq
where for simplicity we omitted the variable $x$ and 
$\Delta^{jk}_\adv \tfrac{\delta}{\delta \Phi^k}$ should be read as $\int \Delta^{jk}_\adv(x,y) \tfrac{\delta}{\delta \Phi^k(y)}$.
To prove the claim, we first apply $s_0$ on the left hand side of \eqref{eq:LinearField_YM} and find
\begin{align} \nonumber
&s_0 T(\Phi^i \otimes \eti{F} )\\ \nonumber
& = T( \{ s_0 \Phi^i + (F, \Phi^i ) + A(\Phi^i \otimes \et{F}) \} \otimes \eti{F} )  + (-1)^\eps \ibar  T ( \Phi^i \otimes \{ s_0 F + \tfrac{1}{2}(F, F) + A(\et{F}) \} \otimes \eti{F} )\\ \nonumber
& = T( \{ s_0 \Phi^i + (F, \Phi^i ) + A(\Phi^i \otimes \et{F} ) \} \otimes \eti{F}) +   (-1)^\eps  \ibar  \Phi^i \star T( \{ s_0 F + \tfrac{1}{2}(F, F) + A(\et{F}) \} \otimes \eti{F})\\ \nonumber
& \quad - (-1)^\eps T ( \Delta^{ij}_\adv \tfrac{\delta }{\delta \Phi^j} \{ s_0 F + \tfrac{1}{2}(F, F) + A(\et{F}) \} \otimes  \eti{F} ) \\ \nonumber
& \quad - \ibar T( \{ s_0 F + \tfrac{1}{2}(F, F) + A(\et{F}) \} \otimes \Delta^{ij}_\adv \tfrac{\delta}{\delta \Phi^j} F \otimes \eti{F} ) \\ \nonumber
& = T( \{ s_0 \Phi^i + (F, \Phi^i ) + A( \Phi^i \otimes \et{F} ) \} \otimes \eti{F} ) +  (-1)^\eps \Phi^i \star s_0 T( \eti{F})\\ \nonumber
& \quad - (-1)^\eps  T ( \{ \Delta^{ij}_\adv \tfrac{\delta}{\delta \Phi^j} s_0 F + ( \Delta^{ij}_\adv \tfrac{\delta}{\delta \Phi^j} F, F) + (-1)^\eps A( \Delta^{ij}_\adv \tfrac{\delta}{\delta \Phi^j} F \otimes \et{F}) \}  \otimes \eti{F} ) \\ \label{s0-T(phi)-LHS}
& \quad - \ibar  T(  \{ s_0 F + \tfrac{1}{2}(F, F) + A(\et{F}) \} \otimes  \Delta^{ij}_\adv \tfrac{\delta}{\delta \Phi^j} F \otimes  \eti{F} ),
\end{align}
with $\eps$ the Grassmann parity of $\Phi^i$.
In the first step we have used the anomalous Ward identity \eqref{AnWI} and the sign factor appears by commuting $\Phi^i$ and $s_0 F + \frac{1}{2}(F, F) + A(\et{F})$, which is fermionic. In the second step, we used \eqref{eq:LinearField_YM} to pull $\Phi^i$ out of the time-ordered product, and in the last step we have again used \eqref{AnWI}
and the field independence of $ A(\et{F})$, i.e., \eqref{eq:Anomaly-field-ind}. 
Now applying $s_0$ on the \rhs of \eqref{eq:LinearField_YM} we find
\begin{align} \nonumber
 & s_0 \left[ \Phi^i \star T(\eti{F}) - T( \Delta^{ij}_\adv \tfrac{\delta}{\delta \Phi^j} F \otimes \eti{F}) \right] \\ \nn
 & = s_0 \Phi^i \star T(\eti{F}) + (-1)^\eps \Phi^i \star s_0 T( \eti{F}) \\ \nonumber
& \quad -  T( \{ s_0 \Delta^{ij}_\adv \tfrac{\delta}{\delta \Phi^j} F + (F, \Delta^{ij}_\adv \tfrac{\delta}{\delta \Phi^j} F ) + A(\Delta^{ij}_\adv \tfrac{\delta}{\delta \Phi^j} F \otimes \et{F} ) \} \otimes \eti{F} ) \\ \label{s0-T(phi)-RHS}
& \quad -  (-1)^\eps \ibar T ( \Delta^{ij}_\adv \tfrac{\delta}{\delta \Phi^j} F \otimes \{ s_0 F + \tfrac{1}{2}(F, F) + A(\et{F}) \} \otimes  \eti{F} ).
\end{align}
Equating \eqref{s0-T(phi)-LHS} and \eqref{s0-T(phi)-RHS} we arrive at
\begin{align} \nonumber
 T(A(\Phi^i \otimes \et{F} ) \otimes \eti{F}) & = s_0 \Phi^i \star T(\eti{F}) - T ( \{ s_0 \Phi^i + (F, \Phi^i ) \}  \otimes \eti{F} ) \\ \label{hat{A}(phi)}
 & \quad   +  (-1)^\eps  T ( ( \Delta^{ij}_\adv \tfrac{\delta}{\delta \Phi^j} S_0, F )  \otimes \eti{F} ),
\end{align}
where we have used \eqref{eq:del_s0_commutator}.
Using \eqref{retarded-prop} and noting that $S_0 = {S_0}|_{\Phi^\ddag=0} - \int \Phi^j \hat K_j^{\ k} \Phi^\ddag_k$, we find
\[
  ( \Delta^{ij}_\adv \tfrac{\delta}{\delta \Phi^j} S_0, F) = ( \Delta^{ij}_\adv \bar P_{j k} \Phi^k \vol - \Delta^{ij}_\adv \hat K_j^{\ k} \Phi^\ddag_k, F) = (\Phi^i, F) + \Delta^{ij}_\adv \hat K_j^{\ k} \tfrac{\delta}{\delta \Phi^k} F
\]
Inserting this back into \eqref{hat{A}(phi)} and using \eqref{eq:s0_Delta_ret}, we obtain
\[
 T(A( \Phi^i \otimes \et{F}  )  \otimes \eti{F}) =   s_0 \Phi^i \star T(\eti{F})  - T(s_0 \Phi^i \otimes \eti{F}) - T ( K^i_{\ j} \Delta^{j k}_\adv \tfrac{\delta}{\delta \Phi^k} F  \otimes \eti{F} ) ,
\]
which vanishes by \eqref{PPA-s0-phi}. This proves the claim.

For an anti-field $\Phi^\ddag_i$, the second term on the \rhs of \eqref{eq:LinearField_YM} is absent, and so are the last two terms on the \rhs of \eqref{s0-T(phi)-LHS} and \eqref{s0-T(phi)-RHS}. The claim then follows from \eqref{eq:s0_Phi_ddag}, which entails
\begin{align*}
 T( s_0 \Phi^\ddag_i \otimes \eti{F}) & = s_0 \Phi^\ddag_i \star T(\eti{F}) - (-1)^\eps T( \vol \bar P_{i j} \Delta_\adv^{j k} \tfrac{\delta}{\delta \Phi^k} F \otimes \eti{F}) \\
 & = s_0 \Phi^\ddag_i \star T(\eti{F}) + (-1)^\eps T((\Phi^\ddag_i, F) \otimes \eti{F}),
\end{align*}
with $\eps$ the Grassmann parity of $\Phi^i$.
\end{proof}

 
\printnoidxglossary[sort=def]


\end{document}